\titleformat*{\section}{\large\bfseries}
\titleformat*{\subsection}{\it}
\titleformat*{\subsubsection}{\it}
\newtheorem{df}{Definition}
\newtheorem{thm}{Theorem}
\newtheorem{lem}{Lemma}
\newcommand{\overbar}[1]{\mkern 1.5mu\overline{\mkern-1.5mu#1\mkern-1.5mu}\mkern 1.5mu}
\def\Kc{{\cal K}}
\def\pd{\partial}
\def\ga{{\gamma}}
\def\de{{\delta}}
\def\ep{{\varepsilon}}
\def\la{{\lambda}}
\def\si{{\sigma}}
\def\om{{\omega}}
\def\ka{{\kappa}}
\def\ta{{\tau}}
\def\gat{{\tilde{\gamma}}}
\def\ybt{{\tilde \y}}
\def\bbe{{\text{\boldmath $\beta$}}}
\def\bep{{\text{\boldmath $\varepsilon$}}}
\def\bsi{{\text{\boldmath $\sigma$}}}
\def\bth{{\text{\boldmath $\theta$}}}
\def\bxi{{\text{\boldmath $\xi$}}}
\def\bpsi{{\text{\boldmath $\psi$}}}
\def\bmu{{\text{\boldmath $\mu$}}}
\def\bze{{\text{\boldmath $\zeta$}}}
\def\Si{{\Sigma}}
\def\Om{{\Omega}}
\def\La{{\Lambda}}
\def\bSi{{\text{\boldmath $\Si$}}}
\def\bOm{{\text{\boldmath $\Om$}}}
\def\bLa{{\text{\boldmath $\La$}}}
\def\bPsi{{\text{\boldmath $\Psi$}}}
\def\a{{\text{\boldmath $a$}}}
\def\b{{\text{\boldmath $b$}}}
\def\c{{\text{\boldmath $c$}}}
\def\d{{\text{\boldmath $d$}}}
\def\m{{\text{\boldmath $m$}}}
\def\r{{\text{\boldmath $r$}}}
\def\s{{\text{\boldmath $s$}}}
\def\t{{\text{\boldmath $t$}}}
\def\u{{\text{\boldmath $u$}}}
\def\x{{\text{\boldmath $x$}}}
\def\y{{\text{\boldmath $y$}}}
\def\z{{\text{\boldmath $z$}}}
\def\A{{\text{\boldmath $A$}}}
\def\B{{\text{\boldmath $B$}}}
\def\C{{\text{\boldmath $C$}}}
\def\E{{\text{\boldmath $E$}}}
\def\H{{\text{\boldmath $H$}}}
\def\I{{\text{\boldmath $I$}}}
\def\O{{\text{\boldmath $O$}}}
\def\Q{{\text{\boldmath $Q$}}}
\def\R{{\text{\boldmath $R$}}}
\def\S{{\text{\boldmath $S$}}}
\def\T{{\text{\boldmath $T$}}}
\def\V{{\text{\boldmath $V$}}}
\def\X{{\text{\boldmath $X$}}}
\def\Z{{\text{\boldmath $Z$}}}
\def\Ic{{\cal I}}
\def\Lc{{\cal L}}
\def\Zc{{\cal Z}}
\def\non{{\nonumber}}
\DeclareMathOperator{\tr}{tr}
\DeclareMathOperator{\sgn}{sgn}
\def\diag{{\rm diag\,}}
\def\zero{{\text{\boldmath $0$}}}
\title{{\bf Outlier-Robust Bayesian Multivariate Analysis with Correlation-Intact Sandwich Mixture}\footnote{\today}} 
\date{}
\begin{document}

\maketitle
\doublespacing

\vspace{-1.5cm}
\begin{center}
Yasuyuki Hamura$^1$, Kaoru Irie$^2$ and Shonosuke Sugasawa$^3$

\bigskip
\noindent
$^1$Graduate School of Economics, Kyoto University\\
$^2$Faculty of Economics, The University of Tokyo\\
$^3$Faculty of Economics, Keio University
\end{center}

\vspace{3mm}
\begin{center}
{\bf \large Abstract}
\end{center}

Handling outliers is a fundamental challenge in multivariate data analysis because outliers may distort the structures of correlation or conditional independence. Although robust Bayesian inference has been extensively studied in univariate settings, theoretical results ensuring posterior robustness in multivariate models are scarce. We propose a novel scale mixture of multivariate normals called correlation-intact sandwich mixtures, in which the scale parameters are real values and follow an unfolded log-Pareto distribution. 
Our theoretical results on posterior robustness in multivariate settings emphasize that the use of a symmetric, super heavy-tailed distribution for scale parameters is essential for achieving posterior robustness against element-wise contamination. The posterior inference for the proposed model is feasible using the developed efficient Gibbs sampling algorithm. The superiority of the proposed method was further illustrated further in simulation and empirical studies using graphical models and multivariate regression in the presence of complex outlier structures. 

\bigskip\noindent
{\bf Key words}: Graphical model; Multivariate regression; Outlier-robust posterior inference; Super heavy-tailed distribution.

\newpage
\section{Introduction}

The handling of outliers is a crucial challenge in multivariate data analysis, where anomalies can arise in complex patterns across multiple dimensions.
Unlike in a univariate setting, outliers in multivariate data are not necessarily extreme in all coordinates; they may occur only in a subset of variables (element-wise contamination) or in specific combinations of variables, making them more difficult to detect and mitigate.
Such contamination can severely distort the estimation of the covariance or precision matrices, potentially creating spurious correlations or masking genuine dependencies.
Therefore, robust methods that explicitly account for the interplay between outliers and the underlying correlation structure are essential to ensure reliable multivariate statistical analysis.

Existing approaches to robust Bayesian inference are diverse.
One prominent line of work is based on general-purpose methodologies for model misspecification, such as the c-posterior \citep{miller2019robust}, general Bayes methods employing robust divergence \citep[e.g.][]{jewson2018principles,yonekura2023adaptation}, and Stein discrepancy-based approaches \citep{matsubara2022robust}, which provide principled ways to temper the influence of outliers on the posterior distribution.
Another stream of work leverages heavy-tailed distributions for both Gaussian outcomes \citep{gagnon2020new,hamura2022log}, and non-Gaussian outcomes \citep{hamura2024robust,gagnon2024robust}, allowing extreme observations to be accommodated within the assumed data-generating process.
While these methods and their robustness properties have been widely investigated in the univariate setting, far fewer studies have addressed multivariate cases from either a theoretical or practical perspective. 
Although there have been a few attempts at robust covariance estimation in the context of graphical modeling \citep{finegold2011robust,finegold2014robust,onizuka2023robust}, these studies largely focused on methodological development and empirical performance without establishing a general theoretical framework for posterior robustness. 
Unlike the univariate case, no valid theoretical results guarantee the robustness of posteriors for multivariate data, leaving a significant gap between the methodology and theory. 
In addition, a mere extension of the univariate method, especially the divergence-based method, does not adequately address the unique challenge in the multivariate setting; outliers may be localized to specific coordinates, yet still distort dependence structures, such as covariances or conditional independencies. 
\cite{katayama2018robust} and \cite{raymaekers2024cellwise} studied the estimation of the covariance matrix under cell-wise contamination from a non-Bayesian viewpoint; however, the full posterior inference allowing for a variety of prior choices is still limited.

To address the problem of multivariate analysis with potential outliers, we propose a new robust multivariate model, theoretically prove its posterior robustness, and provide a custom Markov chain Monte Carlo (MCMC) algorithm with illustrations using simulations and real data analyses. Our model is based on a classical contamination model, where the distribution of error terms is the mixture of two components for moderate observations and outliers. The component for outliers utilizes the log-Pareto distribution, which is super heavy-tailed and known to ensure posterior robustness in the univariate case (e.g. \citealt{gagnon2020new} and \citealt{hamura2022log}). 
We note that the Student's $t$-distribution, which has been well studied \citep{west1984outlier} and widely practiced in the context of robust statistics \citep{tak2019robust,da2020bayesian}, does not guarantee posterior robustness, as in the univariate case \citep{hamura2022log,hamura2024posterior,gagnon2023theoretical}. 
To handle element-wise contamination, we introduce latent variables into the covariance matrix based on variance-correlation decomposition, resulting in the product of the original covariance and the two diagonal matrices of latent variables in the sandwich form. Because this form can mitigate the effect of outliers on correlations, we name this model \textit{correlation-intact sandwich mixture (CSM)} of multivariate normals. 

To justify our proposal, we mathematically prove the posterior robustness of the CSM by identifying a set of sufficient conditions for robustness. In addition, we examine each condition assumed in the CSM model to show that the lack of these conditions leads to a loss of posterior robustness. A noteworthy finding is that the latent variable, which is assumed to be positive in the literature, must be real-valued to achieve robustness. Our proposal is further supported by discussions on other possible models and their properties. Finally, the MCMC algorithm is derived for the CSM model; the full conditionals of the latent variables become the multivariate truncated normal distributions and need to be sampled by the Hamiltonian Monte Carlo method, while most of the parameters and variables are conditionally conjugate and easy to be sampled.

The remainder of this paper is organized as follows.  
The model settings and details of the CSM model are presented in Section~\ref{sec:method}. 
Our main theorems of posterior robustness under multivariate models as well as a study of other possible models are provided in Section~\ref{sec:theory}. 
The MCMC algorithm for posterior inference under the CSM is explained in Section~\ref{sec:mcmc}. 
Finally, simulation and empirical studies using graphical models and multivariate linear regression models are conducted in Section~\ref{sec:num}. 
Many technical details, including the proofs of the theorems, are provided in the Supplementary Materials. 
The R code for implementing the proposed method is available in the GitHub repository (\url{https://github.com/sshonosuke/CSM}).

\section{Multivariate Models and Correlation-intact Sandwich Mixture}\label{sec:method}

\subsection{Multivariate linear models}

Let $\y_i=(y_{i1},\ldots,y_{ip})^{\top}$ be a $p$-dimensional vector for $i=1,\ldots,n$, where $n$ is the sample size and $p$ is the dimension of the observed vector. 
We consider a multivariate linear model, $\y_i \sim \mathrm{N}_p (\X _i \bbe , \bSi )$, where $\X_i$ is a $p\times q$ design matrix, $\bbe$ is a $q$-dimensional vector of regression coefficients and $\bSi$ is a $p\times p$ covariance matrix. 
Throughout this paper, we assume that each row of $\X_i$ is not zero, which can easily be guaranteed if, for example, an intercept is included. 
This class of models is simple, yet covers a wide range of multivariate models used in modern statistical applications.
Examples include the Gaussian graphical model with $\X _i \bbe=\zero$ (the all-zero vector), and reduced rank regression with $\bbe={\rm vec}(\B)$, where an appropriate rank condition is assumed for matrix $\B$. 
Our primary interest is inference on coefficient $\bbe$ and covariance matrix $\bSi$ in a Bayesian framework, where we introduce a joint prior distribution for $(\bbe ,\bSi )$.

A main drawback of the standard multivariate model is that the model is sensitive to outliers. 
For example, when at least one element of the residual $\y_i-\X_i\bbe$ is large under given $\bbe$, such information would have significant effects on the inference on $\bSi$, resulting in biased estimation. 
To accommodate such outliers in the sampling model, we introduce latent variables $\t _i = (t_{i1},\ldots,t_{ip})$ as follows: 
\begin{equation} \label{eq:lm}
\y_i | \t_i \sim  \mathrm{N}_p (\X _i \bbe , \V (\bSi, \t_i)),
\end{equation}
where the covariance matrix is now $\V (\bSi, \t_i)$, a function of base covariance matrix $\bSi$ and latent variables $\t_i$. 
There are several options in modeling $\V (\bSi, \t_i)$ (see Section~\ref{sec:others}), among which we aim to identify a subclass of posterior-robust models.

\subsection{Correlation-intact sandwich covariance mixture}
We specify the form of the covariance matrix in (\ref{eq:lm}) as follows:
\begin{equation} \label{eq:model_main}
    \V(\bSi ,\t_i ) = \T _i \bSi \T _i, \qquad \T _i= \mathrm{diag}(\t_i).
\end{equation}
Throughout this study, we call the model in (\ref{eq:lm}) using the covariance in (\ref{eq:model_main}) {\it Correlation-intact Sandwich Mixture of multivariate normals} (CSM). 
In this specification, a large $|t_{i,k}|$ inflates the marginal variance of the $k$-th variable of the $i$-th observation, allowing outlying $y_{i,k}$, while the values of $y_{i,k'}$ for $k'\not=k$ could be still moderate. 
This construction is based on the variance-correlation decomposition of $\bSi$ and is not novel by itself \citep[e.g.][]{ameen1984discount}, but has not been extensively studied due to the complexity of posterior computation. Exceptions include \cite{finegold2011robust}, \cite{choy2014bivariate}, \cite{griffin2024structured} and \cite{okudo2025shrinkage}, where $t_{i,k}^2$ has the polynomial density tail (e.g., the inverse-gamma distribution), with $t_{i,k} > 0$.  
By contrast, the CSM is unique in having distribution of $t_{i,k}$ to be super heavy-tailed and, most importantly, allowing $t_{i,k}$ to be negative. 
Specifically, the latent variables, or $t_{i,k}$ for $i=1,\dots ,n$ and $k=1,\dots ,p$, are mutually independent and identically distributed, whose marginal density is given by $\pi(\cdot)$ with the support,
\begin{equation*}
    t_{i,k} \in (-\infty , -1] \cup [1, \infty ) . 
\end{equation*}
The origin (zero) is excluded from the support of $t_{i,k}$ to avoid a degenerate distribution in (\ref{eq:lm}). 
Subsequently, the marginal variance of $y_{i,k}$ is $t_{i,k}^2 \Si _{k,k}$, where the $(k,k')$-entry of matrix $\bSi$ is denoted by $\Si_{k,k'}$, so the sign of $t_{i,k}$ is not essential in increasing each variance. 
However, the correlation between $y_{i,k}$ and $y_{i,k'}$ is $\sgn (t_{i,k} t_{i,k'}) \Si _{k,k'} / \sqrt{ \Si_{k,k} \Si_{k',k'} } $; its absolute value is independent of the latent variables, but its sign could change, depending on the signs of $t_{i,k}$ and $t_{i,k'}$.  
For example, even if $y_{i,k}$ and $y_{i,k'}$ are positively correlated in the sense that $\Si _{k,k'} > 0$, it is still possible to have positively-outlying $y_{i,k}$ and negatively-outlying or non-outlying $y_{i,k'}$. Figure~\ref{fig:onesidet} illustrates this property by the contour plots of the joint density of $\y_i$, as well as scatter plots of random samples of $\y_i$, with real-valued and positive $\t_i$. 
Using this model specification, we can distinguish the correlation from the simultaneous (non-)occurrence of outliers, which is crucial for achieving posterior robustness in a multivariate setting, as discussed in Section~\ref{sec:theory}.

\begin{figure}[htbp!]
\centering
\includegraphics[width=\linewidth]{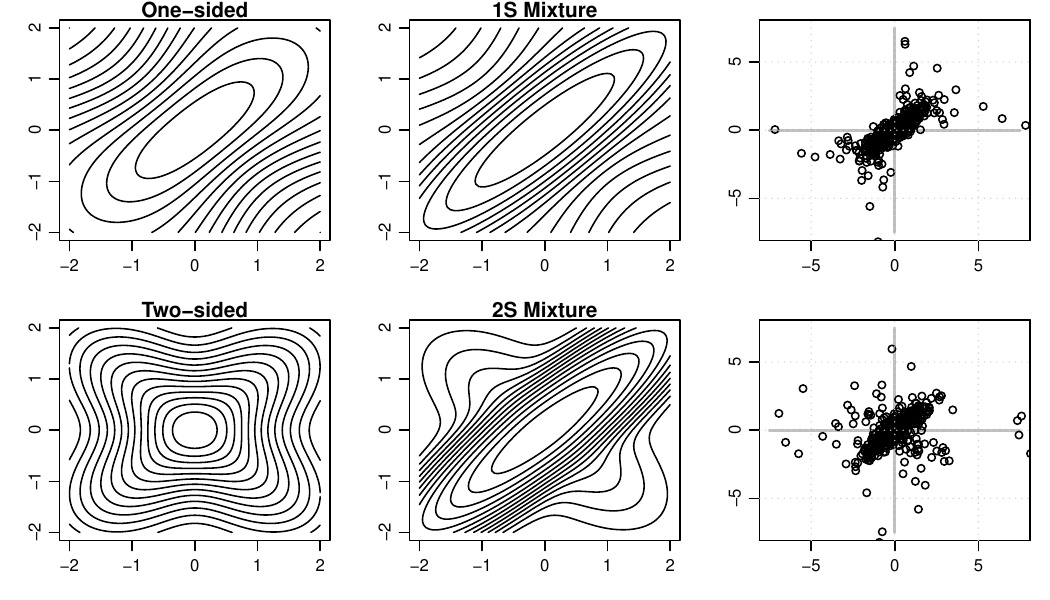}
\caption{Joint densities of $\y_i$ defined by (\ref{eq:lm}) and (\ref{eq:model_main}) in the bivariate case ($p=2$), where $t_{i,k}$ follows the one-sided (top) and double-sided (bottom) distributions, which are either the log-Pareto distribution in (\ref{eq:lpdens}) (left) or the contamination model in (\ref{eq:two}) (middle). The scatter plots of 1000 random variables generated from the contamination model are also shown (right). }
\label{fig:onesidet}
\end{figure}

\subsection{Mixing distribution of latent variable $\t_i$} \label{sec:mixing}

For the super heavy-tailed mixing distribution of $t_{i,k}$, we consider the class of log-Pareto-tailed densities. 
Note that the marginal distribution of $\y_i$ inherits the super heavy tails from the log-Pareto tails of $\t_i$ (see, for example, Proposition~1 in \citealt{hamura2022log}). 
Specifically, we consider the {\it unfolded (and shifted) log-Pareto} density given by 
\begin{equation}\label{eq:lpdens}
\pi_{\rm LP}(t; \ga) = \frac{\gamma}{2 |t| (1 + \log |t|)^{1 + \ga }} \mathbbm{1}[|t| > 1] , 
\end{equation}
where $\mathbbm{1}[\cdot]$ denotes an indicator function and $\gamma > 0$ is a pre-specified hyperparameter. 
As the name suggests, this density is an unfolded and shifted version of the one-sided log-Pareto density \citep[e.g.][]{cormann2009generalizing}.
A notable feature of density (\ref{eq:lpdens}) is that it has a super heavy tail (heavier than the Cauchy distribution) and is symmetric, both of which play essential roles in proving posterior robustness in Section~\ref{sec:theory}. 
Figure~\ref{fig:tdens} shows the density and distribution functions of the unfolded log-Pareto and inverse-$\chi_1$ distributions. As seen in the graphs, these densities exclude the neighborhood of zero and resembles the non-local prior for hypothesis testing \citep{johnson2010use}. The distribution function of the log-Pareto distribution is increasing but very slowly, illustrating its unique tail property. 
The value of hyperparameter $\gamma$ affects the tail behavior only through a logarithmic factor, hence does not greatly contribute to the posterior results. 
In our implementation, we specify a certain value of $\gamma$ ($\gamma=1$) and we use a simplified notation, $\pi_{\rm LP}(t)\equiv \pi_{\rm LP}(t; \gamma)$, in what follows.

\begin{figure}[htbp!]
\centering
\includegraphics[width=\linewidth]{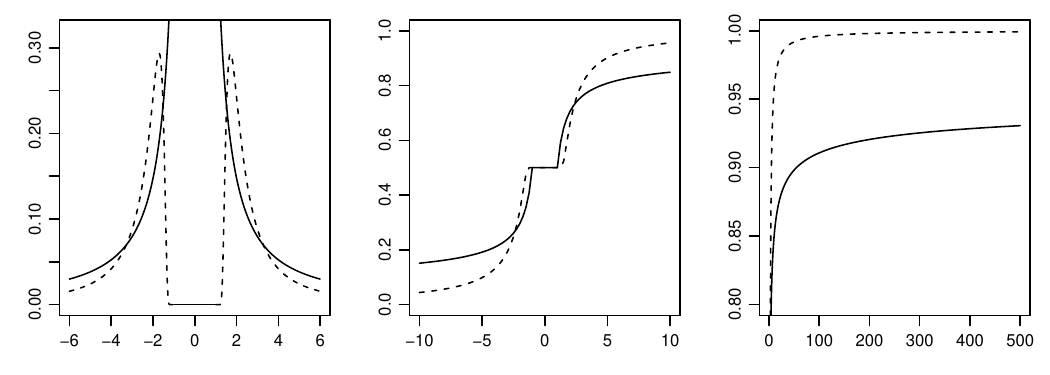}
\caption{ The density (left) and distribution functions (middle and right) of the unfolded, shifted log-Pareto (solid) and inverse-$\chi_1$ (dashed) distributions ($\ga = 1$). Note that $t$ is (unfolded) inverse-$\chi_1$ distributed when $t^{-2}\sim \chi^2_1$. }
\label{fig:tdens}
\end{figure}

Although tail heaviness is important for robustness, the direct use of the unfolded log-Pareto distribution as the distribution of $t_{i,k}$ can overestimate posterior uncertainty. 
Hence, we employ the following two-component mixture model for $t_{i,k}$:
\begin{equation} \label{eq:two}
\pi(t_{i,k}; \phi)=(1-\phi) \delta _{\{ 1 \}} (t_{i,k}) + \phi \pi_{\rm LP}( t_{i,k} ), 
\end{equation}
where $\phi \in (0,1)$ is an unknown proportion, $\delta_{\{1\}}$ is the point-mass distribution on $\{ t_{i,k}=1 \}$ and $\pi_{\rm LP}(t_{i,k})$ is the unfolded log-Pareto density in (\ref{eq:lpdens}). 
By marginalizing $t_{i,k}$ out, the distribution of $y_{i,k}$ is exactly the (point-mass) contamination model \citep{box1968bayesian}, where the second component is replaced with a super heavy-tailed distribution. Due to the second component, $\pi(t_{i,k}; \phi)$ is also log-Pareto tailed as long as $\phi\neq  0$. 
Here, $\phi$ governs both the efficiency and robustness of the posterior.
When $\phi$ is large, the heavy-tailed component $\pi_{\rm LP}(\cdot)$ is emphasized, allowing for more outliers but tending to overestimate posterior uncertainty.
Conversely, when $\phi$ is small, the model implicitly assumes that only a subset of the observations are outliers, thereby reducing posterior uncertainty.
In the proposed CSM, we place a prior on $\phi$, enabling data-dependent learning of $\phi$ and allowing the degree of robustness to be controlled adaptively.

The CSM consists of the model for $\y_i$ in equation (\ref{eq:lm}) and the sandwich form of covariance in equation (\ref{eq:model_main}), with the latent variables following the two-component density in equation (\ref{eq:two}). 
The model includes unknown parameters $(\bbe, \bSi)$ (main parameters of interest) and $\phi$ (proportion to control robustness), for which the prior distribution must be specified. 
As detailed in the next section, a wide range of priors are shown to achieve posterior robustness. However, for computational convenience, we use the independent, conditionally conjugate normal, inverse-Wishart and beta priors for $\bbe$, $\bSi$ and $\phi$, respectively, in the applications in Section~\ref{sec:num}.

\subsection{Remarks on other possible models for covariance} \label{sec:others}

There are several models of variance $\V(\bSi,\t_i)$ that differ from (\ref{eq:model_main}) and have been used in practice. We conclude that the three models discussed below are less flexible than CSM in terms of robustness against element-wise contamination. 

The first model scales the base variance by a univariate latent variable, $t_i \in \mathbb{R}$, as 
$\V(\bSi, t_i ) = t_i^2 \bSi$.
This model is typically adopted to construct a heavy-tailed multivariate distribution, such as a multivariate $t$-distribution \citep[e.g.][]{lange1989robust}, and has been used for its computational simplicity.
However, this model is viewed as unreasonable because the latent variable $t_i$ is common across all elements of $\y_i$, inflating all the marginal variances simultaneously. It is rare for all elements of $\y_i$ are outlying; increasing all the marginal variances for a handful of outliers in $\y_i$ is inefficient. Similar problems would arise to the model based on the Cholesky decomposition of $\bSi$, where the latent variables are multiplied to the eigenvalues of $\bSi$ and inflate not all but many marginal variances simultaneously. This inefficiency can also be observed in the divergence approach to multivariate responses with outliers mentioned in the introduction. We will confirm this observation in the numerical example in Section~\ref{sec:num}.  

The second model of variance is of the additive form, $\V(\bSi,\t_i) = \bSi + \T_i^2$ with $\T_i^2 = \mathrm{diag}(t_{i,1}^2,\dots, t_{i,p}^2)$.
This model can be obtained by adding an error vector as $\y _i = \X_i\bbe + \tilde{\bep}_i + \bep_i$, where $\bep_i$ and $\tilde{\bep}_i$ are mutually independent, $\bep_i \sim {\rm N}_p(\zero,\bSi)$, and $\tilde{\bep}_i \sim {\rm N}_p(\zero,\T_i^2)$. 
The third model is similar to the second one but is based on the correlation matrix as $\V(\bSi,\t_i) = \mathrm{diag}(\bsi ) (\bm{R}+\T_i) \mathrm{diag}(\bsi )$, where $\bsi ^2 = (\Si_{11},\dots,\Si_{pp})$ and $\bm{R}$ is the correlation matrix; $\bm{R} = \{ \diag (\bsi)^{-1} \} \bSi \{ \diag (\bsi)^{-1} \}$. The proposed CSM model differs from the second and third models mostly in its correlation structure. The correlation of $y_{i,1}$ and $y_{i,2}$ defined by (\ref{eq:model_main}) is $\sgn (t_{i,1}t_{i,2}) \Si_{12}/\sqrt{ \Si_{11}\Si_{22} }$, where the latent variables affect the correlation only through their signs. By contrast, the correlations under the second and third models are
\begin{equation*}
\frac{ \Si_{12} }{ \sqrt{ (\Si_{11}+t_{i,1}^2)(\Si_{22}+t_{i,2}^2) } } \qquad {\rm and} \qquad \frac{ \Si_{12} }{ \sqrt{ \Si_{11}\Si_{22}(1+t_{i,1}^2)(1+t_{i,2}^2)} },    
\end{equation*}
respectively. When $|t_{i,1}|$ is large, the correlation of the proposed CSM model is unchanged (except for its sign), whereas those of the two models are reduced to zero. With this structure, the proposed model can realize the separation of latent variables and model parameters, thus simplifying the proof of posterior robustness.

\section{Theory of Posterior Robustness for Multivariate Data}
\label{sec:theory}

In this section, we establish a general theory of posterior robustness for the multivariate linear model. 
In particular, we prove that the CSM achieves posterior robustness under certain regularity conditions, whereas typical multivariate heavy-tailed models (asymmetric and/or polynomial density tails for $t_{i,k}$) fail to hold this property.

\subsection{Multivariate outliers and robustness} \label{sec:def}

The posterior distribution of $(\bbe, \bSi)$ is of the form, 
\begin{equation}\label{eq:pos}
p(\bbe,\bSi | \y ) = \frac{ p_0(\bbe,\bSi) \prod _{i=1}^n p(\y_i|\bbe,\bSi)  }{ 
\int p_0(\bbe,\bSi) \prod _{i=1}^n p(\y_i|\bbe,\bSi) d(\bbe,\bSi) },
\end{equation}
where $\y = (\y_1,\dots ,\y_n)$ is a set of observations, $p_0(\bbe,\bSi)$ is a joint prior distribution and $p(\y_i|\bbe, \bSi)$ is the likelihood with the latent variables being marginalized and defined as 
\begin{equation}\label{eq:like}
p(\y_i|\bbe,\bSi) = \int {\rm N}_p(\y_i| \X_i\bbe,\T_i\bSi\T_i ) \pi(\t_i)d\t_i.
\end{equation}
Here, ${\rm N}_p(\cdot | \bmu, \bSi)$ denotes the density function of the multivariate normal distribution with mean $\bmu$ and variance $\bSi$ and $\pi(\t_i)$ is an arbitrary mixing density of latent variable $\t_i$. An outlier is defined as an extreme value conditioned in the posterior in (\ref{eq:pos}). For each $i=1,\dots ,n$, let $\Kc (i)$ and $\Lc (i)$ be the set of indices of non-outliers and outliers in the $i$-th observation; $\Kc(i) \cup \Lc(i) = \{ 1, \dots, p \}$ and $\Kc(i) \cap \Lc(i) = \emptyset$. Then, we consider the behavior of the posterior distribution (\ref{eq:pos}) when $|y_{i,k}|\to \infty$ for all $k\in\Lc(i)$ and $i\in \{ 1,\dots ,n\}$. 

For simplicity, we further model the outlier values as follows. By following the works of \cite{desgagne2015robustness} and other authors, we define the observed values conditioned in the posterior by $y_{i, k} = c_{i, k} + d_{i, k} \om$, with some constants $(c_{i,k},d_{i,k})$ and $\om > 0$, then define $\Kc (i)$ and $\Lc (i)$ as 
\begin{equation*}
\begin{split}
\Kc (i) = \{ \ k \in \{ 1,\dots, p \}  \ | \ d_{i,k} = 0 \ \}, \ \ \ \ \ \ 
\Lc (i) = \{ \ k \in \{ 1,\dots, p \} \ | \ d_{i,k} \neq 0 \ \} .
\end{split} 
\end{equation*}
By this definition, for all $i \in \{ 1,\dots, n \}$, as $\om \to \infty$, we have $|y_{i,k}|\to \infty$ for all $k \in \Lc (i)$, while $y_{i,k}$ is finite for $k\in \Kc(i)$. The concept of posterior robustness is defined as the behavior of the posterior density in response to such extreme outliers.

\begin{df}\label{df:pos}
The model is posterior robust if, for almost all $(c_{i,k},d_{i,k})$ for all $i$ and $k$, it holds that $p(\bbe ,\bSi | \y ) \to p( \bbe ,\bSi | \y_{\Kc} )$ as $\om \to \infty$, at each $(\bbe ,\bSi)$, where $\y _{\Kc } = \{ \ y_{i,k} \ | \ k\in \Kc (i), \ i\in \{ 1,\dots, n \} \ \}$ is the set of all the non-outliers.
\end{df}
This is a natural extension of the concept of posterior robustness used in the univariate case and requires complete rejection of outliers in the limit (\citealt{andrade2006bayesian,andrade2011bayesian}). Although we focus on the point-wise convergence of posterior densities, one can also consider convergences of other types such as uniform convergence on compact support and weak convergence (\citealt{desgagne2013full,desgagne2015robustness,gagnon2020new}). 

Because the outlying $y_{i,k}$ affects the posterior via the likelihood in (\ref{eq:like}), the robustness of this likelihood is also important. If the limit of $p(\y_i|\bbe,\bSi)$ is proportional to $p(\y_{i,\Kc(i)}|\bbe,\bSi)$ as $\om \to \infty$, the computation of the limiting posterior is greatly simplified.  

\begin{df}\label{df:like}
The model is likelihood robust if, for each $i=1,\dots,n$, there exists positive constant $C_i(\om)$ such that $\displaystyle \lim_{\om \to \infty} p(\y_i |\bbe,\bSi)/C_i(\om) = p(\y_{i,\Kc(i)} |\bbe,\bSi)$ at each $(\bbe,\bSi)$ for almost all $(c_{i,k},d_{i,k})$ for all $k$. 
\end{df}
Under the robustness of likelihood, we claim that the posterior becomes robust as 
\begin{equation*}
\begin{split}
    p(\bbe,\bSi | \y ) &= \frac{ p_0(\bbe,\bSi) \prod _{i=1}^n p(\y_i|\bbe,\bSi) / C_i(\om) }{ \int p_0(\bbe,\bSi) \prod _{i=1}^n p(\y_i|\bbe,\bSi) / C_i(\om) d(\bbe,\bSi) } \\ 
    &\to \frac{ p_0(\bbe,\bSi) \prod _{i=1}^n p(\y_{i,\Kc(i)}|\bbe,\bSi) }{ \int p_0(\bbe,\bSi) \prod _{i=1}^n p(\y_{i,\Kc(i)}|\bbe,\bSi) d(\bbe,\bSi) } = p(\bbe,\bSi | \y_{\Kc} ),
\end{split}
\end{equation*}
as $\om \to \infty$. 
In the denominator, we compute the limit by assuming the exchange of limit and integral. The proof of the posterior robustness is completed by justifying this exchange.

\subsection{Likelihood and posterior robustness}

The likelihood robustness holds under minimal assumptions for a wider class of distributions for $t_{i,k}$ with log-Pareto tails. 
The density, $\pi (\cdot)$, is log-Pareto-tailed if there exists constant $c>0$ and $C>0$ such that
\begin{equation} \label{eq:pareto}
    \lim _{|t|\to \infty} \pi (t) |t| \{ \log |t| \}^{1+c} = C. 
\end{equation}
Trivially, the unfolded log-Pareto density in (\ref{eq:lpdens}) satisfies the condition. Density tails of this type have two notable features crucial for guaranteeing robustness: 
\begin{itemize}
\item {\it Super heavy tails.}  The density tails are heavier than those of any $t$-distribution ($|t|^{1+\nu}$ for some $\nu > 0$ as $|t|\to\infty$). 

\item {\it Symmetry.} The two limits at $t\to\infty$ and $t\to -\infty$ are identical, indicating that the density is symmetric at tails. 
\end{itemize}
The density in (\ref{eq:pareto}) belongs to the class of log-regularly varying distributions \citep[e.g.][]{desgagne2013full}, which is an integral part of the sufficient condition for posterior robustness in univariate Gaussian models \citep{desgagne2013full,desgagne2015robustness}. The likelihood robustness does not hold without those features of density tails, as we will see in Sections~\ref{sec:neccesity-symmetry} and \ref{sec:necessity-tail}. 

The likelihood robustness holds for bounded, log-Pareto tailed mixing distributions.

\begin{thm} \label{thm:unnormalized}
Consider the multivariate model (\ref{eq:lm}) with the sandwich covariance structure (\ref{eq:model_main}), and assume that the mixing density $\pi(\cdot)$ for $t_{i,k}$ satisfies the tail condition (\ref{eq:pareto}) and is bounded.
Then, the likelihood robustness holds with choice of scaling constants, $C_i(\om ) = \prod_{k \in \Lc(i)} \pi (| y_{i, k} |)$ if $\Lc(i)\not=\emptyset$ and $C_i(\om ) = 1$ if $\Lc(i)=\emptyset$.
\end{thm}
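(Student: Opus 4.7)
The plan is to reduce the likelihood integral to one in which dominated convergence applies, after a change of variables matching the growth of each outlying $y_{i,k}$ to that of the corresponding $t_{i,k}$. Expanding the Gaussian density, $p(\y_i|\bbe,\bSi)$ equals
\begin{equation*}
\int \frac{\prod_{k=1}^p \pi(t_{i,k})}{(2\pi)^{p/2}|\bSi|^{1/2}\prod_{k=1}^p|t_{i,k}|} \exp\Bigl(-\tfrac{1}{2}\u_i^\top \bSi^{-1}\u_i\Bigr) d\t_i,
\end{equation*}
where $u_{i,k}=(y_{i,k}-(\X_i\bbe)_k)/t_{i,k}$. The case $\Lc(i)=\emptyset$ is immediate. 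Assume $\Lc(i)\neq\emptyset$ and substitute $v_{i,k}=y_{i,k}/t_{i,k}$ for $k\in\Lc(i)$; this is a bijection from $\{|t_{i,k}|\ge 1\}$ onto $\{|v_{i,k}|\le|y_{i,k}|\}$ with $|dt_{i,k}|/|t_{i,k}|=|dv_{i,k}|/|v_{i,k}|$, and $u_{i,k}=v_{i,k}(1-(\X_i\bbe)_k/y_{i,k})\to v_{i,k}$ as $\om\to\infty$. After dividing by $C_i(\om)=\prod_{k\in\Lc(i)}\pi(|y_{i,k}|)$, the integrand is the Gaussian factor times $\prod_{k\in\Lc(i)}\pi(y_{i,k}/v_{i,k})/(|v_{i,k}|\pi(|y_{i,k}|))$ times $\prod_{k\in\Kc(i)}\pi(t_{i,k})/|t_{i,k}|$. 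By the log-Pareto asymptotic \eqref{eq:pareto}, for each fixed $v_{i,k}\neq 0$ the first ratio tends to $1$ because the $|y_{i,k}|$-factors cancel and $\log(|y_{i,k}|/|v_{i,k}|)/\log|y_{i,k}|\to 1$; this identifies the pointwise limit $F_\infty$ of the integrand on $(\mathbb{R}\setminus\{0\})^{|\Lc(i)|}\times((-\infty,-1]\cup[1,\infty))^{|\Kc(i)|}$.

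The main obstacle is producing an $\om$-independent integrable majorant $G$. Combining boundedness of $\pi$ with the tail \eqref{eq:pareto} yields a global bound $\pi(t)\le K/((1+|t|)(\log(e+|t|))^{1+c})$ and, for $|y|$ large, a matching lower bound $\pi(|y|)\ge c'/(|y|(\log|y|)^{1+c})$. Together these give
\begin{equation*}
\frac{\pi(y_{i,k}/v_{i,k})}{|v_{i,k}|\pi(|y_{i,k}|)}\le \frac{K}{c'}\Bigl(\frac{\log|y_{i,k}|}{\log(e+|y_{i,k}|/|v_{i,k}|)}\Bigr)^{1+c}.
\end{equation*}
I would split the $v_{i,k}$ domain at $|v_{i,k}|=|y_{i,k}|^{1/2}$. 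On $|v_{i,k}|\le|y_{i,k}|^{1/2}$ the log-ratio is bounded by $2^{1+c}$, yielding a uniform constant. On $|v_{i,k}|>|y_{i,k}|^{1/2}$, so $|y_{i,k}|<v_{i,k}^2$, the log-ratio is bounded by $(\log(e+v_{i,k}^2))^{1+c}$, which is polylogarithmic in $|v_{i,k}|$ and therefore absorbed by the Gaussian factor $\exp(-\tfrac12 \la_{\min}(\bSi^{-1})u_{i,k}^2)\le \exp(-\la_{\min}(\bSi^{-1})v_{i,k}^2/8)$, valid once $\om$ is large enough that $|u_{i,k}|\ge|v_{i,k}|/2$. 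Multiplying across $k\in\Lc(i)$ and combining with the $\Kc(i)$ factors (for which $\pi(t_{i,k})/|t_{i,k}|$ times the residual Gaussian is already the integrand of a well-defined quantity in $\t_{\Kc(i)}$) produces an integrable $G$ valid for all sufficiently large $\om$. Rigorously assembling this uniform bound is the bulk of the technical work.

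Dominated convergence then gives $p(\y_i|\bbe,\bSi)/C_i(\om)\to\int F_\infty$. To identify this limit with $p(\y_{i,\Kc(i)}|\bbe,\bSi)$, I first integrate $F_\infty$ over $\v_{\Lc(i)}\in\mathbb{R}^{|\Lc(i)|}$: since the limiting Gaussian has $\widetilde u_{i,k}=v_{i,k}$ on $\Lc(i)$, integrating out these coordinates of the multivariate normal density yields the marginal ${\rm N}_{|\Kc(i)|}(\zero,\bSi_{\Kc(i)\Kc(i)})$ evaluated at $\u_{i,\Kc(i)}$. Reverting the substitution $u_{i,k}=(y_{i,k}-(\X_i\bbe)_k)/t_{i,k}$ for $k\in\Kc(i)$ and restoring the Jacobian $\prod_{k\in\Kc(i)}|t_{i,k}|$ reproduces exactly the integral defining $p(\y_{i,\Kc(i)}|\bbe,\bSi)$, completing the proof.
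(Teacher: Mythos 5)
Your proposal is correct and follows essentially the same route as the paper's proof: the change of variables $v_{i,k}=y_{i,k}/t_{i,k}$ on the outlying coordinates (the paper uses $\xi_{i,k}=|y_{i,k}|/t_{i,k}$, which is equivalent), regular variation of order $1$ for the pointwise limit, and a dominated-convergence majorant of the form Gaussian times a polylogarithmic factor in $v_{i,k}$. Your derivation of that majorant via a global sandwich bound on $\pi$ and a split at $|v_{i,k}|=|y_{i,k}|^{1/2}$ is a minor variant of the paper's Lemma on the tail ratio (which splits at a fixed threshold for $|y|/|\xi|$), and yields the same $\om$-independent integrable bound.
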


The proof is provided in the Supplementary Materials.
To gain some insights in the role of symmetry condition of $\pi(\cdot)$, we study the bivariate case ($p=2$) and provide a sketch of the proof in the next subsection.

We next present our result on the posterior robustness under the CSM model, where we specify $\pi(\cdot)$ by the two-component mixture with the unfolded log-Pareto distribution. 
Below, the value of mixture weight $\phi$ is fixed, so the posterior robustness is proved to be valid for each value of $\phi$.

\begin{thm} \label{thm:normalized}
Assume that the joint prior $p_0(\bbe,\bSi)$ satisfies 
\begin{equation*}
     \mathbb{E} \Big[ (1 + \| \bbe \| ^{1 + \ka } ) \Big( 1 + \sum_{k = 1}^{p} {\si _k}^{\ka } \Big) \{ 1 + \sqrt{\tr ( \bSi ^{- 1} )} \} \Big] < \infty ,
\end{equation*}
for some $\kappa>0$, where $\sigma_k^2$ is the $k$-th diagonal element of $\bSi$ and the expectation is taken with respect to $p_0(\bbe,\bSi)$.
Then, the posterior robustness holds under the CSM model for each value of $\phi \in (0,1)$. 
\end{thm}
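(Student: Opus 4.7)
My plan is to promote the pointwise likelihood limit supplied by Theorem~\ref{thm:unnormalized} to convergence of the full posterior density by applying the dominated convergence theorem to the marginal likelihood in the denominator of (\ref{eq:pos}). Dividing both the numerator and the denominator of (\ref{eq:pos}) by $\prod_{i=1}^{n} C_{i}(\om)$, where $C_{i}(\om)$ is as in Theorem~\ref{thm:unnormalized}, the rescaled integrand $p_{0}(\bbe,\bSi)\prod_{i} p(\y_{i}|\bbe,\bSi)/C_{i}(\om)$ converges pointwise in $(\bbe,\bSi)$ to $p_{0}(\bbe,\bSi)\prod_{i} p(\y_{i,\Kc(i)}|\bbe,\bSi)$ as $\om\to\infty$. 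If I can dominate this integrand by an $(\bbe,\bSi)$-integrable function that is uniform in $\om$ for all sufficiently large $\om$, the convergence in Definition~\ref{df:pos} follows at each $(\bbe,\bSi)$.

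\textbf{Dominating function.} The prior moment condition essentially prescribes the target bound $B(\bbe,\bSi) = C_{*}(\y_{\Kc})(1+\|\bbe\|^{1+\ka})(1+\sum_{k=1}^{p}\si_{k}^{\ka})(1+\sqrt{\tr(\bSi^{-1})})$, with a constant $C_{*}(\y_{\Kc})$ that depends only on the non-outlying data. To assemble $B$, I would first apply the change of variables $u_{i,k}=1/t_{i,k}$, which maps the integration for each $t_{i,k}$ onto the bounded set $\{0<|u_{i,k}|\le 1\}$ and transforms the Gaussian factor into $\prod_{k}|u_{i,k}|(2\pi)^{-p/2}|\det\bSi|^{-1/2}\exp(-\tfrac12 (\y_i-\X_i\bbe)^{\top}\U_i\bSi^{-1}\U_i (\y_i-\X_i\bbe))$ with $\U_i=\diag(u_{i,1},\ldots,u_{i,p})$. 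The two-component mixing density (\ref{eq:two}) becomes a point mass at $u_{i,k}=1$ plus an integrable log-Pareto-type density on $\{0<|u_{i,k}|<1\}$. Expanding over the two branches at each coordinate decomposes the $\t_i$-integral into a finite sum over subsets $S\subseteq\{1,\ldots,p\}$ marking the log-Pareto coordinates, and within each summand I would estimate the contributions of the four index classes $S\cap\Lc(i)$, $S\cap\Kc(i)$, $S^{c}\cap\Lc(i)$, $S^{c}\cap\Kc(i)$ separately. The $\pi_{\rm LP}(|y_{i,k}|)$ factors hidden in $C_i(\om)$ are absorbed using the log-regular-variation property of $\pi_{\rm LP}$, which controls ratios of the form $\pi_{\rm LP}(|y_{i,k}|/|v|)/\pi_{\rm LP}(|y_{i,k}|)$ by a function of $|v|$ alone whenever $|y_{i,k}|$ is sufficiently large.

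\textbf{Main obstacle.} The hardest step is showing that each outlier coordinate $k\in\Lc(i)$ contributes exactly one factor $\pi_{\rm LP}(|y_{i,k}|)$ and no further $\om$-dependence to the marginal likelihood, with a residual controlled by the three factors of $B$. When $k\in\Lc(i)$ lies in the point-mass branch $S^{c}\cap\Lc(i)$, the quadratic form in the exponent contains a coefficient proportional to $y_{i,k}^{2}/\si_{k}^{2}$, and the resulting exponential decay outpaces $\pi_{\rm LP}(|y_{i,k}|)$, so this branch vanishes in the limit; when $k\in\Lc(i)$ lies in the log-Pareto branch, the rescaling $u_{i,k}=\si_{k}v_{i,k}/|y_{i,k}|$ reduces the marginal $u_{i,k}$-integral to a Gaussian-type integral in $v_{i,k}$ whose leading factor is exactly $\pi_{\rm LP}(|y_{i,k}|)$, producing the $\si_{k}^{\ka}$-type corrections through the Jacobian of the rescaling. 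The factor $\sqrt{\tr(\bSi^{-1})}$ enters through the Gaussian-tail bound on the bounded $u$-domain, which is controlled by $\|\bSi^{-1}\|_{\rm op}^{1/2}\le\sqrt{\tr(\bSi^{-1})}$, and $\|\bbe\|^{1+\ka}$ arises when expanding the cross-terms $\|\X_i\bbe\|\le\|\X_i\|_{\rm op}\|\bbe\|$ in the quadratic form. Symmetry of $\pi_{\rm LP}$ is essential throughout, as it ensures that the analysis works identically for $y_{i,k}\to+\infty$ and $y_{i,k}\to-\infty$, mirroring its role in the proof of Theorem~\ref{thm:unnormalized}. With $B$ in hand, the prior condition is precisely $\int p_{0}(\bbe,\bSi)B(\bbe,\bSi)\,d(\bbe,\bSi)<\infty$, and dominated convergence delivers the posterior limit for each fixed $\phi\in(0,1)$.
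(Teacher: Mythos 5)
Your high-level plan (rescale by $\prod_i C_i(\om)$ and apply dominated convergence to the $(\bbe,\bSi)$-integral in the denominator of (\ref{eq:pos})) is the right starting point and matches the paper's Step~2, but the proposal misses the one obstruction that makes this theorem hard: a dominating function that is uniform in $\om$ over the \emph{whole} parameter space does not exist. The culprit is exactly the branch you dismiss as "vanishing in the limit," namely an outlier coordinate $k\in\Lc(i)$ assigned to the point-mass component $t_{i,k}=1$. After dividing by $C_i(\om)\asymp\prod_{k\in\Lc(i)}\pi_{\rm LP}(|y_{i,k}|)$, that branch contributes a factor of order $|y_{i,k}|\{\log(1+|y_{i,k}|)\}^{1+\ga}\exp\{-(y_{i,k}-\x_{i,k}^{\top}\bbe)^2/(2p\si_k^2)\}$. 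For each fixed $(\bbe,\bSi)$ this tends to $0$, but for each fixed $\om$ there is a region of $\bbe$ with $\x_{i,k}^{\top}\bbe\approx y_{i,k}$ where the exponential equals $1$ and the factor is of order $\om\log^{1+\ga}\om$; hence $\sup_{\om}$ of the integrand is not bounded by any $\om$-free function of the simple product form $B(\bbe,\bSi)$ you write down, and your stated derivations of the factors ($\|\bbe\|^{1+\ka}$ from "cross-terms in the quadratic form," $\sqrt{\tr(\bSi^{-1})}$ from a "Gaussian-tail bound") do not produce bounds that cover this region. The paper's proof handles precisely this by splitting the parameter space with the $\om$-dependent set $D(\om)=\bigcap_{i}\bigcap_{k\in\Lc(i)}\{\bbe:|y_{i,k}-\x_{i,k}^{\top}\bbe|>\ep\om\}$: on $D(\om)$ (intersected with the configurations $\Lc(i)\subset\Zc(i)$) the DCT argument you envisage goes through, while on the complement one must show the contribution tends to $0$. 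That second half is where the prior moment condition is genuinely consumed: $\bbe\notin D(\om)$ forces $1\le M\|\bbe\|^{\ka}/\om^{\ka}$, the point-mass-at-outlier configurations are controlled by $\om^{-\ka}\sum_k\si_k^{\ka}$, and the residual $\om$-growth of the restricted marginal likelihood is tamed by writing it as a risk function and bounding the risk difference $R_{\Zc}(\om)-R_{\Zc}(1)$ via an integration in $\tau$ (Kubokawa-style), which is where the $\sqrt{\tr(\bSi^{-1})}$ moment actually enters through a Cauchy--Schwarz bound on $\partial L/\partial\tau$. None of this machinery, nor any substitute for it, appears in your proposal, so the proof as outlined cannot be completed.

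A secondary point: in the CSM model the mixing law is $(1-\phi)\delta_{\{1\}}+\phi\pi_{\rm LP}$, so Theorem~\ref{thm:unnormalized} (which assumes a bounded density) does not apply verbatim; the likelihood-robustness step has to be re-proved for the mixture, with the scaling constant taken as (a constant times) $\prod_{k\in\Lc(i)}|y_{i,k}|^{-1}\{\log(1+|y_{i,k}|)\}^{-(1+\ga)}$ and with a separate argument that every configuration with some outlier in the point-mass branch contributes zero in the limit. Your subset decomposition over $S$ gestures at this but does not establish it in a form compatible with the domination you need.
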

The proof is provided in the Supplementary Materials. As discussed in Section~\ref{sec:def}, the likelihood robustness implies the posterior robustness if the exchange of the limit ($\om\to\infty$) and integral (with respect to $(\bbe,\bSi)$) is justified. 
In using the dominated convergence theorem, however, constructing an integrable, upper-bounding function is more difficult than the univariate case. We find such a bounding function by viewing the integral as a risk function, in the spirit of the iterated expectation of risk difference methods (\citealt{kubokawa1994unified,ks1994}). 

Required moment conditions guarantee the integrability of the bounding function. A typical class of priors that satisfies the moment conditions is an independent prior with finite first moments of $\bbe$ and $\bSi^{-1/2}$. Examples of prior distributions in this class include independent normal and inverse-Wishart priors for $\bbe$ and $\bSi$, respectively.

\subsection{On necessity of symmetry for robustness}
\label{sec:neccesity-symmetry}

As explained in the previous section, a novel feature of the proposed model is that the scale parameter $t_{i,k}$ is not always positive, but can take a negative value. 
This feature is explicitly stated in the tail condition in (\ref{eq:pareto}) and assumed in Theorem~\ref{thm:unnormalized}. 
To demonstrate the necessity of a symmetric distribution of $t_{i,k}$ for likelihood robustness, we provide a sketch of the proof of Theorem~\ref{thm:unnormalized} in the bivariate case. 
Again, a rigorous proof of general $p$ is provided in the Supplementary Materials. 

We let $p=2$, fix $i \in \{ 1,\dots ,n \}$, and write $\y_i=(y_{i,1},y_{i,2})^\top$. 
We assume $\Lc(i)=\{ 1 \}$ and $\Kc(i) = \{ 2 \}$. That is, in the $i$-th observation, $y_{i,1}$ is an outlier and should be excluded, whereas $y_{i,2}$ is not outlying and should be conditioned in the posterior. 
The (scaled) likelihood of interest is written as 
\begin{equation}\label{eq:scaled-like}
    \frac{ p(\y_i|\bbe,\bSi) }{ \pi (|y_{i,1}|) } 
    = \int A(t_{i,2};\bbe,\bSi ) {\rm N}(  y_{i,2} | \x_{i,2}^{\top}\bbe , t_{i,2}^2 \Si _{22} )  \pi (t_{i,2}) dt_{i,2},
\end{equation}
where 
\begin{equation} \label{eq:bias}
    A(t_{i,2};\bbe,\bSi ) = \int {\rm N}\left(  \frac{ y_{i,1} - \x_{i,1}^{\top}\bbe }{ t_{i,1}}  \left|  \frac{ \mu _{1\cdot2} }{ t_{i,2}} , \Si_{1\cdot2} \right. \right)  \frac{ \pi (t_{i,1}) dt_{i,1} }{ \pi(|y_{i,1}|) |t_{i,1}| },
\end{equation}
with conditional moments $\mu _{1\cdot2} = \Si_{12}\Si_{22}^{-1}(y_{i,2}-\x_{i,2}^{\top}\bbe)$ and $\Si_{1\cdot2} = \Si _{11}- \Si_{12}\Si^{-1}_{22}\Si_{21}$. 
The term in the integrand, $A(t_{i,2};\bbe,\bSi )$, can be viewed as a ``bias'' term; if $A(t_{i,2};\bbe,\bSi )\to 1$ (or some constant) as $\om\to\infty$, then the scaled likelihood in (\ref{eq:scaled-like}) converges to $p(y_{i2}|\bbe,\bSi)$, concluding that the limit is the likelihood without the outlier as desired.

Since we defined $y_{i,1} = c_{i,1} + d_{i,1}\om$, we have $y_{i,1}/|y_{i,1}|\to \sgn (d_{i,1})$ as $\om \to \infty$. 
To compute the limit of (\ref{eq:bias}), we make a change-of-variable to the latent variable $t_{i,1}$ by $\xi _{i,1} = |y_{i,1}|/t_{i,1}$ with the Jacobian defined by $|d\xi _{i,1}/\xi_{i,1}| = |dt_{i,1}/t_{i,1}|$. The support of the new variable is $|\xi _{i,1}| < |y_{i,1}|$, thus covers the whole real line as $|y_{i,1}| \to \infty$. Then, we observe that 
\begin{equation*}
    \frac{ y_{i,1} - \x_{i,1}^{\top}\bbe - \mu_{1\cdot 2}/t_{i,2}}{ |y_{i,1}| / \xi _{i,1} } \to \sgn (d_{i,1}) \xi _{i,1} \qquad \mbox{as} \quad \om\to\infty.
\end{equation*}
Also, note that $\pi (|y_{i,1}|/\xi _{i,1})/ \{ \pi (|y_{i,1}|) |\xi_{i,1}| \} \to 1$ as $\om\to\infty$, that is, $\pi(\cdot)$ is regularly varying with order 1. 
Thus, assuming the exchange of the limit and integral, we have 
$$
A(t_{i,2};\bbe,\bSi ) \to 
\int_{-\infty}^{\infty} {\rm N}\left(\sgn (d_{i,1}) \xi_{i,1}|0, \Sigma_{1\cdot 2}\right) d\xi_{i,1} = 1 \qquad \mbox{as} \quad \om\to\infty,
$$
indicating likelihood robustness. 
In this evaluation, the role of the negative support of $t_{i,1}$ is clear in having $\xi _{i,1}$ being supported on the entire real line. The other features of $\pi(\cdot)$, the log-Pareto tails and symmetry in the tails, allow us to use the limiting property of the regularly varying function with order 1.

We next illustrate that the likelihood robustness does not hold when the mixing density $\pi(\cdot)$ is not symmetric. 
In such a situation, the bias term in (\ref{eq:bias}) converges not to a constant as in Theorem~\ref{thm:unnormalized}, but depends on $(t_{i,2},\bbe,\bSi)$ in the limit.
Specifically, we consider two scenarios: (I) a one-sided log-Pareto-tailed distribution and (II) two-sided but asymmetric log-Pareto-tailed distributions. 

\begin{itemize}
\item[(I)] {\it One-sided log-Pareto-tailed distribution.}  Let $\pi (t_{i,k})$ have positive support $\{t_{i,k}>1 \}$ with a log-Pareto tail. 
Then, the distribution of $\xi _{i,1} = |y_{i,1}|/t_{i,1}$ has support $0<\xi_{i,1}<|y_{i,1}|$, which covers only the positive half-line. As $\om \to \infty$, the bias term becomes 
\begin{equation*} 
    A(t_{i,2};\bbe,\bSi ) \to \int _0^{\infty} {\rm N}\left(   \xi _{i,1} \left| \sgn(d_{i,1}) \frac{ \mu _{1\cdot2} }{ t_{i,2}} , \Si_{1\cdot2} \right. \right)  d\xi _{i,1},
\end{equation*}
which is the distribution function of $ {\rm N}( \mu _{1\cdot2} / t_{i,2}, \Si_{1\cdot2})$ and depends on $(t_{i,2},\bbe,\bSi)$. As an exception, when $\Si _{12} = 0$, the expression above becomes constant and the robustness holds. In that case, $\mu _{1\cdot2}=0$ and $ A(t_{i,2};\bbe,\bSi )\to 1/2$. For general $\Si_{12}\not= 0$, however, the posterior robustness does not hold. This observation implies that the sign of $t_{i,k}$ controls the effect of outliers on correlation. 

\item[(II)] {\it Two-sided but asymmetric log-Pareto-tailed distribution.} 
We consider the following form of $\pi (t_{i,k})$: 
\begin{equation*}
\pi (t_{i,k}) \propto (1-w) \mathbbm{1}[ t_{i,k}<-1 ] \pi _{c'} (|t|) + w \mathbbm{1}[ t_{i,k}>1 ] \pi _c (|t|),
\end{equation*}
for some $w \in (0,1)$ and $c,c'>0$, where $\pi _c(t)$ is a one-sided log-Pareto tailed density on $(1,\infty)$ with order $c>0$. 
If $c=c'$ and $w=1/2$, then this distribution reduces to the symmetric one given in (\ref{eq:pareto}). 
If $c=c'$, it is immediate that the limit of $A(t_{i,2};\bbe,\bSi )$ is proportional to 
\begin{equation*} 
    (1-w)\int _{-\infty}^0 {\rm N}\left(  \xi_{i,1} \Big| \tilde{\mu}_{i,1\cdot 2}, \Si_{1\cdot2} \right)  d\xi _{i,1} 
    + w\int _0^{\infty} {\rm N}\left(  \xi _{i,1} \Big| \tilde{\mu}_{i,1\cdot 2}, \Si_{1\cdot2} \right)  d\xi _{i,1},
\end{equation*}
with $\tilde{\mu}_{i,1\cdot 2}=\sgn(d_{i,1})\mu_{1\cdot2}/ t_{i,2}$. This limit depends on $(t_{i,2},\bbe,\bSi)$ if $w\not=1/2$ and $\Si_{12}\not=0$. 
\end{itemize}

\subsection{On necessity of super heavy tails for robustness} \label{sec:necessity-tail}

We next consider the density with lighter tails than (\ref{eq:pareto}), denoted by $\pi _{c,c'}(t_{i,k})$, which satisfies $\pi _{c,c'}(|t|) |t|^{1+c'} \{ \log |t| \} ^{1+c} \to C$ as $|t|\to\infty$ for some constants $(c,c',C)$. 
If $c'=0$, then this is the log-Pareto tailed density in (\ref{eq:pareto}). 
This class of densities also includes the inverse-gamma distribution by setting $c=-1$ to delete the log-term, which implies the Student's $t$-distribution of $y_{i,k}$ with degree-of-freedom $c'$ \citep{finegold2011robust,choy2014bivariate}. This density satisfies $ \pi _{c,c'}(|y_{i,1}|/\xi _{i,1})/\pi (|y_{i,1}|) |\xi_{i,1}|^{1+c'} \to 1$ as $|y_{i,1}|\to\infty$, namely, it is regularly varying with order $1+c'$. The bias term in (\ref{eq:bias}) becomes: 
\begin{equation*} 
    A(t_{i,2};\bbe,\bSi ) \to \int _{-\infty}^{\infty} |\xi _{i,1}|^{c'} {\rm N}\left(  \xi _{i,1} \left| \sgn(d_{i,1}) \frac{ \mu _{1\cdot2} }{ t_{i,2}} , \Si_{1\cdot2} \right. \right)  d\xi _{i,1}\qquad \mbox{as} \quad \om\to\infty,
\end{equation*}
which is the $c'$-th absolute moment of ${\rm N}( \sgn(d_{i,1}) \mu _{1\cdot2} / t_{i,2} , \Si_{1\cdot2})$ and, if $c'>0$, depends on $(t_{i,2},\bbe,\bSi)$. Note that this dependence on $(t_{i,2},\bbe,\bSi)$ remains even if $\Si _{12} = 0$. In this case, $y_{i,1}$ and $y_{i,2}$ become independent, and the problem is reduced to a univariate case. This observation is consistent with the findings of the univariate analysis in which the density with the polynomial tail ($c'>0$) is unable to achieve the posterior robustness.

\subsection{Robustness of other covariance models} \label{sec:others_robust}

The alternative covariance models considered in Section~\ref{sec:others} are unable to achieve likelihood/posterior robustness without additional assumptions. Under the first model, $\V(\bSi, t_i ) = t_i^2 \bSi$, the robustness of likelihood does not hold when $p=2$ and $\bSi$ is diagonal. See the Supplementary Materials for further details. The second and third models complicate the posterior density as a function of $\t_i$, making the theoretical analysis difficult. Although not reported in detail here, we proved the robustness of these models, but required stronger assumptions than those in Theorems~1 and 2.

\section{Posterior computation of CSM}
\label{sec:mcmc}

\subsection{MCMC: difficulty and proposal}

The MCMC method for the proposed model should sample $(\bbe,\bSi,\phi)$ and $\t_{1:n}$ iteratively. Sampling $\bbe$, $\bSi$ and $\phi$ from their full conditionals is straightforward since the sampling model is simply Gaussian and linear and enables the use of priors and computational methods developed for such Gaussian linear models, including the conjugate normal and inverse-Wishart priors for $\bbe$ and $\bSi$. In sampling $\phi$, we introduce binary latent variable $z_{i,k}$ for each $i$ and $k$ by writing the contamination model in (\ref{eq:two}) as follows: 
\begin{equation*}
\begin{split}
    t_{i,k} | (z_{i,k}=0) \sim \delta _{\{1\}}(t_{i,k}), \ \ \ \ 
    t_{i,k} | (z_{i,k}=1) \sim \pi_{\rm LP}(t_{i,k}),  
\end{split}
\end{equation*}
and $\mathbb{P}[z_{i,k}=1] = 1-\mathbb{P}[z_{i,k}=0] =  \phi$. Then, the beta prior for $\phi$ becomes conditionally conjugate. 

The challenging part of the MCMC algorithm is the sampling of the latent variables $\t_{1:n}$ and $\z_{1:n}$. In the univariate case ($p=1$), the sampling of $t_1,\dots ,t_n$ is straightforward; they are conditionally independent and can be sampled in parallel from well-known distributions.  
In the multivariate case, however, the full conditional of $\t_i$ is no longer any well-known distribution due to the interaction term of $t_{i,k}$ and $t_{i,k'}$ for $k\not=k'$ in the likelihood. More importantly, although we still enjoy conditional independence across observations, we have correlations within the observations. That is, $\t_1,\dots, \t_n$ are mutually independent, but $t_{i,1},\dots, t_{i,p}$ are not independent for each $i$ conditional on the data and other parameters. Sampling each element of $\t_i$ individually from the two-component mixture is feasible but most likely results in the highly-autocorrelated Markov chains. The same within-observation dependence can be observed in $z_{i,1},\dots ,z_{i,p}$ for each $i$. From the viewpoint of computational statistics, this problem resembles a posterior computation using point-mass spike-and-slab priors for variable selection. To accelerate the convergence of the Markov chains, we considered the slice sampler in sampling $\t_i$, as well as introduced more latent variables to decrease the autocorrelation among $z_{i,1},\dots ,z_{i,p}$.

\subsection{Sampling of $\z_i$ with latent variable $\bth_i$}
To see the difficulty in sampling the state variables, observe that latent variable $\t_i$ appears in the likelihood only through $p(\y_i|\bbe,\bSi,\t_i) = \mathrm{N}(\y_i|\X_i\bbe,\T_i\bSi\T_i)$. Then the likelihood function has $t_{i,k}^{z_{i,k}} t_{i,k'}^{z_{i,k'}}$ in the exponential term, so it is clear that $z_{i,k}$ is not (conditionally) independent of $z_{i,k'}$. However, conditional on $\t_i$, this dependence is cut by another set of latent variables, which will be introduced below as $\bth_i$ (e.g., \citealt{zheng2025gibbs}). 

For simplicity, denote the ``standardization'' of $\y_i$ by $\tilde{\y}_i=(\tilde{y}_{i,1},\dots ,\tilde{y}_{i,p})^{\top}$, where $\tilde{y}_{i,k}$ is the $k$-th element of $( \y_i-\X_i\bbe )$ divided by $t_{i,k} ^{z_{i,k}}$. Note that $\tilde{y}_{i,k}$ depends on $z_{i,k}$, but not on $z_{i,k'}$ for $k'\not=k$. With this notation, the likelihood becomes
\begin{equation*}
    \mathrm{N}_p(\y_i|\X_i\bbe,\T_i\bSi\T_i) = \prod _{k=1}^p t_{i,k}^{-z_{i,k}} \ \mathrm{N}_p(\tilde{\y}_i|\zero,\bSi), 
\end{equation*}
which is known to be the marginal of the normal-normal model: 
\begin{equation*}
    \mathrm{N}_p(\tilde{\y}_i|\zero,\bSi) = \int \mathrm{N}_p(\tilde{\y}_i|\bth_i,c\I_p) \mathrm{N}_p(\bth_i|\zero,\bSi-c\I_p) d\bth _i, 
\end{equation*}
where $c>0$ is chosen so that $\bSi > c\I_p$. 
Also, the singular value decomposition of $\bSi$ is given by $\bSi = \H^{\top}\bLa \H$, where $\bLa$ is the diagonal matrix of the singular values, or $\bLa = \diag (\lambda _1,\dots, \lambda_p)$, and $\H$ is orthonormal, or $\H^{\top}\H = \I_p$. Subsequently, the integrand above can be computed as 
\begin{equation*}
\begin{split}
&\mathrm{N}_p(\tilde{\y}_i|\bth_i,c\I_p) \mathrm{N}_p(\bth_i|\zero,\bSi-c\I_p) \\
&\propto 
\exp \left\{ -\frac{1}{2} \left[ (\H \bth_i)^{\top} \tilde{\bLa} (\H\bth_i) -2 c^{-1}\tilde{\y}_i^{\top} \bth_i  \right] - \frac{1}{2c}\|\tilde{\y}_i\|^2 \right\},
\end{split}
\end{equation*}
where $\tilde{\bLa} = c^{-1}\I_p + \diag( (\lambda_1-c)^{-1}, \dots , (\lambda_p-c)^{-1} )$. In the above expression,  we obtain no cross term of $\tilde{y}_{i,k}$ and $\tilde{y}_{i,k'}$, showing that $z_{i,1},\dots ,z_{i,p}$ can be sampled in parallel. This form also shows that the full conditional of $\bth_i$ is a multivariate normal distribution, from which it is easy to simulate. In practice, we choose $c$ so that $c > \lambda _1$, where $\lambda_1$ is the smallest eigenvalue of $\bSi$; see the Supplementary Materials for details.

\subsection{Sampling of $\t_i$ with latent variable $\u_i$}

When computing the full conditional of $\t_i$, note that the likelihood does not depend on $t_{i,k}$ if $z_{i,k}=0$. Thus, for $k \in \{ 1,\dots ,p \}$ that satisfies $z_{i,k}=0$, the full conditional equals the prior. Sampling from the prior requires simulation from the unfolded log-Pareto distribution, which is in fact straightforward (see the Appendix). 

For those with $z_{i,k}=1$, we augment the log-regularly varying term in (\ref{eq:lpdens}) in the spirit of the slice sampler, as follows: 
\begin{equation*}
    \{ 1+\log|t_{i,k}| \} ^{-(1+\gamma)} = \int \mathbbm{1}[ \ 0 < u_{i,k} < \{ 1+\log|t_{i,k}| \} ^{-(1+\gamma)} \ ] \ du_{i,k}, 
\end{equation*}
and introduce another set of latent variables $\u_i = (u_{i,1},\dots,u_{i,p})^{\top}$, whose full conditional distributions are the independent uniform distributions. In sampling $t_{i,k}$, we work on its reciprocal $\tilde{t}_{i,k} = 1/t_{i,k}$ ($|\tilde{t}_{i,k}|<1$). Also, following the notation in the Supplementary Materials, we write the subvectors and submatrices by using $\Zc(i)=\{ k \in \{ 1,\dots, p \} | z_{i,k}=1 \}$ and $\overbar{\Zc}(i) = \{ 1,\dots, n \} \setminus \Zc(i)$. For example, the vector of $\tilde{t}_{i,k}$ with $z_{i,k}=1$ is $\tilde{\t}_{i,\mathcal{Z}(i)}$. To write down the density of the full conditional, let $\bm{1}$ be the all-one column vector of the appropriate dimension and $\bPsi _i = \diag (\y_i-\X_i\bbe) \bSi^{-1} \diag (\y_i-\X_i\bbe)$. Then, the full conditional of $\tilde{\t}_{i,\Zc(i)}$ has the density proportional to  
\begin{equation} \label{eq:rdist}
\begin{split}
    &\exp \left\{ -\frac{1}{2} \left[ \tilde{\t}_{i,\Zc(i)}^{\top} \bPsi _{i,\Zc(i),\Zc(i)} \tilde{\t}_{i,\Zc(i)} - \bm{1}^{\top} \bPsi _{i,\overbar{\Zc}(i),\Zc(i)} \tilde{\t}_{i,\Zc(i)} \right] \right\} \\
    &\qquad \times \prod _{ \{ k | z_{i,k} = 1 \} } \mathbbm{1}\!\left[ \ \exp\left\{ 1 - u_{i,k}^{-1/(1+\gamma ) } \right\} < |\tilde{t}_{i,k}| < 1 \  \right],    
\end{split}
\end{equation}
hence is a multivariate normal distribution truncated at bounded intervals. Because direct sampling from the truncated multivariate normal distribution is difficult, we use a Hamiltonian Monte Carlo-based sampler tailored for such a distribution in \cite{pakman2014exact}.

\subsection{The sketch of the proposed algorithm}

Note that the full conditionals of $\z_i$ and $\bth_i$ are unchanged even with the introduction of $\u_i$. The full conditional of $\u_i$ is dependent only on $\t_i$, while the density of $\t_i$ in (\ref{eq:rdist}) is obtained with $\bth_i$ being marginalized out. Thus, sampling from all the distributions we have derived is justified as a collapsed Gibbs sampler \citep{van2008partially}. An iteration of our MCMC method, while $(\bbe,\bSi,\phi)$ is conditioned, is summarized as follows. 

\begin{itemize}    
    \item Sampling from $p(\bth_i|-)$: The sample of $\bth_i$ is obtained via $\H \bth _i \sim \mathrm{N}_p( \tilde{\bLa} \H^{\top} \tilde{\y}_i , \tilde{\bLa}^{-1} )$. 

    \item  Sampling from $p(\z_i|-)$: The samples of $(z_{i,1},\dots, z_{i,p})$ are generated independently from the following Bernoulli distribution, 
    \begin{equation*}
        \mathbb{P}[ z_{i,k}=z ] \propto \frac{ \phi^z (1-\phi)^{1-z} }{ t_{i,k}^z } \exp \left\{  - \frac{\tilde{y}_{i,k}^2 - 2\tilde{y}_{i,k}\theta_{i,k}}{2c} \right\},
    \end{equation*}
    where $\tilde{y}_{i,k}$ is computed for each value of $z \in \{ 0, 1 \}$. 
    
    \item Sampling from $p(\u _i|-)$: sample $u_{i,k}$ from the uniform distribution on interval $[ 0 , \{ 1+\log|t_{i,k}| \} ^{-(1+\gamma)}]$ independently. 

    \item Sampling from $p(\t_i|-)$: The sample of $\tilde{\t}_i$ is generated from the truncated multivariate normal distribution in (\ref{eq:rdist}) using the Hamiltonian Monte Carlo method. Then, we set $t_{i,k}=1/\tilde{t}_{i,k}$. 
\end{itemize}

\subsection{Further details of posterior computation}

The above algorithm was further modified for efficiency and error handling. First, another set of latent variables is introduced to avoid numerical errors occasionally occurring during the HMC step. Second, the singular value decomposition is applied not to $\bSi$, but to the correlation matrix of precision $\bSi^{-1}$, before $\bth_i$ is introduced. Finally, the sign of $t_{i,k}$ is sampled with $z_{i,k}$ to accelerate the convergence. See Supplementary Materials for the complete algorithm.

\section{Numerical Study}\label{sec:num}

In this section, graphical and multivariate regression models are estimated in the framework of CSM and other robust models using simulated and real datasets. In all models and datasets, the prior for the model parameters is the independent, conditionally conjugate priors:  $\bbe\sim \mathrm{N}_q(\b_0, \B_0)$, $\bSi\sim {\rm IW}(\nu_0, \S_0)$ (inverse-Wishart distribution with $\mathbb{E}[\bSi^{-1}] = \nu _0 \S_0$ ), and $\phi\sim {\rm Beta}(a_0, b_0)$, where we set $\b_0=\zero$ and $\B_0=10^2 \I$ ($\I$ is the identity matrix); $\nu_0=p$ and $\S_0=(\nu_0+p+1)^{-1}\I$; and $(a_0,b_0)=(0.05,1)$.

\subsection{Simulation study 1: graphical modeling}
\label{sec:sim1}

First, we investigate the numerical performance of the proposed method in the context of Gaussian graphical modeling. 
We set $n=200$ (sample size), $p=12$ (dimension) and considered the true precision matrix $\bOm = \bSi^{-1}$ as $\Omega_{jj}=1$, $\Omega_{jk}=0.5$ for $|j-k|=1$, $\Omega_{jk}=0.25$ for $|j-k|=2$ and $\Omega_{jk}=0$ otherwise.  
The genuine (uncontaminated) data $\y_i^{\ast}$ was generated from $\mathrm{N}_p(\zero, \bOm^{-1})$.
To generate a contaminated observation $\y_i$, we first generated $z_i^{\ast}$ from the Bernoulli distribution with $\mathbb{P}[z_i^{\ast}=1] = \phi^{\ast}$, set $\y_i=\y_i^{\ast}$ if $z_i^{\ast}=0$, and if $z_i^{\ast}=1$, added 10 to the randomly-selected $c_i$ elements of $y_i^{\ast}$ before setting $\y_i=\y_i^{\ast}$. 
Regarding the choice of $c_i$, we adopted $c_i=1$ (Scenario 1), $c_i=2$ (Scenario 2) and $c_i= 1+c_i^{\ast}$ (Scenario 3), where $c_i^{\ast}$ was generated from the Poisson distribution with mean $1$. 
Note that $\phi^{\ast}$ does not represent the outlier ratio. 
For example, under Scenario 1 with $\phi^{\ast}=0.6$, the expected numbers of outliers is $n\phi^{\ast}(=120)$ among $np(=2400)$ observations, so the actual outlier ratio is $5\%$.

For the generated dataset, we applied the robust graphical models based on the proposed CSM model with $\y_i\sim \mathrm{N}_p(\zero, \T_i \bOm^{-1} \T_i)$. 
Additionally, we employed a one-sided log-Pareto distribution supported only on the positive half-line for the mixing distribution of $\t_i$ (Model (I) in Section~\ref{sec:neccesity-symmetry}), denoted by PCS. 
For comparison, we used the standard Bayesian Gaussian graphical modeling (GG), $\y_i\sim \mathrm{N}_p(\zero, \bOm)$, which would be highly sensitive to outliers. 
Moreover, we adopted Robust graphical modeling with the classical multivariate $t$-distribution (CT) ($\V(\bSi,t_i)=t_i^2\bSi$ and $t_i^2$ is inverse-gamma distributed), alternative $t$-distribution (AT) by \cite{finegold2011robust}, and Dirichlet $t$-distribution (DT) by \cite{finegold2014robust}.
The degree of freedom in the Student's $t$ distribution had the uniform prior on $[1,100]$ in CT and AT, or was fixed to $3$ in DT, as practiced in \cite{finegold2014robust}. 
We also applied robust Gaussian graphical modeling with $\gamma$-divergence (GD) by \cite{onizuka2023robust} with two choices of tuning parameters: $\gamma=0.05$ (GD1) and $\gamma=0.1$ (GD2).

For each model, we generated $1000$ posterior samples after discarding the first $1000$ samples as burn-in, computed the posterior means, and computed element-wise $95\%$ credible intervals for each (diagonal and upper-triangular) entry of $\bOm$. 
The performance of the listed models was evaluated using the mean squared error (MSE) of the posterior means, coverage probability (CP), and average interval length (AL) of the credible intervals, based on 200 replications. 

Table~\ref{tab:sim} reports the three performance measures averaged over all the elements of $\bOm$. 
Overall, the proposed CSM model exhibits the best performance in terms of MSE, and the degree of improvement became more significant when the proportion of outliers increased. 
Robust approaches based on the $t$-distribution (CT, AT, and DT) tend to provide better performance than the Gaussian model (GG), but the MSE is much larger than that of the CSM. This observation is consistent with the discussion in Sections~\ref{sec:necessity-tail}, in that the $t$-distributions are generally not posterior robust. 
The $\gamma$-divergence approaches (GD1 and GD2) have smaller MSEs than GG in many scenarios, exemplifying its robustness. However, GD1 has much larger MSEs than GD2, implying the sensitivity of the model performance to the choice of $\gamma$.
Regarding posterior uncertainty quantification, it is observed that the CPs of the proposed CSM are around the nominal level, even under heavy contamination.
While AT, DT, and GD2 provide reasonable CP values, the ALs of these methods are much larger than those of the CSM, indicating the efficiency of the CSM in full posterior inference.

A comparison of the proposed CSM and PCS shows that CSM significantly outperforms PCS in both point and interval estimations. 
In particular, the CPs of PCS are considerably smaller than the nominal level when we have multiple outliers in an observation vector $\y_i$ (in Scenarios 2 and 3). 
This numerical result confirms the theoretical properties in Section~\ref{sec:theory} on the (non-)robustness of the CSM (PCS) and supports the use of symmetric (real-valued) latent variables in the presence of outliers.

\begin{table}[htb!]
\caption{The values of MSE, CP and AL, averaged over 200 Monte Carlo replications, under Gaussian graphical models with $n=200$ and $p=12$. 
Values of MSE and AL are multiplied by 100. The smallest values of MSE are highlighted in bold.}
\label{tab:sim}
\centering

\bigskip
\begin{tabular}{ccccccccccccccc}
\hline
\multicolumn{2}{r}{Scenario} & $\phi^{\ast}$ &  & CSM & PCS & GG & CT & AT & DT & GD1 & GD2 \\
\hline
 & 1 & 0 &  & 0.54 & 0.54 & 0.54 & 0.53 & 0.50 & 13.15 & {\bf 0.44} & 0.46 \\
 & 1 & 0.2 &  & {\bf 0.51} & 0.61 & 6.68 & 2.39 & 2.59 & 4.91 & 2.89 & 0.64 \\
 & 1 & 0.4 &  & {\bf 0.55} & 0.72 & 9.06 & 4.67 & 14.12 & 2.63 & 7.07 & 1.53 \\
 & 1 & 0.6 &  & {\bf 0.56} & 0.80 & 10.10 & 6.85 & 15.12 & 3.02 & 9.26 & 5.24 \\
MSE & 2 & 0.2 &  & {\bf 0.48} & 0.79 & 9.00 & 2.92 & 14.29 & 4.04 & 0.71 & 0.55 \\
 & 2 & 0.4 &  & {\bf 0.58} & 1.19 & 10.74 & 5.26 & 16.25 & 3.08 & 8.26 & 0.95 \\
 & 2 & 0.6 &  & {\bf 0.64} & 1.55 & 11.42 & 7.38 & 19.01 & 4.46 & 10.97 & 2.25 \\
 & 3 & 0.4 &  & {\bf 0.61} & 1.18 & 10.65 & 5.19 & 16.51 & 3.16 & 7.00 & 0.90 \\
 & 3 & 0.6 &  & {\bf 0.67} & 1.72 & 11.38 & 7.39 & 18.68 & 4.25 & 10.75 & 1.95 \\
 \hline
 & 1 & 0 &  & 92.9 & 92.9 & 98.6 & 98.7 & 98.6 & 99.3 & 93.0 & 93.5 \\
 & 1 & 0.2 &  & 95.5 & 91.6 & 79.1 & 95.6 & 99.1 & 99.6 & 73.8 & 93.6 \\
 & 1 & 0.4 &  & 96.6 & 89.8 & 69.6 & 86.3 & 99.8 & 99.6 & 62.7 & 94.1 \\
 & 1 & 0.6 &  & 97.3 & 88.5 & 68.2 & 78.5 & 99.8 & 98.2 & 62.3 & 82.0 \\
CP ($\%$) & 2 & 0.2 &  & 97.0 & 88.0 & 70.8 & 94.9 & 99.7 & 99.9 & 92.7 & 95.0 \\
 & 2 & 0.4 &  & 97.9 & 79.5 & 68.2 & 85.4 & 99.7 & 99.9 & 64.2 & 94.3 \\
 & 2 & 0.6 &  & 98.6 & 77.8 & 68.2 & 78.6 & 99.8 & 96.4 & 58.3 & 95.5 \\
 & 3 & 0.4 &  & 97.7 & 79.2 & 68.2 & 85.6 & 99.9 & 99.9 & 63.1 & 94.1 \\
 & 3 & 0.6 &  & 98.6 & 69.1 & 68.2 & 78.6 & 99.8 & 97.1 & 56.8 & 96.0 \\
 \hline
 & 1 & 0 &  & 0.26 & 0.26 & 0.91 & 0.92 & 0.95 & 2.03 & 0.28 & 0.29 \\
 & 1 & 0.2 &  & 0.28 & 0.26 & 0.35 & 0.67 & 1.21 & 1.54 & 0.20 & 0.33 \\
 & 1 & 0.4 &  & 0.31 & 0.27 & 0.21 & 0.50 & 2.08 & 1.18 & 0.10 & 0.38 \\
 & 1 & 0.6 &  & 0.33 & 0.27 & 0.15 & 0.35 & 2.12 & 0.91 & 0.05 & 0.25 \\
AL & 2 & 0.2 &  & 0.30 & 0.27 & 0.26 & 0.68 & 2.08 & 1.47 & 0.32 & 0.33 \\
 & 2 & 0.4 &  & 0.35 & 0.27 & 0.14 & 0.51 & 2.16 & 1.05 & 0.13 & 0.42 \\
 & 2 & 0.6 &  & 0.39 & 0.27 & 0.10 & 0.37 & 2.25 & 0.75 & 0.04 & 0.57 \\
 & 3 & 0.4 &  & 0.35 & 0.27 & 0.15 & 0.51 & 2.18 & 1.11 & 0.15 & 0.40 \\
 & 3 & 0.6 &  & 0.40 & 0.27 & 0.10 & 0.36 & 2.24 & 0.80 & 0.04 & 0.55 \\
\hline
\end{tabular}
\end{table}

\subsection{Simulation study 2: multivariate regression}

Next, we demonstrate the proposed method using multivariate regression. 
We set $n=200$, $p=10$, and $q=10$; we also considered $p=5$, as reported in Supplementary Materials. 
For $i=1,\ldots,n$, the $p\times q$ covariate matrix $\X_i=(\x_{i1},\ldots,\x_{ip})^\top$ was constructed by generating $\x_{ik}\sim \mathrm{N}_q(\zero, \R(0.3))$ independently for $k=1,\ldots,p$, where $\R(\rho)$ is the $q\times q$ equi-correlation matrix with diagonal elements being 1 and off-diagonal elements being $\rho \in (0,1)$.
Then, a $p$-dimensional genuine (uncontaminated) response vector $\y_i^{\ast}$ was generated as 
$$
\y_i^{\ast}=\X_i\bbe + \bep_i, \ \ \ \bep_i\sim \mathrm{N}_p(\zero, \bSi), 
$$
where the true values of regression coefficients $\bbe=(\beta_1,\ldots,\beta_q)^\top$ were set as $\beta_1=0.5$, $\beta_2=1$, $\beta_3=-1$, and $\beta_6=0.5$, while the other coefficients were all zero. The true covariance matrix $\bSi$ was defined by $\Sigma_{k,k'}=(0.6)^{|k-k'|}$ for all $k,k'$. 
Then, we generated a contaminated observation $\y_i$ in the same manner as in Scenario 1 in Section~\ref{sec:sim1}; we generated binary $z_{i}^{\ast}$ with success probability $\phi^{\ast}$ and, if $z_{i}^{\ast}=1$, then we selected a single index $k \in \{ 1,\dots,p \}$ randomly and set $y_{i,k}=y_{i,k}^{\ast}+10$ and $y_{i,k'}=y_{i,k'}$ for $k'\not=k$. If $z_i^{\ast}=0$, then $\y_i=\y_i^{\ast}$. 
For the contamination ratio, we considered 7 cases: $\phi ^{\ast} \in \{0, 0.1,.\ldots,0.6\}$.

For each generated dataset, we applied the proposed CSM, standard Gaussian distribution (G), conventional multivariate $t$-distribution (CT), and an alternative $t$-distribution (AT). 
As in Section~\ref{sec:sim1}, 1000 posterior samples were stored after discarding the first 1000 samples as burn-in, to report posterior means and $95\%$ credible intervals of each element of $\bbe$ and $\bSi$. 
To evaluate the model performance, we computed the interval scores (IS) of credible intervals \citep{gneiting2007strictly}, in addition to MSE and CP. 
IS is the average length of the credible intervals but is further penalized when the interval fails to cover the true value; hence, a smaller IS is better. 
The MSEs, CPs and ISs were computed based on 200 replications, and averaged over elements in $\bbe$ and $\bSi$, respectively. 
The MSEs values are shown in Figure~\ref{fig:MLR-mse} with Monte Carlo error. 
While the four methods are comparable under $\phi^{\ast}=0$ (no contamination), as $\phi^{\ast}$ increases, the MSEs of both $\bbe$ and $\bSi$ of CT, AT and G increases significantly, whereas those for the posterior-robust CSM remained small. 
The CPs and ISs are shown in Figure~\ref{fig:MLR-cp}.
While the CPs of the CSM, CT, and G are around the nominal level, the ISs of the CSM are much smaller than those of the other models, indicating the efficiency of the CSM in covering true values without unnecessarily widening the intervals.

\begin{figure}[htbp!]
\centering
\includegraphics[width=\linewidth]{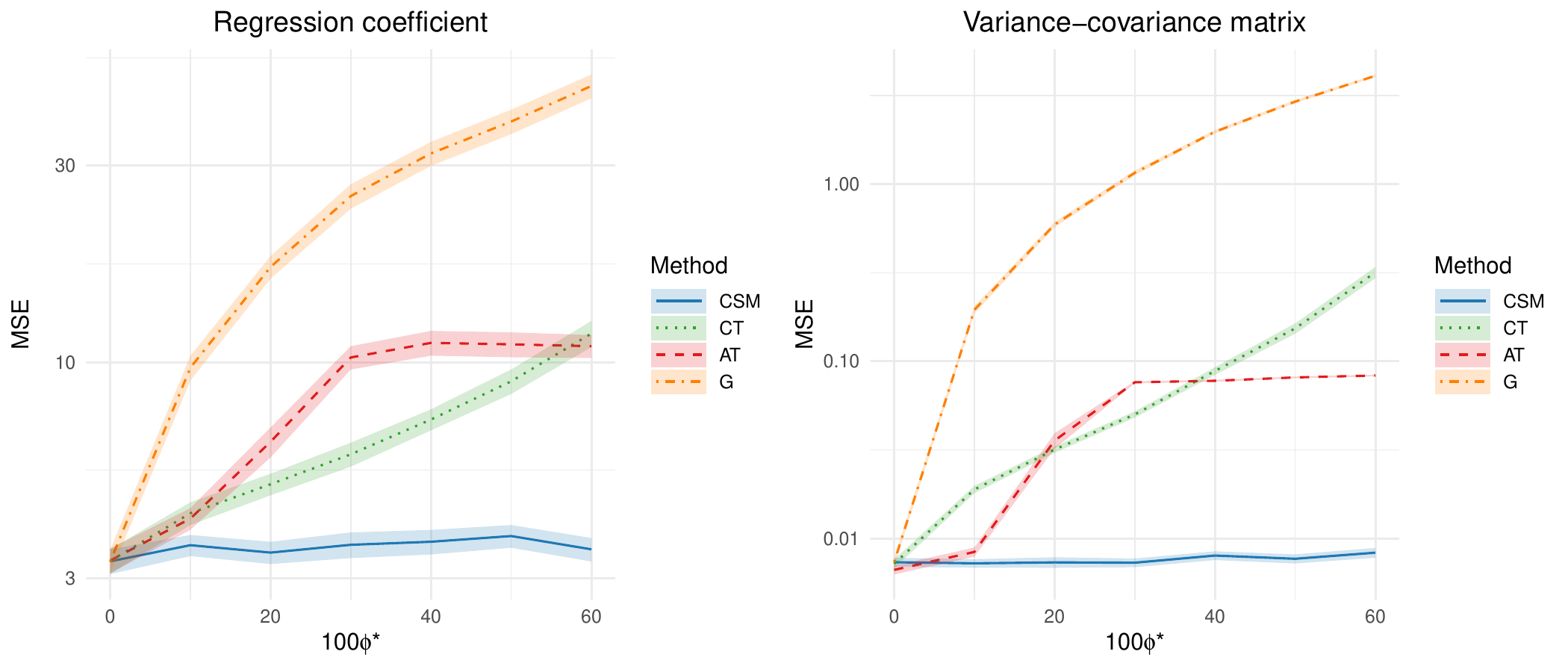}
\caption{ Mean squared errors (MSE) of posterior means of regression coefficients $\bbe$ (left) and variance-covariance matrix $\bSi$ (right) in multivariate linear regression with $p=10$. 
The shaded region represents estimated Monte Carlo errors (twice the standard deviation). 
}
\label{fig:MLR-mse}
\end{figure}

\begin{figure}[htbp!]
\centering
\includegraphics[width=\linewidth]{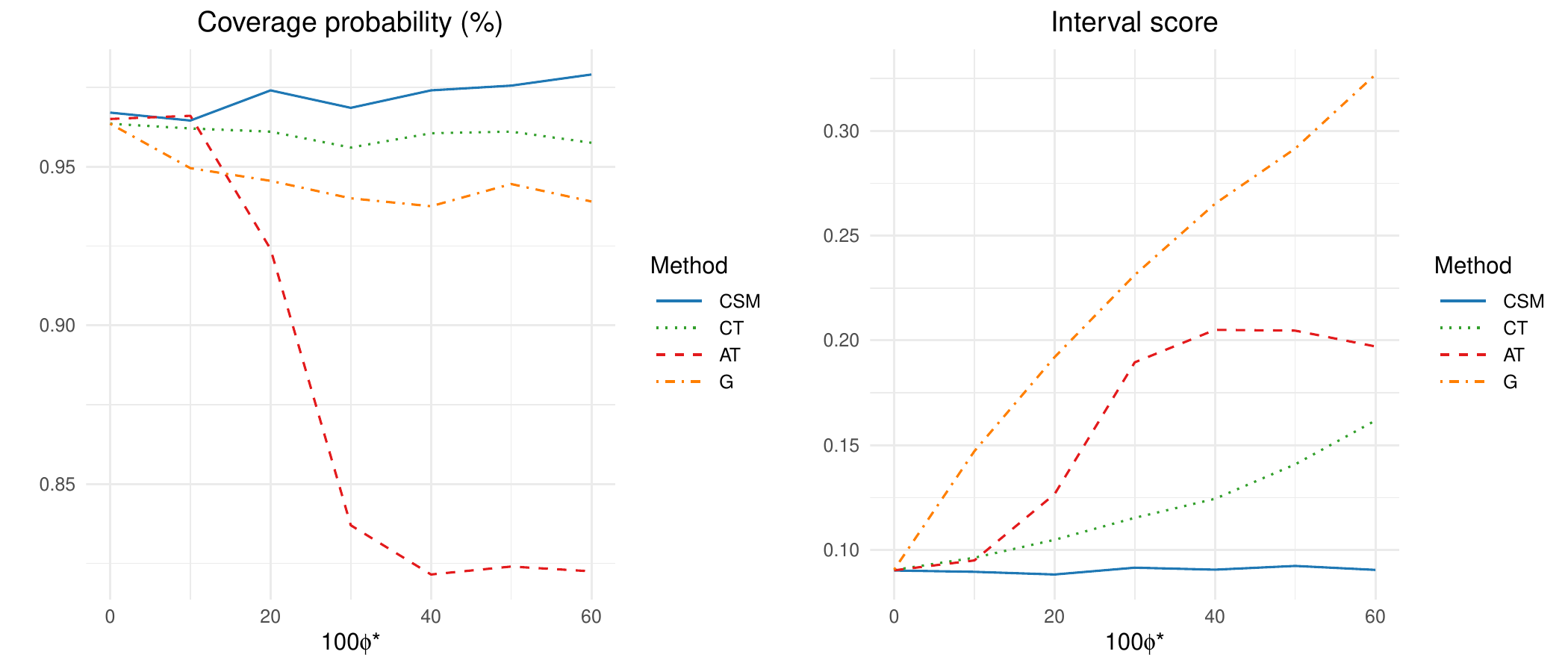}
\caption{ Coverage probability (CP) and interval score (IS) of $95\%$ credible intervals of $\bbe$ (left) and $\bSi$ (right) in multivariate linear regression with $p=10$.} 
\label{fig:MLR-cp}
\end{figure}

\subsection{Example: graphical modeling of gene expression data} \label{sec:gene}

We further demonstrate the usefulness of the CSM model by analyzing gene expression data, which has been used in studies on graphical models (for example, \citealt{finegold2011robust} and \citealt{onizuka2023robust}). 
We used the gene expression profiles of {\it E.~Coli} submitted by \cite{faith2007large}. Following \cite{yamada2014least} and \cite{hirose2017robust}, we selected $p=11$ features and normalized the raw values to construct the dataset of size $n=445$. 
We employed the four versions of the Gaussian graphical model ($\bbe = \zero$) as considered in Section~\ref{sec:sim1}: the CSM model, Bayesian $\gamma$-divergence with $\gamma=0.05$ (GD), alternative $t$-distribution (AT) and standard Gaussian distribution (G). 
We generated 5000 posterior samples after discarding the first 2000 burn-in samples.

Figure~\ref{fig:gene} summarizes the posterior means of precision matrix $\bOm = \bSi^{-1}$ in the form of the undirected graph, where the edge from node $k$ to $k'$ means the posterior mean of the $(k,k')$-entry of $\bOm$ whose $95\%$ credible interval does not cover $0$. 
The G model, which is sensitive to potential outliers, has many estimates that shrink to zero, and shows fewer edges in the graph. More edges become significant as the model becomes more robust; the model with the largest number of edges is GD, followed by CSM and AT. Considering the AT is not posterior robust, and that the GD tends to be ``overly robust,'' it is reasonable that the posterior result of CSM is in between those of these two models.

The CSM model is also useful for approximately evaluating the posterior probabilities of outliers, although this was not the primary objective of our study. 
Such a probability is naturally defined in the CSM as the posterior probability of having $z_{i,k}=1$, whereas we need to define a custom definition of the probability of outliers in the other models. 
For GD, we define the unnormalized weight as $w_i(\bOm)\propto {\rm N}_p(\y_i| \zero, \bOm^{-1})^{\gamma}$, being proportional to the power-discounted likelihood, then normalize them by setting $\sum_{i=1}^n w_i(\bOm)=n$. By construction, the normalized weight should be small if $\y_i$ contains outliers. 
Then, we define the posterior outlier probability as the probability of $w_i(\bOm)<0.01$.
For AT, we compute the posterior probability that the element-wise scale parameter is larger than $2$. 
In Figure~\ref{fig:gene-out}, we present the element-wise posterior outlier probabilities.
As expected from the model form, the CSM yielded almost zero probabilities for most entries, but significantly higher probabilities for several entries. This helps detect outliers in an element-wise manner. 
By contrast, the posterior probability under GD tends to be high for all elements of some columns. Thus the divergence-based posterior analysis could conclude that some vector $\y_i$ is entirely outlying, although it is most likely that many elements of $\y_i$ are non-outlying. 
The posterior probabilities under AT vary across the elements but differ from those of CSM in terms of sparseness. For non-outlying observations, the posterior probabilities under AT are small but not close to zero, leaving more uncertainty about potential outliers and resulting in a less precise parameter estimation. 
By considering elements whose posterior probability is greater than $0.5$, the CSM identified 71 outliers among $np (=4895)$ elements. Out of $n=445$ observational vectors (columns), 56 vectors had only a single outlier and 5 vectors had three outliers. 
The estimated number of outliers was also computed for each feature $k\in\{ 1,\dots, q \}$ and summarized in Table~\ref{tab:gene}.
This result indicates that the occurrence of outliers is not uniform across all features, but varies substantially depending on the dimension. 
According to the obtained result, some features contain multiple outlying elements, whereas others remain almost entirely outlier-free. 
Such heterogeneity highlights the importance of adopting an element-wise approach for outlier detection, as considered in CSM.

\begin{figure}[htbp!]
\centering
\includegraphics[width=0.45\linewidth]{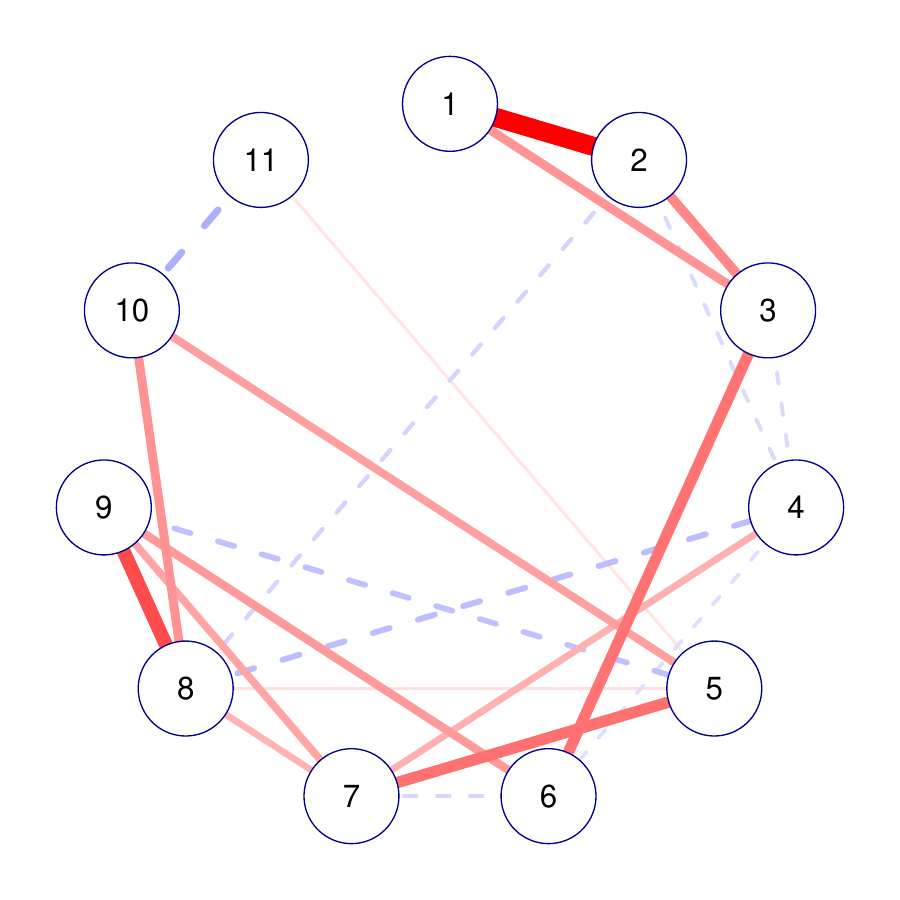}
\includegraphics[width=0.45\linewidth]{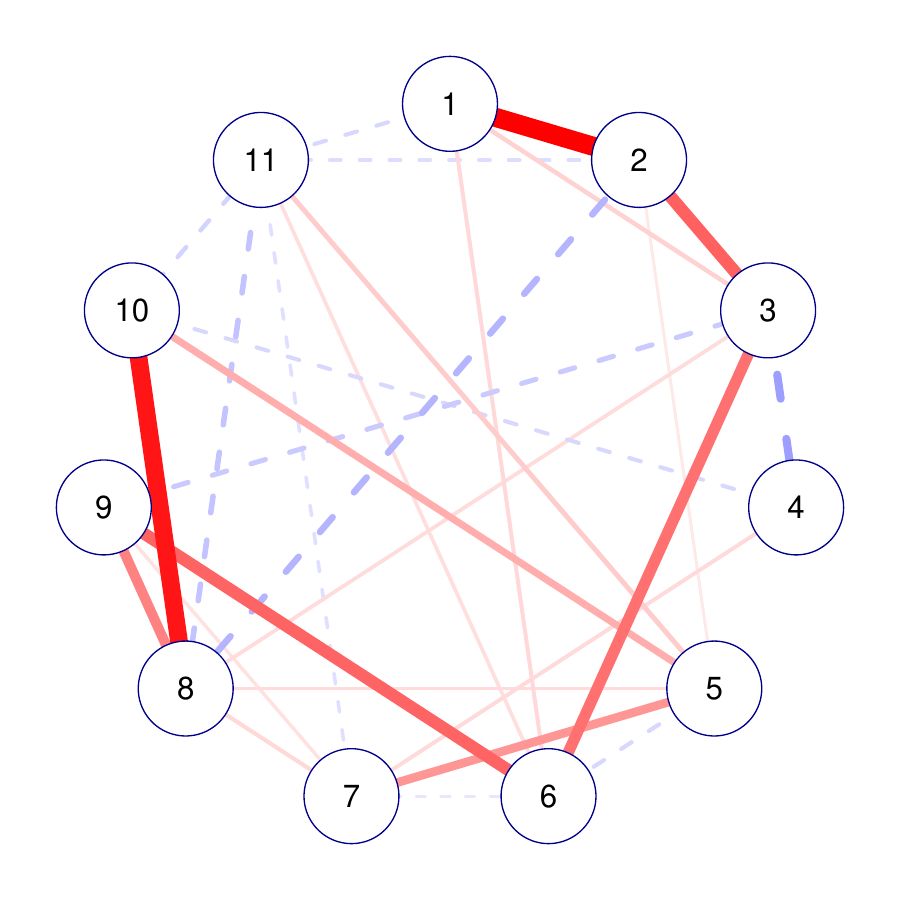}
\includegraphics[width=0.45\linewidth]{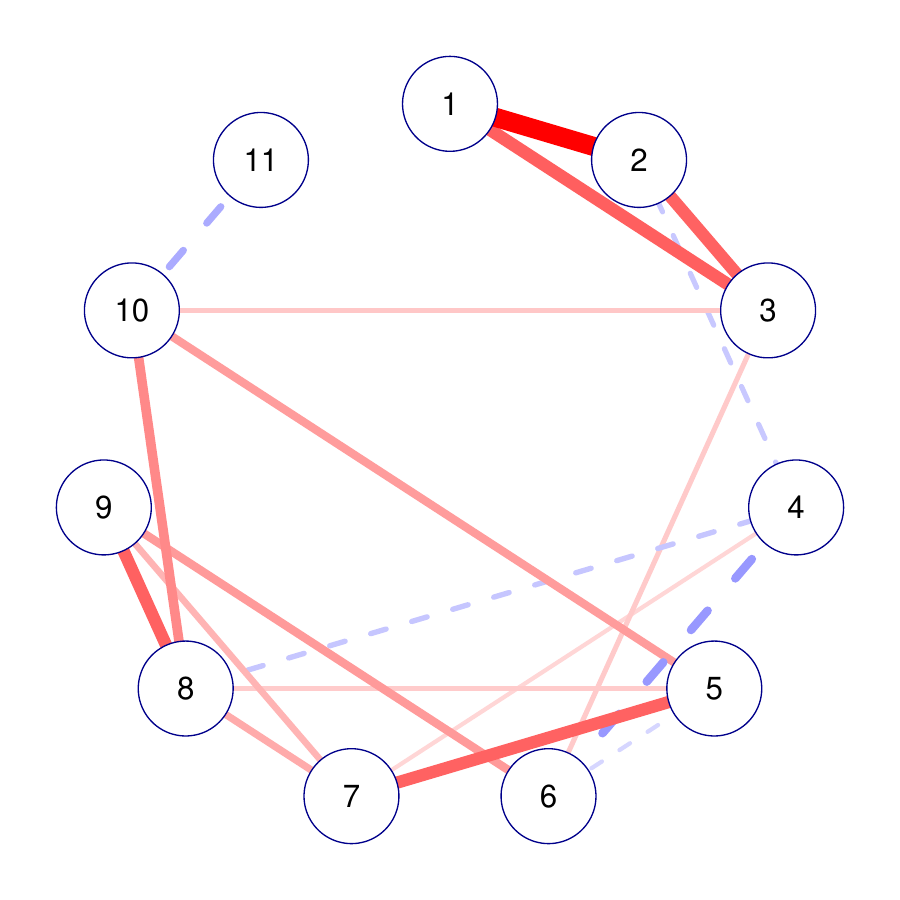}
\includegraphics[width=0.45\linewidth]{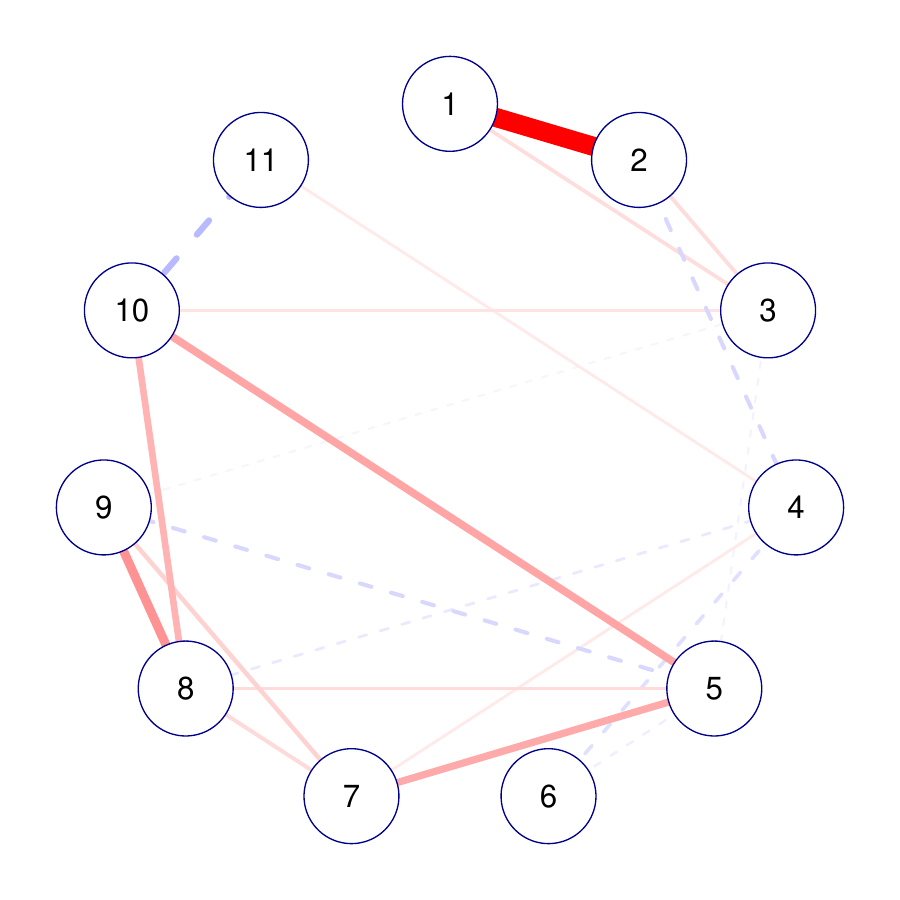}
\caption{ Detected non-null edges obtained from graphical modeling based on the correlation-intact sandwich mixture (CSM; top left), Bayesian $\gamma$-divergence (GD; top right), alternative $t$-distribution (AT; bottom-left) and multivariate normal distribution (G; bottom-right).
The dotted blue and solid red line correspond to positive and negative values, respectively. 
} 
\label{fig:gene}
\end{figure}

\begin{figure}[htbp!]
\centering
\includegraphics[width=\linewidth]{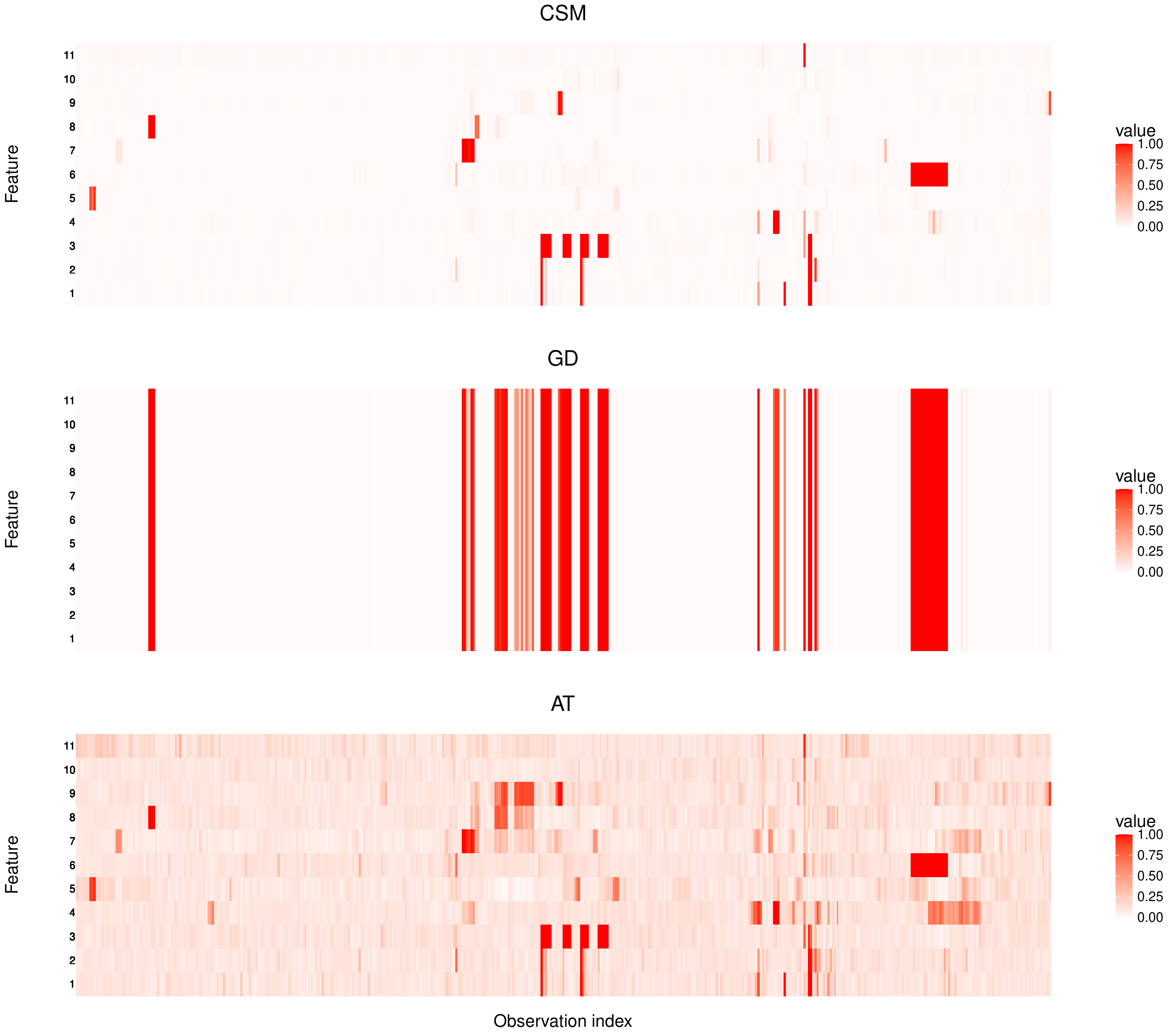}
\caption{ Posterior probability of outliers for each element of $p$-dimensional observations, obtained from CSM, GD and AT models } 
\label{fig:gene-out}
\end{figure}

\begin{table}[htb!]
\caption{The number of outliers in each feature. For each feature $k\in \{ 1,\dots ,p \}$ ($p=11$), we counted the number of $i \in \{ 1, \dots, n \}$ ($n=445$) that satisfies $\mathbb{P}[ z_{i,k} = 1 | \mathrm{data} ] > 0.5$.}
\label{tab:gene}
\centering

\bigskip
\begin{tabular}{ccccccccccccccc}
\hline
Feature & & 1 & 2 & 3 & 4 & 5 & 6 & 7 & 8 & 9 & 10 & 11 & Total\\
\hline
\verb+#+Outliers && 7 & 6 & 20 & 3 & 3 & 17 & 6 & 5 & 3 & 0 & 1 & 71\\
\hline
\end{tabular}
\end{table}

\section{Discussion}
Our theoretical framework does not allow for hierarchical and/or non-Gaussian models, so it would be valuable to extend our theory to posterior robustness of hierarchical and generalized linear models. 
In particular, a univariate hierarchical model considered in \cite{hamura2022log} can be viewed as a multivariate model with sample size $1$, and is partially discussed as the additive error model in Section~\ref{sec:others}. The spectrum of univariate and multivariate models is worth exploring further in terms of posterior robustness. 

To apply the proposed method, we revisited the graphical model in \cite{finegold2011robust} and observed clear differences in the posterior plots. Similar re-examination of analyses in the literature is promising, especially in a field where multivariate $t$ models have been employed, including dynamic stochastic volatility models \citep{choy2014bivariate}.

\section*{Appendix: Simulation from the unfolded log-Pareto distribution} 

To simulate from the unfolded log-Pareto distribution in (\ref{eq:lpdens}), one can utilize the integral representation of the density, 
\begin{equation*}
\pi_{\rm LP} (t;\gamma ) = \int_{0}^{\infty } {1 \over 2} \mathbbm{1}[ |t| > 1 ] {w \over |t|^{1 + w}} {\rm{Ga}} (w | \ga , 1) dw,
\end{equation*}
for all $t \in \mathbb{R} \setminus \{ 0 \} $. That is, the unfolded log-Pareto distribution is the shape mixture of unfolded Pareto distributions. To simulate a random number $t$ from this distribution, generate $w \sim \mathrm{Ga}(\ga,1)$, generate $\upsilon$ from the uniform distribution on $[0,1]$, set $|t| = (1-\upsilon)^{-1/w}$ (this $|t|$ is Pareto distributed), and take its negation with probability $1/2$.

\section*{Acknowledgement}
We thank Jyotishka Datta, Daichi Hiraki and Yasuhiro Omori for their comments. This work is supported by JSPS KAKENHI grant numbers: 25K21163, 22K13374, 24K21420 and 25H00546. 

\vspace{1cm}
\bibliographystyle{chicago}
\bibliography{refs}

\begin{thebibliography}{}

\bibitem[\protect\citeauthoryear{Ameen and Harrison}{Ameen and
  Harrison}{1984}]{ameen1984discount}
Ameen, J. and P.~Harrison (1984).
\newblock Discount weighted estimation.
\newblock {\em Journal of Forecasting\/}~{\em 3\/}(3), 285--296.

\bibitem[\protect\citeauthoryear{Andrade and O’Hagan}{Andrade and
  O’Hagan}{2006}]{andrade2006bayesian}
Andrade, J. and A.~O’Hagan (2006).
\newblock Bayesian robustness modeling using regularly varying distributions.
\newblock {\em Bayesian Analysis\/}~{\em 1\/}(1), 169--188.

\bibitem[\protect\citeauthoryear{Andrade and O'Hagan}{Andrade and
  O'Hagan}{2011}]{andrade2011bayesian}
Andrade, J. A.~A. and A.~O'Hagan (2011).
\newblock Bayesian robustness modelling of location and scale parameters.
\newblock {\em Scandinavian Journal of Statistics\/}~{\em 38\/}(4), 691--711.

\bibitem[\protect\citeauthoryear{Box and Tiao}{Box and
  Tiao}{1968}]{box1968bayesian}
Box, G.~E. and G.~C. Tiao (1968).
\newblock A {B}ayesian approach to some outlier problems.
\newblock {\em Biometrika\/}~{\em 55\/}(1), 119--129.

\bibitem[\protect\citeauthoryear{Choy, Chen, and Lin}{Choy
  et~al.}{2014}]{choy2014bivariate}
Choy, S.~B., C.~W. Chen, and E.~M. Lin (2014).
\newblock Bivariate asymmetric {GARCH} models with heavy tails and dynamic
  conditional correlations.
\newblock {\em Quantitative Finance\/}~{\em 14\/}(7), 1297--1313.

\bibitem[\protect\citeauthoryear{Cormann and Reiss}{Cormann and
  Reiss}{2009}]{cormann2009generalizing}
Cormann, U. and R.-D. Reiss (2009).
\newblock Generalizing the {P}areto to the log-{P}areto model and statistical
  inference.
\newblock {\em Extremes\/}~{\em 12\/}(1), 93--105.

\bibitem[\protect\citeauthoryear{da~Silva, Prates, and
  Gon{\c{c}}alves}{da~Silva et~al.}{2020}]{da2020bayesian}
da~Silva, N.~B., M.~O. Prates, and F.~B. Gon{\c{c}}alves (2020).
\newblock Bayesian linear regression models with flexible error distributions.
\newblock {\em Journal of Statistical Computation and Simulation\/}~{\em
  90\/}(14), 2571--2591.

\bibitem[\protect\citeauthoryear{Desgagn{\'e}}{Desgagn{\'e}}{2013}]{desgagne2013full}
Desgagn{\'e}, A. (2013).
\newblock Full robustness in {B}ayesian modelling of a scale parameter.
\newblock {\em Bayesian Analysis\/}~{\em 8\/}(1), 187--220.

\bibitem[\protect\citeauthoryear{Desgagn{\'e}}{Desgagn{\'e}}{2015}]{desgagne2015robustness}
Desgagn{\'e}, A. (2015).
\newblock Robustness to outliers in location--scale parameter model using
  log-regularly varying distributions.
\newblock {\em The Annals of Statistics\/}~{\em 43\/}(4), 1568.

\bibitem[\protect\citeauthoryear{Faith, Hayete, Thaden, Mogno, Wierzbowski,
  Cottarel, Kasif, Collins, and Gardner}{Faith et~al.}{2007}]{faith2007large}
Faith, J.~J., B.~Hayete, J.~T. Thaden, I.~Mogno, J.~Wierzbowski, G.~Cottarel,
  S.~Kasif, J.~J. Collins, and T.~S. Gardner (2007).
\newblock Large-scale mapping and validation of escherichia coli
  transcriptional regulation from a compendium of expression profiles.
\newblock {\em PLoS biology\/}~{\em 5\/}(1), e8.

\bibitem[\protect\citeauthoryear{Finegold and Drton}{Finegold and
  Drton}{2011}]{finegold2011robust}
Finegold, M. and M.~Drton (2011).
\newblock Robust graphical modeling of gene networks using classical and
  alternative $t$-distributions.
\newblock {\em The Annals of Applied Statistics\/}, 1057--1080.

\bibitem[\protect\citeauthoryear{Finegold and Drton}{Finegold and
  Drton}{2014}]{finegold2014robust}
Finegold, M. and M.~Drton (2014).
\newblock Robust {B}ayesian graphical modeling using {D}irichlet
  $t$-distributions.
\newblock {\em Bayesian Analysis\/}~{\em 9\/}(3), 521--550.

\bibitem[\protect\citeauthoryear{Gagnon, Desgagn{\'e}, and B{\'e}dard}{Gagnon
  et~al.}{2020}]{gagnon2020new}
Gagnon, P., A.~Desgagn{\'e}, and M.~B{\'e}dard (2020).
\newblock A new {B}ayesian approach to robustness against outliers in linear
  regression.
\newblock {\em Bayesian Analysis\/}~{\em 15\/}(2), 389--414.

\bibitem[\protect\citeauthoryear{Gagnon and Hayashi}{Gagnon and
  Hayashi}{2023}]{gagnon2023theoretical}
Gagnon, P. and Y.~Hayashi (2023).
\newblock Theoretical properties of {B}ayesian {S}tudent-$t$ linear regression.
\newblock {\em Statistics \& Probability Letters\/}~{\em 193}, 109693.

\bibitem[\protect\citeauthoryear{Gagnon and Wang}{Gagnon and
  Wang}{2024}]{gagnon2024robust}
Gagnon, P. and Y.~Wang (2024).
\newblock Robust heavy-tailed versions of generalized linear models with
  applications in actuarial science.
\newblock {\em Computational Statistics \& Data Analysis\/}~{\em 194}, 107920.

\bibitem[\protect\citeauthoryear{Gneiting and Raftery}{Gneiting and
  Raftery}{2007}]{gneiting2007strictly}
Gneiting, T. and A.~E. Raftery (2007).
\newblock Strictly proper scoring rules, prediction, and estimation.
\newblock {\em Journal of the American statistical Association\/}~{\em
  102\/}(477), 359--378.

\bibitem[\protect\citeauthoryear{Griffin and Hoff}{Griffin and
  Hoff}{2024}]{griffin2024structured}
Griffin, M. and P.~D. Hoff (2024).
\newblock Structured shrinkage priors.
\newblock {\em Journal of Computational and Graphical Statistics\/}~{\em
  33\/}(1), 1--14.

\bibitem[\protect\citeauthoryear{Hamura, Irie, and Sugasawa}{Hamura
  et~al.}{2022}]{hamura2022log}
Hamura, Y., K.~Irie, and S.~Sugasawa (2022).
\newblock Log-regularly varying scale mixture of normals for robust regression.
\newblock {\em Computational Statistics \& Data Analysis\/}~{\em 173}, 107517.

\bibitem[\protect\citeauthoryear{Hamura, Irie, and Sugasawa}{Hamura
  et~al.}{2024}]{hamura2024posterior}
Hamura, Y., K.~Irie, and S.~Sugasawa (2024).
\newblock Posterior robustness with milder conditions: Contamination models
  revisited.
\newblock {\em Statistics \& Probability Letters\/}~{\em 210}, 110130.

\bibitem[\protect\citeauthoryear{Hamura, Irie, and Sugasawa}{Hamura
  et~al.}{2025}]{hamura2024robust}
Hamura, Y., K.~Irie, and S.~Sugasawa (2025).
\newblock Robust {B}ayesian modeling of counts with zero inflation and
  outliers: Theoretical robustness and efficient computation.
\newblock {\em Journal of the American Statistical Association\/}, Published
  Online.

\bibitem[\protect\citeauthoryear{Hirose, Fujisawa, and Sese}{Hirose
  et~al.}{2017}]{hirose2017robust}
Hirose, K., H.~Fujisawa, and J.~Sese (2017).
\newblock Robust sparse gaussian graphical modeling.
\newblock {\em Journal of Multivariate Analysis\/}~{\em 161}, 172--190.

\bibitem[\protect\citeauthoryear{Jewson, Smith, and Holmes}{Jewson
  et~al.}{2018}]{jewson2018principles}
Jewson, J., J.~Q. Smith, and C.~Holmes (2018).
\newblock Principles of {B}ayesian inference using general divergence criteria.
\newblock {\em Entropy\/}~{\em 20\/}(6), 442.

\bibitem[\protect\citeauthoryear{Johnson and Rossell}{Johnson and
  Rossell}{2010}]{johnson2010use}
Johnson, V.~E. and D.~Rossell (2010).
\newblock On the use of non-local prior densities in bayesian hypothesis tests.
\newblock {\em Journal of the Royal Statistical Society Series B: Statistical
  Methodology\/}~{\em 72\/}(2), 143--170.

\bibitem[\protect\citeauthoryear{Katayama, Fujisawa, and Drton}{Katayama
  et~al.}{2018}]{katayama2018robust}
Katayama, S., H.~Fujisawa, and M.~Drton (2018).
\newblock Robust and sparse {G}aussian graphical modelling under cell-wise
  contamination.
\newblock {\em Stat\/}~{\em 7\/}(1), e181.

\bibitem[\protect\citeauthoryear{Kubokawa}{Kubokawa}{1994}]{kubokawa1994unified}
Kubokawa, T. (1994).
\newblock A unified approach to improving equivariant estimators.
\newblock {\em The Annals of Statistics\/}, 290--299.

\bibitem[\protect\citeauthoryear{Kubokawa and Saleh}{Kubokawa and
  Saleh}{1994}]{ks1994}
Kubokawa, T. and A.~K. M.~E. Saleh (1994).
\newblock Estimation of location and scale parameters under order restrictions.
\newblock {\em Journal of Statistical Research\/}~{\em 28}, 41--51.

\bibitem[\protect\citeauthoryear{Lange, Little, and Taylor}{Lange
  et~al.}{1989}]{lange1989robust}
Lange, K.~L., R.~J. Little, and J.~M. Taylor (1989).
\newblock Robust statistical modeling using the $t$ distribution.
\newblock {\em Journal of the American Statistical Association\/}~{\em
  84\/}(408), 881--896.

\bibitem[\protect\citeauthoryear{Matsubara, Knoblauch, Briol, and
  Oates}{Matsubara et~al.}{2022}]{matsubara2022robust}
Matsubara, T., J.~Knoblauch, F.-X. Briol, and C.~J. Oates (2022).
\newblock Robust generalised {B}ayesian inference for intractable likelihoods.
\newblock {\em Journal of the Royal Statistical Society Series B: Statistical
  Methodology\/}~{\em 84\/}(3), 997--1022.

\bibitem[\protect\citeauthoryear{Miller and Dunson}{Miller and
  Dunson}{2019}]{miller2019robust}
Miller, J.~W. and D.~B. Dunson (2019).
\newblock Robust {B}ayesian inference via coarsening.
\newblock {\em Journal of the American Statistical Association\/}~{\em
  114\/}(527), 1113--1125.

\bibitem[\protect\citeauthoryear{Okudo and Sei}{Okudo and
  Sei}{2025}]{okudo2025shrinkage}
Okudo, M. and T.~Sei (2025).
\newblock Shrinkage priors for circulant correlation structure models.
\newblock {\em Information Geometry\/}, Published Online.

\bibitem[\protect\citeauthoryear{Onizuka and Hashimoto}{Onizuka and
  Hashimoto}{2025}]{onizuka2023robust}
Onizuka, T. and S.~Hashimoto (2025).
\newblock Robust {B}ayesian graphical modeling using $\gamma$-divergence.
\newblock {\em Journal of Multivariate Analysis\/}, 105--461.

\bibitem[\protect\citeauthoryear{Pakman and Paninski}{Pakman and
  Paninski}{2014}]{pakman2014exact}
Pakman, A. and L.~Paninski (2014).
\newblock Exact {H}amiltonian {M}onte {C}arlo for truncated multivariate
  {G}aussians.
\newblock {\em Journal of Computational and Graphical Statistics\/}~{\em
  23\/}(2), 518--542.

\bibitem[\protect\citeauthoryear{Raymaekers and Rousseeuw}{Raymaekers and
  Rousseeuw}{2024}]{raymaekers2024cellwise}
Raymaekers, J. and P.~J. Rousseeuw (2024).
\newblock The cellwise minimum covariance determinant estimator.
\newblock {\em Journal of the American Statistical Association\/}~{\em
  119\/}(548), 2610--2621.

\bibitem[\protect\citeauthoryear{Tak, Ellis, and Ghosh}{Tak
  et~al.}{2019}]{tak2019robust}
Tak, H., J.~A. Ellis, and S.~K. Ghosh (2019).
\newblock Robust and accurate inference via a mixture of {G}aussian and
  {S}tudent’s $t$ errors.
\newblock {\em Journal of Computational and Graphical Statistics\/}~{\em
  28\/}(2), 415--426.

\bibitem[\protect\citeauthoryear{Van~Dyk and Park}{Van~Dyk and
  Park}{2008}]{van2008partially}
Van~Dyk, D.~A. and T.~Park (2008).
\newblock Partially collapsed {G}ibbs samplers: Theory and methods.
\newblock {\em Journal of the American Statistical Association\/}~{\em
  103\/}(482), 790--796.

\bibitem[\protect\citeauthoryear{West}{West}{1984}]{west1984outlier}
West, M. (1984).
\newblock Outlier models and prior distributions in {B}ayesian linear
  regression.
\newblock {\em Journal of the Royal Statistical Society Series B: Statistical
  Methodology\/}~{\em 46\/}(3), 431--439.

\bibitem[\protect\citeauthoryear{Yamada, Sugiyama, and Sese}{Yamada
  et~al.}{2014}]{yamada2014least}
Yamada, M., M.~Sugiyama, and J.~Sese (2014).
\newblock Least-squares independence regression for non-linear causal inference
  under non-gaussian noise.
\newblock {\em Machine learning\/}~{\em 96\/}(3), 249--267.

\bibitem[\protect\citeauthoryear{Yonekura and Sugasawa}{Yonekura and
  Sugasawa}{2023}]{yonekura2023adaptation}
Yonekura, S. and S.~Sugasawa (2023).
\newblock Adaptation of the tuning parameter in general {B}ayesian inference
  with robust divergence.
\newblock {\em Statistics and Computing\/}~{\em 33\/}(2), 39.

\bibitem[\protect\citeauthoryear{Zheng and Duan}{Zheng and
  Duan}{2025}]{zheng2025gibbs}
Zheng, Y. and L.~L. Duan (2025).
\newblock Gibbs sampling using anti-correlation {G}aussian data augmentation,
  with applications to {L}1-ball-type models.
\newblock {\em Journal of Computational and Graphical Statistics\/}, Published
  Online.

\end{thebibliography}


\begin{thebibliography}{}




\harvarditem[Choy et~al.]{Choy, Chen \harvardand\ Lin}{2014}{choy2014bivariate}
Choy, S.~B., Chen, C.~W. \harvardand\ Lin, E.~M.  \harvardyearleft
  2014\harvardyearright , `Bivariate asymmetric {GARCH} models with heavy tails
  and dynamic conditional correlations', {\em Quantitative Finance} {\bf
  14}(7),~1297--1313.

\harvarditem{Griffin \harvardand\ Hoff}{2024}{griffin2024structured}
Griffin, M. \harvardand\ Hoff, P.~D.  \harvardyearleft 2024\harvardyearright ,
  `Structured shrinkage priors', {\em Journal of Computational and Graphical
  Statistics} {\bf 33}(1),~1--14.

\harvarditem[Hamura et~al.]{Hamura, Irie \harvardand\
  Sugasawa}{2022}{hamura2022log}
Hamura, Y., Irie, K. \harvardand\ Sugasawa, S.  \harvardyearleft
  2022\harvardyearright , `Log-regularly varying scale mixture of normals for
  robust regression', {\em Computational Statistics \& Data Analysis} {\bf
  173},~107517.

\harvarditem[He et~al.]{He, Polson \harvardand\ Xu}{2019}{he2019data}
He, J., Polson, N.~G. \harvardand\ Xu, J.  \harvardyearleft
  2019\harvardyearright , `Data augementation with polya inverse gamma', {\em
  arXiv preprint arXiv:1905.12141} .

\harvarditem{Kubokawa}{1994}{kubokawa1994unified}
Kubokawa, T.  \harvardyearleft 1994\harvardyearright , `A unified approach to
  improving equivariant estimators', {\em The Annals of Statistics}
  pp.~290--299.

\harvarditem{Kubokawa \harvardand\ Saleh}{1994}{ks1994}
Kubokawa, T. \harvardand\ Saleh, A. K. M.~E.  \harvardyearleft
  1994\harvardyearright , `Estimation of location and scale parameters under
  order restrictions', {\em Journal of Statistical Research} {\bf 28},~41--51.

\harvarditem{Pakman \harvardand\ Paninski}{2014}{pakman2014exact}
Pakman, A. \harvardand\ Paninski, L.  \harvardyearleft 2014\harvardyearright ,
  `Exact {H}amiltonian {M}onte {C}arlo for truncated multivariate {G}aussians',
  {\em Journal of Computational and Graphical Statistics} {\bf
  23}(2),~518--542.


\end{thebibliography}

\newpage
\setcounter{equation}{0}
\renewcommand{\theequation}{S\arabic{equation}}
\setcounter{section}{0}
\renewcommand{\thesection}{S\arabic{section}}
\setcounter{table}{0}
\renewcommand{\thetable}{S\arabic{table}}
\setcounter{figure}{0}
\renewcommand{\thefigure}{S\arabic{figure}}
\setcounter{lem}{0}
\renewcommand{\thelem}{S\arabic{lem}}

\setcounter{page}{1}

\begin{center}
{\LARGE {\bf 
Supplementary Materials of: \\``Outlier-Robust Bayesian Multivariate Analysis with Correlation-Intact Sandwich Mixture''
}}
\end{center}

\bigskip
In the Supplementary Materials, we provide the proof of the main theorems, derivation of the proposed Markov Chain Monte Carlo (MCMC) algorithm, and collect additional numerical and graphical results. To simplify the proof, we introduce additional notations in Section~\ref{sm:notation}. The proofs of Theorems~1 and 2 are given in Sections~\ref{sm:thm1} and \ref{sm:thm2}, respectively. For mathematical rigors, we provide the proof for the results in Sections~3.3, 3.4 and 3.5 of the main text in Section~\ref{sm:misc}. The improved MCMC algorithm is derived and summarized in Section~\ref{sm:mcmc}. Additional numerical results are listed in Section~\ref{sm:numerical}. The details of Figures 1 and 2 in the main text for reproducibility are found in Section~\ref{sm:fig}.

\section{Notations for the proofs} \label{sm:notation}

For vectors $\x = (x_1,\dots, x_d )^{\top }$ and $\y = (y_1,\dots,y_d)^{\top }$, the element-wise operations are applied as follows: $\x / \y = ( x_1 / y_1 , \dotsc ,x_d / y_d )^{\top }$, $\x ^{\y } = ( {x_1}^{y_1} , \dotsc , {x_d}^{y_d} )^{\top }$, and $|\x| = (|x_1|,\dots ,|x_d|)^{\top }$, while the norm of vector $\x$ is $\| \x \|$, defined by $\| \x \| ^2 = {x_1}^2+\cdots + {x_d}^2$. For matrix $\A$, we denote the Frobenius norm of $\A$ by $\| \A \|^2 = \tr (\A^{\top}\A)$ and, for square $\A$, the determinant by $|\A|$. For the subset of indices, say $\Ic \subset \{ 1, \dots, d \}$, the subvector of $\x$ is written as $\x_{\Ic} = (x_i) _{i\in\Ic}$. Likewise, the submatrix of $d{\times}d$-matrix $\A$ is $\A_{\Ic , \Ic }$. To be precise, for matrix $\A = ( a_{i, j} )_{i, j = 1}^{d}$, if $\Ic = \{ i_1,\dots, i_{d'} \}$ with $d' = |\Ic| \le d$ and $i_1<\cdots <i_{d'}$, then the $(j,k)$-th entry of $\A _{\Ic , \Ic } = ( a_{i, j} )_{i, j \in \Ic }$ is $a_{i_j,i_k}$. We also write the $d$-dimensional all-zero vector as $\bm{0}^{(d)}$, the identity matrix by $\I^{(d)}$ and all-zero square matrix by $\O ^{(d)}$. 
We denote the normal distribution with mean $\bmu$ and variance $\bSi$ by ${\rm{N}}_p  ( \bmu , \bSi )$. The sampling distribution of the proposed model is 
\begin{equation*}
    \y_i | (\t_i,\bbe,\bSi)  \stackrel{\rm ind.}{\sim} {\rm{N}}_p  ( \X _i \bbe , \T_i \bSi \T_i ), \qquad \T_i = \mathrm{diag}(\t_i). 
\end{equation*}
We also denote the density function of ${\rm{N}}_p  ( \bmu , \bSi )$ evaluated at $\y$ by ${\rm{N}}_p  ( \y | \bmu , \bSi )$. Note that this genuinely means the function of argument $(\y , \bmu , \bSi)$; for example, we have ${\rm{N}}_p  ( \y | \bmu , \T \bSi \T ) = {\rm{N}}_p  ( (\y-\bmu)/\t | \bm{0}^{(p)} , \bSi ) |\T|^{-1}$ for vector $\t$ and $\T = \mathrm{diag}(\t)$.

\section{Theorem~1} \label{sm:thm1}

\subsection{Lemmas for the proofs}

We assume that the density of latent variable $t_i$ has the log-Pareto tail. That is, there exist positive constants, $c>0$ and $C>0$, such that 
\begin{equation*}
    \lim _{ |t|\to\infty } \pi (t) | t | \{ \log (1+|t|) \} ^{1+c} = C.
\end{equation*}
We call the density with this property {\it log-Pareto-tailed density}. First, we review the basic formulae about this density, which have been routinely used in the studies of posterior robustness. 

\begin{lem}
\label{lem:rv} 
The log-Pareto-tailed density is regularly varying with order 1; for any non-zero constant $\xi$, 
\begin{equation*}
    \lim _{|t|\to\infty} \frac{ \pi(|t|/\xi ) }{ \pi(|t|){| \xi |}} = 1.
\end{equation*}
\end{lem}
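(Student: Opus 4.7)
The plan is to apply the defining tail limit separately to the numerator and denominator of the ratio, reducing the claim to a simple fact about ratios of logarithms. Since $\xi\neq 0$, as $|t|\to\infty$ we also have $|t|/|\xi|\to\infty$, so the defining condition applies at both scales. I would rewrite the assumption as
\[
\pi(s)=\frac{1}{|s|}\{\log(1+|s|)\}^{-(1+c)}\bigl(C+o(1)\bigr)\qquad\text{as }|s|\to\infty,
\]
and then substitute $s=|t|/\xi$ and $s=|t|$. Since $||t|/\xi|=|t|/|\xi|$, the algebraic prefactors combine cleanly with the $|\xi|$ in the denominator, leaving
\[
\frac{\pi(|t|/\xi)}{\pi(|t|)\,|\xi|}=\left(\frac{\log(1+|t|)}{\log(1+|t|/|\xi|)}\right)^{1+c}\cdot\frac{C+o(1)}{C+o(1)}.
\]

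Next, I would handle the logarithmic ratio. Both $\log(1+|t|)$ and $\log(1+|t|/|\xi|)$ differ from $\log|t|$ by additive terms that are either $o(1)$ or bounded ($-\log|\xi|$), so each behaves like $\log|t|$. Their ratio therefore tends to $1$, and raising to the fixed power $1+c$ preserves this limit. Combined with the fact that $(C+o(1))/(C+o(1))\to 1$ since $C>0$, this gives the claim.

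The only point that requires any care is that when $\xi<0$ the argument $|t|/\xi$ is negative; however, the tail condition in (\ref{eq:pareto}) is expressed through $|t|$ on both the multiplier and inside the logarithm, so it is insensitive to the sign of the argument of $\pi$. This observation covers all non-zero real $\xi$ uniformly, and no further work is needed beyond routine asymptotic bookkeeping.
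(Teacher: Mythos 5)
Your proposal is correct and follows essentially the same route as the paper's proof: both apply the defining tail limit at the two scales $|t|$ and $|t|/|\xi|$, reduce the claim to showing the ratio of logarithms tends to $1$, and rely on the two-sided (symmetric) form of the tail condition to cover negative $\xi$. No gaps.
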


\begin{proof}
    Given non-zero $\xi$, we have 
\begin{equation*}
\begin{split}
    &\lim _{|t|\to\infty} \frac{ \pi(|t|/\xi ) }{ \pi(|t|) |\xi| } \non \\
    &= \lim _{|t|\to\infty} \frac{ \pi(|t|/\xi ) ( |t|/|\xi |) \{ \log(1+|t|/|\xi| ) \} ^{1+c} }{ \pi(|t|) |t| \{ \log(1+|t|) \} ^{1+c} } \times \left\{ \frac{ \log(1+|t|/|\xi| )  }{ \log(1+|t| ) } \right\} ^{-(1+c)} \\
    &= \lim _{|t|\to\infty} \frac{ \pi(|t|/\xi ) ( |t|/|\xi| ) \{ \log(1+|t|/|\xi| ) \} ^{1+c} }{ \pi(|t|) |t| \{ \log(1+|t|) \} ^{1+c} } \times \left\{ \frac{ \log(|t|^{-1} + |\xi|^{-1} ) / \log|t| + 1 }{ \log(1+|t|^{-1} ) /\log|t| + 1 } \right\} 
    ^{-(1+c)} \\
    &= \frac{ C }{ C } \times \left( \frac{ 1 }{ 1 } \right) ^{-(1+c)} \text{,} \non 
\end{split}
\end{equation*}
which is equal to $1$. 
\end{proof}
It is noteworthy that, in proving the limit above, we used the symmetry in the tails of $\pi$. If $\pi$ has different tails, the limit above should depends on the sign of $\xi$. 

Next, for the sake of completeness, we comment on an trivial inequality that has been routinely used in the literature (e.g., Hamura et al. 2022, Lemma S1) 
\begin{lem}
\label{lem:basic} 
For any positive $a$ and $b$, we have
\begin{equation*}
        \frac{1 + \log (1+a)}{ 1 + \log(1+a/b) } \le 1+\log(1+b).
\end{equation*}
\end{lem}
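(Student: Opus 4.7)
The plan is to clear the denominator and reduce to showing
\[
1 + \log(1+a) \;\le\; \bigl[1 + \log(1+a/b)\bigr]\bigl[1 + \log(1+b)\bigr],
\]
which is equivalent since $1 + \log(1+a/b) > 0$ for $a, b > 0$. Expanding the product gives
\[
\bigl[1 + \log(1+a/b)\bigr]\bigl[1 + \log(1+b)\bigr] = 1 + \log(1+a/b) + \log(1+b) + \log(1+a/b)\log(1+b).
\]

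The key observation I would then exploit is that for positive arguments the logarithm $\log(1+\cdot)$ is nonnegative, so the cross term $\log(1+a/b)\log(1+b)$ is nonnegative and can be dropped. This leaves the task of showing
\[
1 + \log(1+a) \;\le\; 1 + \log(1+a/b) + \log(1+b) \;=\; 1 + \log\bigl[(1+a/b)(1+b)\bigr].
\]
Multiplying out $(1+a/b)(1+b) = 1 + a + b + a/b$, so the inequality reduces to $\log(1+a) \le \log(1 + a + b + a/b)$, which is immediate from monotonicity of $\log$ together with $b + a/b > 0$.

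There is no real obstacle here: the argument is a two-line manipulation once one notices that the nonnegativity of $\log(1+a/b)$ and $\log(1+b)$ allows the cross term in the expansion to be discarded. The only subtlety worth flagging in the write-up is that the division by $1 + \log(1+a/b)$ preserves the inequality because that quantity is strictly positive under the hypothesis $a, b > 0$.
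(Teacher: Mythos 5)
Your proof is correct and is essentially the paper's own argument in a rearranged order: both rest on the factorization $(1+a/b)(1+b)=1+a+b+a/b$ together with monotonicity of $\log$, and your step of discarding the nonnegative cross term $\log(1+a/b)\log(1+b)$ is exactly the paper's bound $\log(1+b)/\{1+\log(1+a/b)\}\le\log(1+b)$. No gap.
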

\begin{proof}
We have 
\begin{equation*}
    \begin{split}
        \frac{1 + \log (1+a)}{ 1 + \log(1+a/b) } &\le \frac{1 + \log (1+a+a/b+b)}{ 1 + \log(1+a/b) }  \\
        &= \frac{1 + \log (1+a/b) + \log(1+b)}{ 1 + \log(1+a/b) } %
        = 1 + \frac{\log(1+b)}{ 1 + \log(1+a/b) }  \\
        &\le 1+\log(1+b) \text{,} 
    \end{split}
\end{equation*}
which is the desired result. 
\end{proof}

For the proof of Theorem~1, we additionally provide a property about the density tails of $\pi$. 
\begin{lem}
\label{lem:local} 
Let $\pi $ be a bounded log-Pareto-tailed density on $\mathbb{R}$. Then there exists $M > 0$ such that 
\begin{align}
\pi (|y| / \xi ) (|y| / | \xi|  ) \{ \log (1 + |y|) \} ^{1 + c} &\le M \min \{ \{ \log (1 + |y|) \} ^{1 + c} , \{ 1 + \log (1 + | \xi |) \} ^{1 + c} \} \non 
\end{align}
for all $\xi \neq 0$ and $y \in \mathbb{R}$. 
\end{lem}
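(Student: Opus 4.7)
The plan is to split the desired inequality into its two pieces (the $\{\log(1+|y|)\}^{1+c}$ bound and the $\{1+\log(1+|\xi|)\}^{1+c}$ bound) and verify each via a single uniform boundedness statement about $\pi$, after dealing with the trivial edge case $y = 0$ (where both sides vanish).

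First I would record an auxiliary boundedness fact: there exists $M_0 > 0$ such that
\begin{equation*}
\pi(t) \, |t| \, \{1 + \log(1+|t|)\}^{1+c} \le M_0 \quad \text{for all } t \neq 0.
\end{equation*}
For large $|t|$ this follows from the log-Pareto tail condition, since $1+\log(1+|t|) \sim \log(1+|t|)$, so the expression tends to the finite limit $C$. For $|t|$ in any fixed compact set (away from infinity), boundedness of $\pi$ combined with boundedness of $|t|\{1+\log(1+|t|)\}^{1+c}$ on that set gives the bound, including a neighborhood of the origin where $|t|\{1+\log(1+|t|)\}^{1+c}$ in fact tends to $0$. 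As an immediate consequence, $\pi(t) |t| \le M_0$ also holds, since $1 + \log(1+|t|) \ge 1$.

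For the first half of the minimum, I would apply $\pi(t)|t| \le M_0$ at $t = |y|/\xi$ to get $\pi(|y|/\xi)(|y|/|\xi|) \le M_0$, then multiply by $\{\log(1+|y|)\}^{1+c}$. For the second half, I would apply the auxiliary bound at $t = |y|/\xi$ to obtain
\begin{equation*}
\pi(|y|/\xi)(|y|/|\xi|) \le \frac{M_0}{\{1+\log(1+|y|/|\xi|)\}^{1+c}},
\end{equation*}
multiply by $\{\log(1+|y|)\}^{1+c}$, upper-bound the numerator by $\{1+\log(1+|y|)\}^{1+c}$, and invoke Lemma~\ref{lem:basic} with $a=|y|$, $b=|\xi|$ to replace the resulting ratio by $\{1+\log(1+|\xi|)\}^{1+c}$. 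Taking $M = M_0$ and combining the two halves yields the claim.

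The main (and essentially only) obstacle is cosmetic: one must use $1+\log(1+\cdot)$ rather than $\log(1+\cdot)$ in the denominator to prevent it from vanishing when $|y|/|\xi|$ is small, which is precisely the form required to activate Lemma~\ref{lem:basic}. No new ingredients beyond the log-Pareto tail condition and Lemma~\ref{lem:basic} are needed.
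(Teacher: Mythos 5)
Your proof is correct, and it uses the same three ingredients as the paper's argument: boundedness of $\pi$ to control moderate values of $|y|/|\xi|$, the log-Pareto tail condition to control large values, and Lemma~\ref{lem:basic} to convert the ratio of logarithms into $\{1+\log(1+|\xi|)\}^{1+c}$. The organization differs slightly and is arguably cleaner: the paper splits into the two regimes $|y|/|\xi|\le r_0$ and $|y|/|\xi|>r_0$ and establishes only the $\{\log(1+|y|)\}^{1+c}$ bound in the first regime and only the $\{1+\log(1+|\xi|)\}^{1+c}$ bound in the second, then takes $M=\max\{M_1,M_2\}$; strictly speaking the stated conclusion (a bound by the \emph{minimum}, valid for all $\xi\neq 0$ and $y$) requires both inequalities to hold simultaneously everywhere, which the paper leaves implicit. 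Your single uniform auxiliary bound $\pi(t)|t|\{1+\log(1+|t|)\}^{1+c}\le M_0$ delivers both halves at once for every $(\xi,y)$, so the min follows immediately with $M=M_0$; the price is only the (easy) verification of that uniform bound, which itself hides the same large-$|t|$/compact-$|t|$ dichotomy the paper makes explicit. No gap.
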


\begin{proof}
Let $r_0 > 0$ be a sufficiently large constant. First, consider $|y|/|\xi|  \le r_0$. Since the density is bounded, there is some constant $M_1>0$ such that 
\begin{equation*}
    \pi ( |y|/ \xi ) ( |y|/|\xi| ) \le M_1 ,
\end{equation*}
thus we have 
\begin{equation*}
    \pi (|y| / \xi ) (|y| / |\xi| ) \{ \log (1 + |y|) \} ^{1 + c} \le M_1 \{ \log (1 + |y|) \} ^{1 + c}   , \qquad |y| / |\xi| \le r_0.
\end{equation*}
Next, consider $|y| / |\xi| > r_0$. Since $\pi(t)|t|\{ \log (1+|t|) \}^{1+c} \to C$ as $|t|\to \infty$, with $r_0$ being sufficiently large, there is constant $C'>0$ (which is close to $C$) such that 
\begin{equation*}
    \pi ( |y|/ \xi ) ( |y|/|\xi| ) \{ \log(1+|y|/|\xi|) \}^{1+c} \le C'.
\end{equation*}
Note also that, for $|y|/|\xi| > r_0$, 
\begin{equation*}
    \frac{1+\log(1+|y|/|\xi|)}{\log(1+|y|/|\xi|)} \le \frac{1+\log(1+r_0)}{\log(1+r_0)},
\end{equation*}
since the left hand side above is decreasing in $|y|/|\xi|$. 
Then, by using Lemma~\ref{lem:basic}, we have 
\begin{equation*}
\begin{split}
    &\pi (|y| / \xi ) (|y| / | \xi |) \{ \log (1 + |y|) \} ^{1 + c} \non \\
    &\le C' \left\{  \frac{ \log (1 + |y| ) }{ \log (1 + |y| / |\xi| )  } \right\} ^{1+c} \\
    &= C' \left\{  \frac{ 1 + \log (1 + |y| ) }{ 1 + \log (1 + |y| / |\xi| )  } \right\} ^{1+c} \left\{  \frac{ \log (1 + |y| ) }{ 1 + \log (1 + |y| )  } \right\} ^{1+c} \left\{  \frac{ 1 + \log (1 + |y| / |\xi| ) }{ \log (1 + |y| / |\xi| )  } \right\} ^{1+c}\\
    &\le C' \left\{  \frac{ 1 + \log (1 + |y| ) }{ 1 + \log (1 + |y| / |\xi| )  } \right\} ^{1+c} \times 1 \times \left\{  \frac{ 1 + \log (1 + r_0 ) }{ \log (1 + r_0 )  } \right\} ^{1+c}\\
    &\le M_2 \{ 1+\log(1+|\xi| ) \} ^{1+c}, \qquad |y| / |\xi | > r_0,
\end{split}
\end{equation*}
where $M_2=C' \{ ( 1 + \log (1 + r_0 ) \}^{1+c} / \{ \log (1 + r_0 )  \} ^{1+c}$. Combining the two cases, we obtain the inequality of this lemma, where $M=\max\{ M_1,M_2 \}$. 
\end{proof}

\subsection{Additional notations for Theorem~1}

Before proceeding to the proof of Theorem~1, we introduce notation for mathematical simplicity. We rearrange the elements of vector $\y_i$, so that the first $|\Lc(i)|$ elements are those of outliers, or their indices belong to $\Lc(i)$. Accordingly, we rearrange the rows of $\X$ and $\t_i$, and the rows and columns of $\bSi$. That is, we write 
\begin{equation*}
    \y _i = \begin{bmatrix}
        \y_{i,\Lc(i)} \\ \y_{i,\Kc(i)}
    \end{bmatrix}, \quad \X_i = \begin{bmatrix}
        \X_{i,\Lc(i)} \\ \X_{i,\Kc(i)}
    \end{bmatrix}, \quad \t _i = \begin{bmatrix}
        \t_{i,\Lc(i)} \\ \t_{i,\Kc(i)}
    \end{bmatrix}, \quad \bSi = \begin{bmatrix}
        \bSi _{\Lc(i),\Lc(i)} & \bSi _{\Lc(i),\Kc(i)}   \\
        \bSi _{\Kc(i),\Lc(i)} & \bSi _{\Kc(i),\Kc(i)}   
    \end{bmatrix},
\end{equation*}
where $\y_{i,\Lc(i)}$ is the $|\Lc(i)|$-dimensional vector whose elements diverge in the statement of posterior robustness. The rows of $\X_i$ and $\t_i$, and the rows and columns of $\bSi$, are rearranged accordingly, which defines the subvectors and submatrices above. This simplification can be done by row-switching transformation of normally distributed $\y_i$ and does not lose the generality of the proof. 

In the framework of the theory of posterior robustness, for each $i$ and $k$, there exists a pair of constants $(c_{i,k},d_{i,k})$ such that $y_{i,k}=c_{i,k} + d_{i,k}\om$. In our definition of outliers, $d_{i,k}\not=0$ if and only if $k \in \Lc(i)$. As $\om \to \infty$, we have 
\begin{equation*}
    \frac{y_{i,k}}{|y_{i,k}|} \to \sgn (d_{i,k}) = \begin{cases}
    1 & \mathrm{if} \ d_{i,k} > 0 \\
    -1 & \mathrm{if} \ d_{i,k} < 0 .
    \end{cases}
\end{equation*}
For notatinoal simplicity, we write 
\begin{equation*}
    \sgn ( \d_{i,\Lc(i)} ) = \lim _{\om \to \infty} \frac{ \y _{i,\Lc(i)} }{ | \y _{i,\Lc(i)} | }.
\end{equation*}

\subsection{Proof of Theorem~1}

Fix $i \in \{ 1, \dots , n \}$ and $(\bbe ,\bSi )$. The statement of the robustness of likelihood is as follows; $p( \y _i | \bbe , \bSi ) / C_i(\om) \to p( \y _{i,\Kc(i)} | \bbe , \bSi )$ as $\om \to \infty$ for some constants $C_i(\om)$. If $\Lc(i)=\emptyset$, then $C_i(\om ) =1$ and the likelihood robustness holds trivially. In what follows, we assume that $\Lc(i)\not=\emptyset$ and take the scaling constant as 
\begin{equation*}
    C_i(\om ) = \prod_{k \in \Lc(i)} \pi (| y_{i, k} |) \qquad {\rm if} \quad \Lc(i)\not=\emptyset.
\end{equation*}
Note that, for sufficiently large $\om$ (and $|y_{i,k}|$), density $\pi (| y_{i, k} |)$ is strictly positive since the log-Pareto tails of $\pi(\cdot)$ is assumed, which guarantees $C_i(\om)>0$. With this choice of $C_i(\om)$, the robustness of likelihood is equivalent to 
\begin{equation*}
    p( \y _i | \bbe , \bSi ) / \prod_{k \in \Lc(i)} \pi (| y_{i, k} |) \to p( \y _{i,\Kc(i)} | \bbe , \bSi ) , \qquad {\rm as} \quad \om \to \infty .
\end{equation*}
The likelihood of interest, or the left hand side above, is written as 
\begin{equation*} 
    \int_{\mathbb{R} ^p} {\rm{N}}_p  ( \y _i | \X _i \bbe , \T_i \bSi \T_i ) \Big\{ \prod_{k \in \Lc(i)} {\pi ( t_{i, k} ) \over \pi (| y_{i, k} |)} \Big\} \Big\{ \prod_{k \in \Kc(i)} \pi ( t_{i, k} ) \Big\} d{\t _i} .
\end{equation*}
We first compute its limit by exchanging the limit and integral, then justify this exchange by the dominated convergence theorem. To compute the limit, consider the change of variables $\bxi _{i, \Lc (i)} = |\y _{i, \Lc (i)}| / \t _{i, \Lc (i)}$ with $d\t_{i,\Lc(i)} / \t_{i,\Lc(i)} = - d\bxi _{i,\Lc(i)}/\bxi_{i,\Lc(i)}$. Then, we rewrite the integral of interest as 
\begin{equation}\label{eq:int_t}
\begin{split}
    &p( \y _i | \bbe , \bSi ) / \prod_{k \in \Lc(i)} \pi (| y_{i, k} |)  \\
    &= \int_{\mathbb{R} ^p} {\rm{N}}_p  \Big( \frac{ \y _i - \X _i \bbe }{\t_i} \Big| \bm{0}^{(p)} , \bSi \Big) \Big\{ \prod_{k = 1}^{p} {\pi ( t_{i, k} ) / | t_{i, k} | \over \pi (| y_{i, k} |)} \Big\} \Big\{ \prod_{k = 1}^{p} \pi ( t_{i, k} ) / | t_{i, k} |  \Big\} d{\t _i} \\
    &= \int_{\mathbb{R} ^p}  {\rm{N}}_p  \! \left( \left. \begin{bmatrix}
    \{ \y _{i, \Lc (i)} - \X _{i, \Lc (i)} \bbe \} / \{ |\y _{i, \Lc (i)} | / \bxi _{i, \Lc (i)} \} \\
    \{ \y _{i, \Kc (i)} - \X _{i, \Kc (i)} \bbe \} / \t _{i, \Kc (i)}
    \end{bmatrix} \right| \bm{0}^{(p)} , \bSi \right) \\ 
    &\quad \times \Big\{ \prod_{k \in \Lc (i)} {\pi (| y_{i, k} | / \xi _{i, k} ) \over \pi (| y_{i, k} |) | \xi _{i, k} |} \Big\} \Big\{ \prod_{k \in \Kc (i)} \pi ( t_{i, k} ) / | t_{i, k} | \Big\} d\bxi _{i, \Lc (i)} d\t _{i, \Kc (i)} ,
    \end{split}
\end{equation}
and compute its limit by using Lemma~\ref{lem:rv} as 
\begin{equation*}
\begin{split}
    &p( \y _i | \bbe , \bSi ) / \prod_{k \in \Lc(i)} \pi (| y_{i, k} |) \non \\
    &\to \int_{\mathbb{R} ^p} {\rm{N}}_p  \! \left( \left. \begin{bmatrix}
    \bxi _{i, \Lc (i)} \sgn\{ \d _{i, \Lc (i)} \}  \\
    \{ \y _{i, \Kc (i)} - \X _{i, \Kc (i)} \bbe \} / \t _{i, \Kc (i)}
    \end{bmatrix} \right| \bm{0}^{(p)} , \bSi \right) \Big\{ \prod_{k \in \Kc (i)} \pi ( t_{i, k} ) / | t_{i, k} |  \Big\} d\bxi _{i, \Lc (i)} d\t _{i, \Kc (i)} \\ 
    &= \int_{\mathbb{R} ^{| \Kc (i)|}} {\rm{N}}_{| \Kc (i)|} \Big( {\y _{i, \Kc (i)} - \X _{i, \Kc (i)} \bbe \over \t _{i, \Kc (i)}} \Big| \bm{0} ^{(| \Kc (i)|)} , \bSi _{\Kc (i), \Kc (i)} \Big) \Big\{ \prod_{k \in \Kc (i)} \pi ( t_{i, k} ) / | t_{i, k} | \Big\} d{\t _{i, \Kc (i)}} \\
    &= p( \y _{i, \Kc (i)} | \bbe , \bSi ),
    \end{split}
\end{equation*}
as $\om\to\infty$, or $|y_{i,k}|\to \infty$ for all $k\in\Lc(i)$, which shows the robustness of interest. 

In finding an integrable function that bounds the integrand in (\ref{eq:int_t}), we can assume that $|\y_{i,\Lc(i)}|$ is sufficiently large since we consider the limit of $\om \to \infty$. Note also that there exists $M_1>0$ such that $M_1 \I ^{(p)} \ge \bSi $ (any value which is larger than the eigenvalues of $\bSi$ serves as $M_1$) since $\bSi$ is fixed in this proof. First, we bound the normal density in (\ref{eq:int_t}) for large $|\y_{i,\Lc(i)}|$ as 
\begin{align}
&{\rm{N}}_p  \! \left( \left. \begin{bmatrix}
    \{ \y _{i, \Lc (i)} - \X _{i, \Lc (i)} \bbe \} /  \{ | \y _{i, \Lc (i)} | / \bxi _{i, \Lc (i)} \} \\
    \{ \y _{i, \Kc (i)} - \X _{i, \Kc (i)} \bbe \} / \t _{i, \Kc (i)}
\end{bmatrix} \right| \bm{0}^{(p)} , \bSi \right) , \qquad (\t_{i,\Lc(i)} = |\y _{i, \Lc (i)} | / \bxi _{i, \Lc (i)})\non \\
&\le {1 \over (2 \pi )^{p / 2}} {1 \over | \bSi |^{1 / 2}} \exp \Big[ - {1 \over 2 M_1} \Big\{ \Big\| {\y _{i, \Lc (i)} - \X _{i, \Lc (i)} \bbe \over |\y _{i, \Lc (i)}| / \bxi _{i, \Lc (i)}} \Big\| ^2 + \Big\| {\y _{i, \Kc (i)} - \X _{i, \Kc (i)} \bbe \over \t _{i, \Kc (i)}} \Big\| ^2 \Big\} \Big] \non \\
&\le {1 \over (2 \pi )^{p / 2}} {1 \over | \bSi |^{1 / 2}} \exp \Big[ - {1 \over 2 M_1} \Big\{ {1 \over 2^2} \| \bxi _{i, \Lc (i)} \| ^2 + \Big\| {\y _{i, \Kc (i)} - \X _{i, \Kc (i)} \bbe \over \t _{i, \Kc (i)}} \Big\| ^2 \Big\} \Big] ,\non 
\end{align}
for any non-zero $\bxi _{i, \Lc (i)}$ and $\t _{i, \Kc (i)}$. Next, by Lemma~\ref{lem:local}, there exists some positive constants, $M_2, M_3 > 0$, such that 
\begin{align*}
&\prod_{k \in \Lc (i)} {\pi (| y_{i, k} | / \xi _{i, k} ) \over \pi (| y_{i, k} |) | \xi _{i, k} |} = \prod_{k \in \Lc (i)} {\pi (| y_{i, k} | / \xi _{i, k} ) (| y_{i, k} | / | \xi _{i, k} |) \{ \log (1 + | y_{i, k} |) \} ^{1 + c} \over \pi (| y_{i, k} |) | y_{i, k} | \{ \log (1 + | y_{i, k} |) \} ^{1 + c} } \non \\
&\le \prod_{k \in \Lc (i)} { M_2 \{ 1 + \log (1 + | \xi _{i, k} |) \} ^{1 + c} \over \pi (| y_{i, k} |) | y_{i, k} | \{ \log (1 + | y_{i, k} |) \} ^{1 + c} } \le M_3 \prod_{k \in \Lc (i)} \{ 1 + \log (1 + | \xi _{i, k} |) \} ^{1 + c} \non 
\end{align*}
for any non-zero $\bxi _{i, \Lc (i)}$ and sufficiently large $|y_{i,k}|$. 
Then, a function that dominates the integrand in (\ref{eq:int_t}) is constructed using the two inequalities above as 
\begin{equation*}
    \begin{split}
    &{\rm{N}}_p  \! \left( \left. \begin{bmatrix}
    \{ \y _{i, \Lc (i)} - \X _{i, \Lc (i)} \bbe \} /  \{ | \y _{i, \Lc (i)} | / \bxi _{i, \Lc (i)} \} \\
    \{ \y _{i, \Kc (i)} - \X _{i, \Kc (i)} \bbe \} / \t _{i, \Kc (i)}
    \end{bmatrix} \right| \bm{0}^{(p)} , \bSi \right) \non \\
    &\times \Big\{ \prod_{k \in \Lc (i)} {\pi (| y_{i, k} | / \xi _{i, k} ) \over \pi (| y_{i, k} |) | \xi _{i, k} |} \Big\} \prod_{k \in \Kc (i)} \pi ( t_{i, k} ) / | t_{i, k} |  \\
    &\le {1 \over (2 \pi )^{p / 2}} {1 \over | \bSi |^{1 / 2}} \exp \Big[ - {1 \over 2 M_1} \Big\{ {1 \over 2^2} \| \bxi _{i, \Lc (i)} \| ^2 + \Big\| {\y _{i, \Kc (i)} - \X _{i, \Kc (i)} \bbe \over \t _{i, \Kc (i)}} \Big\| ^2 \Big\} \Big] \\
    &\quad \times M_3 \Big[ \prod_{k \in \Lc (i)} \{ 1 + \log (1 + | \xi _{i, k} |) \} ^{1 + c} \Big] \prod_{k \in \Kc (i)} \pi ( t_{i, k} ) / | t_{i, k} | \\
    &= {2^{|\Lc(i)|}M_1^{p/2}M_3 \over | \bSi  |^{1 / 2}} \ {\rm{N}} ( \bxi _{i, \Lc (i)} | \bm{0}^{(|\Lc(i)|)}, 4M_1\I^{(|\Lc(i)|)} ) \ {\rm{N}} ( \y_{i,\Kc(i)} | \X _{i, \Kc (i)} \bbe, \T_{i,\Kc(i)} \{  M_1\I^{(|\Kc(i)|)} \} \T_{i,\Kc(i)}  ) \\
    &\quad \times \Big[ \prod_{k \in \Lc (i)} \{ 1 + \log (1 + | \xi _{i, k} |) \} ^{1 + c} \Big] \prod_{k \in \Kc (i)} \pi ( t_{i, k} ) ,
    \end{split}
\end{equation*}
for any non-zero $\bxi _{i, \Lc (i)}$ and $\t _{i, \Kc (i)}$, where $\T_{i,\Kc(i)} = \diag (\t_{i,\Kc(i)})$. Note that this dominating function does not depend on $\y_{i,\Lc(i)}$. To see its integrability, observe that the integral of the function above is, without the constant, the product of 
\begin{align}
&\int_{\mathbb{R} ^{|\Lc(i)|}} {\rm{N}} ( \bxi _{i, \Lc (i)} | \bm{0}^{(|\Lc(i)|)}, 4M_1\I^{(|\Lc(i)|)} ) \Big[ \prod_{k \in \Lc (i)} \{ 1 + \log (1 + | \xi _{i, k} |) \}^{1 + c} \Big] d\bxi _{i, \Lc (i)} \non \\
&= \prod_{k \in \Lc (i)} \mathbb{E}[ \{ 1 + \log (1 + | \xi _{i, k} |) \} ^{1 + c} ] \non 
\end{align}
and 
\begin{align}
&\int_{\mathbb{R} ^{|\Kc(i)|}}{\rm{N}} ( \y_{i,\Kc(i)} | \X _{i, \Kc (i)} \bbe, \T_{i,\Kc(i)}\{ M_1\I^{(|\Kc(i)|)} \} \T_{i,\Kc(i)} ) \Big\{ \prod_{k \in \Kc (i)} \pi ( t_{i, k} )  \Big\} d\t _{i, \Kc (i)} \non \\
&= p( \y _{i, \Kc (i)} | \bbe , M_1\I^{(p)} ) \text{,} \non 
\end{align}
where the expectation is taken with respect to $\bxi_i \sim {\rm{N}} ( \bm{0}^{(|\Lc(i)|)}, 4M_1\I^{(|\Lc(i)|)} )$ and bounded by $\mathbb{E}[ (1+|\xi_{i,k}|)^{1+c} ] < \infty$, and $p( \y _{i, \Kc (i)} | \bbe , M_1\I^{(p)} )$ is the sampling density with $\bSi$ being replaced by $M_1\I^{(p)}$. Since $p( \y _{i, \Kc (i)} | \bbe , M_1\I^{(p)} )$ is the proper probability density function, its value is finite for almost all $\y_{i,\Kc(i)}=\bm{c}_{i,\Kc(i)}$ (with respect to the Lebesgue measure). Thus, the integral of the dominating function is shown to be finite, which justifies the exchange of the limit and integral and completes the proof.

\section{Theorem~2} \label{sm:thm2}

\subsection{Lemma for the proof of Theorem~2}

In addition to the lemmas used in the previous proof, we show in advance an inequality about covariance/correlation matrices. 

\begin{lem}
\label{lem:corr} 
For correlation matrix $\R \in \mathbb{R} ^{p \times p}$, we have $\R \le p \I ^{(p)}$.
\end{lem}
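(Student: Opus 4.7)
The plan is to exploit the two defining properties of a correlation matrix: positive semi-definiteness and unit diagonal. In the Loewner order, the claim $\R \le p \I^{(p)}$ is equivalent to showing that every eigenvalue of $\R$ is bounded above by $p$, i.e. $\lambda_{\max}(\R) \le p$.

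First I would observe that $\R$ is symmetric and positive semi-definite, so its eigenvalues $\lambda_1,\dots,\lambda_p$ are all non-negative. Next, since $\R$ is a correlation matrix, every diagonal entry equals $1$, so $\tr(\R) = p$. Because $\tr(\R) = \sum_{k=1}^p \lambda_k$ and each $\lambda_k \ge 0$, it follows immediately that $\lambda_{\max}(\R) \le \sum_{k=1}^p \lambda_k = p$. By the spectral theorem, $p \I^{(p)} - \R$ is symmetric with all eigenvalues $p - \lambda_k \ge 0$, hence positive semi-definite, which is exactly the statement $\R \le p \I^{(p)}$.

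There is essentially no obstacle here; the result is a one-line consequence of combining non-negativity of the spectrum with the trace constraint. The only thing to be careful about is making the Loewner-order interpretation explicit, since the notation $\le$ between matrices is used in the main text without an in-line reminder.
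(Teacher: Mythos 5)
Your proof is correct, but it takes a genuinely different route from the paper. The paper's argument is purely element-wise: for arbitrary $\a$, it bounds $\a^{\top}\R\a \le \sum_{k}\sum_{l}|a_k||r_{k,l}||a_l|$ using $|r_{k,l}|\le 1$, and then applies $|a_k||a_l|\le (a_k^2+a_l^2)/2$ to get $\a^{\top}\R\a \le p\|\a\|^2$. That argument never invokes positive semi-definiteness; it only needs the entries of $\R$ to be bounded by $1$ in absolute value. Your argument instead uses the spectral characterization: positive semi-definiteness gives $\lambda_k \ge 0$, the unit diagonal gives $\tr(\R)=\sum_k \lambda_k = p$, hence $\lambda_{\max}(\R)\le p$ and $p\I^{(p)}-\R\succeq 0$. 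Each approach buys something the other does not: yours extends verbatim to any PSD matrix with trace $p$ regardless of whether its entries are bounded, while the paper's extends to any symmetric matrix with entries in $[-1,1]$ regardless of whether it is PSD. For an actual correlation matrix both hypotheses hold, so either proof suffices for the lemma as used in Theorem~2. Your explicit remark that $\le$ is the Loewner order is a worthwhile clarification that the paper leaves implicit.
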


\begin{proof}
Let $\a = ( a_k )_{k = 1}^{p} \in \mathbb{R} ^p$ be an arbitrary $p$-vector. Then, observe that 
\begin{align}
{\a }^{\top } \R \a &\le \sum_{k = 1}^{p} \sum_{l = 1}^{p} | a_k || r_{k, l} || a_l | \le \sum_{k = 1}^{p} \sum_{l = 1}^{p} {{a_k}^2 + {a_l}^2 \over 2} = p \sum_{k = 1}^{p} {a_k}^2 = {\a }^{\top } (p \I ^{(p)} ) \a \text{,} \non 
\end{align}
where $r_{k, l}$ is the $(k, l)$ element of $\R $ for $k, l = 1, \dots , p$. 

\end{proof}

\subsection{Notations and model properties for the proof of Theorem~2}

\subsubsection*{Model review}

Unlike Theorem~1, we specify the distribution of latent variable $\t_i$ explicitly in the statement of Theorem~2. Conditional on $\t_i$, the sampling distribution is 
\begin{equation*}
    \y_i | \t_i \sim {\rm{N}}_p  ( \X _i \bbe , \T_i \bSi \T_i ), \qquad \T_i = \mathrm{diag}(\t_i),
\end{equation*}
and, for $\t_i = (t_{i,1},\dots,t_{i,p})^{\top}$, we assume that all the $t_{i,k}$'s are independently and identically distributed with density  
\begin{equation*}
    \pi(t) = (1-\phi) \pi_0(t) + \phi \pi _1(t),
\end{equation*}
where $\phi \in (0,1)$, $\pi_0 (\cdot ) = \delta _{\{1\}}(\cdot)$ is the point-mass distribution on $\{ t=1 \}$, and $\pi_1(\cdot)$ is the log-Pareto tailed. 
For each $t_{i,k}$, we introduce the binary variable $z_{i,k}$ such that 
\begin{equation*}
    \begin{cases}
    t_{i,k} = 1 & {\rm if} \ z_{i,k} = 0 \\
    t_{i,k} \sim \pi _1 & {\rm if} \ z_{i,k} = 1
    \end{cases},
\end{equation*}
which defines the conditional distribution $p(t_{i,k}|z_{i,k})$. In addition, $z_{i,k} \sim \mathrm{Bernoulli}(\phi)$, and $\z_1,\dots,\z_n$ are mutually independent, where $\z_i = (z_{i,1},\dots,z_{i,p})^{\top}$ for all $i$. 

\subsubsection*{Index set $\Zc(i)$}

We also introduce the index set, $\Zc(i)$ and $\overbar{\Zc}(i)$, by
\begin{equation*}
    \begin{split}
        \Zc (i) &= \{ \ k \in \{ 1,\dots ,p\} \ | \ z_{i,k}=1 \ \} \\
        \overbar{\Zc} (i) &= \{ \ k \in \{ 1,\dots ,p\} \ | \ z_{i,k}=0 \ \} = \{ 1,\dots ,p\} \setminus \Zc(i).
    \end{split}
\end{equation*}
That is, the $k$-th element of $\t_i$ is assigned to the log-Pareto-tailed component $\pi_1$ if $k \in \Zc(i)$, or set as $t_{i,k}=1$ if $k \in \overbar{\Zc}(i)$. 

Ideally, for $k\in \Lc(i)$ ($y_{i,k}$ is outlying), we should have $k \in \Zc(i)$ to make latent variable $t_{i,k}$ large to account for that outlier. In reality, however, we must consider the possibility of having $k \in \Lc(i)$ but $k\not\in \Zc(i)$ (or $k\in \overbar{\Zc}(i)$). Reflecting this observation, in the proof, two cases will be considered in the computation of functions of interest:
\begin{itemize}
    \item $\Lc(i) \subset \Zc(i)$ (or $\overbar{\Zc}(i)\cap \Lc(i) = \emptyset$.) 

    \item $\Lc(i) \not\subset \Zc(i)$ (or $\overbar{\Zc}(i)\cap \Lc(i) \not= \emptyset$.) 
\end{itemize}
Related to this notation, let $\z_{1:n} = (\z _1,\dots, \z_n)$, where $\z_i \in \{ 0,1 \}^p$ for each $i$. Then we define $\Zc$ by 
\begin{equation*}
    \Zc = \{ \ (\z_1,\dots ,\z_n) \ | \ \Lc(i) \subset \Zc(i) , \ \mathrm{for \ all \ } i.  \ \}.
\end{equation*}

Conditional on $\z_i$ (or $\Zc(i)$), we divide the vectors into two sub-vectors: those of $\Zc(i)$ and $\overbar{\Zc}(i)$. In doing so, we define the row switching matrix, $\E_i = \E(\z_i)$, by 
\begin{equation*}
    \y _i = \E_i \begin{bmatrix}
        \y_{i,\overbar{\Zc}(i)} \\
        \y_{i,\Zc(i)\cap\Kc(i)} \\
        \y_{i,\Zc(i)\cap\Lc(i)} 
    \end{bmatrix}. 
\end{equation*}
Note that this matrix satisfies $|\E_i|=1$ and $\E_i^{\top} = \E_i$ for any $\z_i\in \{0,1\}^p$. When $\Lc(i) \subset \Zc(i)$, we have $\Zc(i)\cap\Lc(i)=\Lc(i)$ and the third subvector becomes $\y_{i,\Lc(i)}$. With this notation, for example, we can rewrite the part of the normal density of interest as 
\begin{equation*}
    \T_i^{-1} (\y_i-\X_i\bbe) = \E_i \begin{bmatrix}
        \y_{i,\overbar{\Zc}(i)} - \X_{i,\overbar{\Zc}(i)}\bbe  \\
        \{ \y_{i,\Zc(i)\cap\Kc(i)} - \X_{i,\Zc(i)\cap\Kc(i)}\bbe \} / \t_{i,\Zc(i)\cap\Kc(i)}  \\
        \{ \y_{i,\Zc(i)\cap\Lc(i)} - \X_{i,\Zc(i)\cap\Lc(i)}\bbe \} / \t_{i,\Zc(i)\cap\Lc(i)} 
    \end{bmatrix}. 
\end{equation*}

\subsubsection*{Subset $D(\om)$ of the parameter space of $\bbe$ }

The difficulty in proving the robustness of the normalizing constant is the integral over $(\bbe , \bSi)$. In the proof of Theorem~1, we were allowed to fix the value of $(\bbe , \bSi)$, so the dominating function could depend on $(\bbe , \bSi)$ in any way and was constructed relatively easily. In the proof below, however, we must find the dominating function that is integrable over not only $\t_i$'s, but also $(\bbe , \bSi)$. To ease this difficulty, we split the parameter space into two subsets. Specifically, we define the subset of parameter space of $\bbe$ as 
\begin{equation*}
    D( \om ) = \bigcap_{i = 1}^{n} \bigcap_{k \in \Lc (i)} \{  \ \bbe \in \mathbb{R} ^q \  | \  | y_{i, k} - \x _{i, k}^{\top } \bbe | > \ep \om \ \} ,
\end{equation*}
for some sufficiently small $\ep > 0$, which will be given below. Note that, for $k\in \Lc(i)$, we can write $y_{i,k} = c_{i,k}+d_{i,k}\om$ and $d_{i,k}\not=0$. Then, the inequality in $D(\om)$ implies 
\begin{equation*}
    \left| \frac{c_{i,k} - \x _{i, k}^{\top } \bbe}{\om} + d_{i,k}  \right| > \ep.
\end{equation*}
Thus, for each $\bbe$, we have 
\begin{equation*}
    \left| \frac{c_{i,k} - \x _{i,k}^{\top } \bbe}{\om} + d_{i,k}  \right| \ge |d_{i,k}| - \left| \frac{c_{i,k} - \x _{i, k}^{\top } \bbe}{\om} \right| \ge \min _{1\le i \le n} \inf_{k\in\Lc(i)} |d_{i,k}| - \max _{1\le i \le n} 
 \sup _{k\in \Lc(i)} \left| \frac{c_{i,k} - \x _{i, k}^{\top } \bbe}{\om}  \right| ,
\end{equation*}
where the right hand side is independent of $i$ and $k$. This lower bound, or the right hand side, is positive and can be arbitrarily close to $\min _{1\le i \le n} \inf_{k\in\Lc(i)} |d_{i,k}|$ if $\om$ is sufficiently large. Hence, we choose an $\ep$ that satisfies
\begin{equation*}
    \min _{1\le i \le n} \inf_{k\in\Lc(i)} |d_{i,k}| > \ep > 0.
\end{equation*}
With this choice of $\ep$, for any value of $\bbe$, one can take $\om$ sufficiently large, so that $\bbe$ satisfies the inequality above for all $k\in \Lc(i)$ and $i \in \{ 1,\dots ,n \}$. This fact translates into the property of the indicator function,
\begin{equation*}
    \mathbbm{1}_{D(\om)} (\bbe) = \begin{cases}
        1 & \mathrm{if} \ \bbe \in D(\om) \\
        0 & \mathrm{if} \ \bbe \not\in D(\om),
    \end{cases}
\end{equation*}
that the limit is one;
\begin{equation*}
    \lim _{\om \to \infty} \mathbbm{1}_{D(\om)} (\bbe) = 1, \qquad \mathrm{for \ all \ } \bbe \in \mathbb{R}^q. 
\end{equation*}

If $\bbe \not\in D(\om)$, then vector $\bbe$ should not be close to the origin. For such $\bbe$, there exists $i \in \{ 1,\dots ,n\}$ and $k \in \Lc(i)$ such that 
\begin{equation*}
    | y_{i,k}-\x_{i,k}^{\top}\bbe |  \le \ep \om ,
\end{equation*}
where $y_{i,k} = c_{i,k} + d_{i,k}\om$. Since $| y_{i,k} | - | \x_{i,k}^{\top}\bbe | \le | y_{i,k}-\x_{i,k}^{\top}\bbe |$, we have 
\begin{equation*}
  \Big| \frac{c_{i,k}}{\om} + d_{i,k} \Big| - \ep  \le \frac{ | \x_{i,k}^{\top}\bbe | }{ \om } .
\end{equation*}
The left hand side is positive; for sufficiently large $\om$, 
\begin{equation*}
  \Big| \frac{c_{i,k}}{\om} + d_{i,k} \Big| - \ep  \ge |d_{i,k}|-\ep -  \Big| \frac{c_{i,k}}{\om}\Big| > 0,
\end{equation*}
since $|d_{i,k}|-\ep > 0$ by definition. Thus, we can take some constant $\ep _0 > 0$ such that $| c_{i,k} / \om + d_{i,k} | - \ep  \ge \ep_0$. Furthermore, by Cauchy-Schwarz inequality, we have $| \x_{i,k}^{\top}\bbe | \le \| \x_{i,k} \| \| \bbe \|$ and 
\begin{equation*}
    1 \le \left( \frac{1}{\ep_0}\max_{i,k} \|x_{i,k}\| \right) \frac{\|\bbe\|}{\om},
\end{equation*}
for sufficiently large $\om$ and any $\bbe \not\in D(\om)$. That is, if $\bbe \not\in D(\om)$, then $\bbe$ is ``larger than $\om$''. As $\om \to \infty$, the probability of having such $\bbe$ becomes negligible under normal circumstances, as rigorously stated later in the proof. For later use, we rewrite this inequality with arbitrary constant $\kappa > 0$ as 
\begin{equation} \label{eq:be_bound}
    1 \le M_2 \frac{\|\bbe\| ^{\kappa}}{\om ^{\kappa}} , \qquad {\rm where} \qquad M_2 = \frac{1}{\ep_0^{\kappa}}\max_{i,k} \|x_{i,k}\|^{\kappa}.
\end{equation}

\subsubsection*{Likelihoods}

Marginalizing the latent variables out, we have the likelihood as the product of: 
\begin{equation*}
    p( \y_i | \bbe , \bSi ) = \sum _{ \z_i \in \{ 0,1 \} ^p } p_{\z_i} \int _{\mathbb{R}^{|\Zc(i)|}} {\rm N}_{p}(\y_i|\X_i\bbe,\T_i\bSi\T_i) \Big\{ \prod_{k\in\Zc(i)} \pi_1 (t_{i,k}) \Big\} d\t_{i,\Zc(i)},
\end{equation*}
where $p_{\z_i} = \phi ^{|\Zc(i)|} (1-\phi )^{|\overbar{\Zc}(i)|}$ and $t_{i,k}=1$ for $k\in\overbar{\Zc}(i)$ in $\T_i$. The marginal likelihood, or the marginal density of $\y_{1:n}$, is obtained with prior $p_0(\bbe,\bSi)$ as 
\begin{equation*}
\begin{split}
    p( \y_{1:n} ) &= \int \left\{ \prod_{i=1}^n p( \y_i | \bbe , \bSi ) \right\} p_0(\bbe,\bSi) d(\bbe,\bSi) \\
    &= \sum _{ \z_{1:n} \in \{ 0,1 \} ^{np} } \Big( p_{\z} \int \Big[ p_0(\bbe,\bSi) \non \\
    &\quad \times \prod_{i=1}^n \int _{\mathbb{R}^{|\Zc(i)|}} {\rm N}_{p}(\y_i|\X_i\bbe,\T_i\bSi\T_i) \Big\{ \prod_{k\in\Zc(i)} \pi_1 (t_{i,k}) \Big\} d\t_{i,\Zc(i)} \Big] d(\bbe,\bSi) \Big) ,
\end{split}
\end{equation*}
where $p_{\z} = \prod _{i=1}^n p_{\z_i}$. To emphasize the integral, we also write as 
\begin{equation} \label{eq:marginal}
    p( \y_{1:n} ) = \sum _{ \z_{1:n} \in \{ 0,1 \} ^{np} }  \int f( \bbe,\bSi,\z_{1:n},\om) d(\bbe,\bSi) ,
\end{equation}
where
\begin{equation*}
    f( \bbe,\bSi,\z_{1:n},\om) = p_{\z} p_0(\bbe,\bSi) \prod _{i=1}^n \int _{\mathbb{R}^p} {\rm N}_{p}(\y_i|\X_i\bbe,\T_i\bSi\T_i) \Big\{ \prod_{k\in\Zc(i)} \pi_1 (t_{i,k}) \Big\} d\t_{i,\Zc(i)}.
\end{equation*}

\subsection{Proof of Theorem~2}

\subsubsection*{Overview of the proof}

The proof of Theorem~2 is summarized into three steps. 

\begin{itemize}
    \item {\bf Step 1.} Define the scaling constant, $C_i(\om)$, and show the robustness of likelihood. 

    Theorem~1 shows the robustness of likelihood with $C_i(\om ) = \prod_{k\in\Lc(i)}\pi (|y_{i,k}|)$. In the current model, however, $\pi$ is the mixture of two component, one of which is degenerated and does not have a density function with respect to the Lebesgue measure. Hence, we need the custom proof for this model with a different choice of $C_i(\om )$. 

    \item {\bf Step 2.} Compute the limit of the marginal likelihood by restricting the support of the parameters and latent variables to $D(\om )$ and $\Zc$. 

    For the marginal likelihood in equation (\ref{eq:marginal}), we argue that 
    \begin{equation*}
    \begin{split}
        \lim _{\om \to \infty} \sum _{ \z_{1:n} \in \Zc } \int \mathbbm{1}_{D(\om)}(\bbe) f( \bbe,\bSi,\z_{1:n},\om)  d(\bbe,\bSi) = p(\y_{1,\Lc(1)},\dots ,\y_{n,\Lc(n)}).
    \end{split}
    \end{equation*}
    The computation of this limit is straightforward by using the result of Step~1. The exchange of the limit and integral is justified by the dominated convergence theorem. 

    \item {\bf Step 3.} Show that the rest converges to zero. 

    What we need to show is, for each $\z_{1:n} \in \{ 0,1 \} ^{pn}$, 
    \begin{equation*}
    \begin{split}
        &\lim _{\om \to \infty} \int \mathbbm{1}[\bbe \not\in D(\om) \ \mathrm{or} \ \z_{1:n}\not\in\Zc ] f( \bbe,\bSi,\z_{1:n},\om)  d(\bbe,\bSi) \\ 
        =&\lim _{\om \to \infty} \int \{ \ \mathbbm{1}[\bbe \in D(\om) \ \mathrm{and} \ \z_{1:n}\not\in\Zc ] + \mathbbm{1}[\bbe \not\in D(\om) ] \ \} f( \bbe,\bSi,\z_{1:n},\om)  d(\bbe,\bSi) \\
        =& 0.
    \end{split}
    \end{equation*}
    This limit is bounded by some risk function, which is proved to converge to zero.

\end{itemize}

\subsubsection*{Proof of Theorem~2: Step 1}

We first fix $i \in \{ 1,\dots, n\}$ and $(\bbe,\bSi)$, and show $p( \y _i | \bbe , \bSi ) / C_i(\om ) \to p( \y _{i,\Kc(i)} | \bbe , \bSi )$ as $\om \to \infty$ for constant $C_i(\om )$ given by  
\begin{equation*}
    C_i(\om ) = \{ \phi C \} ^{|\Lc(i)|} \prod_{k \in \Lc(i)} | y_{i, k} |^{-1} \{ \log (1 + | y_{i, k} |) \} ^{-(1 + c)} ,
\end{equation*}
where $C = \lim _{t\to\infty} \pi_1(t)|t| ( \log |t|)^{1+c}$ is the tail index of the super heavily tailed component $\pi _1$. Noting that $t_{i,k}=1$ for $k \in \overbar{\Zc}(i)$, and that $t_{i,k} \sim \pi _1(t_{i,k})$ for $k \in \Zc(i)$, we write down the likelihood as 
\begin{equation*}
    \begin{split}
    p( \y_i | \bbe , \bSi ) / C_i(\om) &= \{ \phi C \} ^{-|\Lc(i)|}\sum _{ \z_i \in \{ 0,1 \} ^p } p_{\z_i} \int _{\mathbb{R}^{p}} {\rm N}_{p}(\y_i|\X_i\bbe,\T_i\bSi\T_i) \\
    &\qquad\qquad \times \Big\{ \prod _{k\in\Zc(i)} \pi (t_{i,k}) \Big\} \Big( \prod _{k \in \Lc(i)} | y_{i, k} | \{ \log (1 + | y_{i, k} |) \} ^{1 + c}  \Big) d\t_{i,\Zc(i)} \\
    &= \{ \phi C \} ^{-|\Lc(i)|}\sum _{ \z_i \in \{ 0,1 \} ^p } p_{\z_i} g( \y_i,\z_i ) ,
    \end{split}
\end{equation*}
where 
\begin{equation} \label{eq:gfunc_t}
\begin{split}
    g( \y_i,\z_i ) &= \int_{\mathbb{R} ^{|\Zc(i)|} } {\rm{N}}_p  \! \left( \left. \E_i \begin{bmatrix}
    \y _{i, \overbar{\Zc} (i)} - \X _{i, \overbar{\Zc} (i)} \bbe  \\
    \{ \y _{i, \Zc (i)} - \X _{i, \Zc (i)} \bbe \} / \t _{i, \Zc (i)}
    \end{bmatrix} \right| \bm{0}^{(p)} , \bSi \right) \\
    &\qquad \times \Big\{ \prod_{k \in \Zc (i)} \frac{ \pi _1 ( t_{i, k}  ) }{ |t _{i, k} | } \Big\} \prod_{k \in \Lc(i)} | y_{i, k} | \{ \log (1 + | y_{i, k} |) \} ^{1 + c}  d\t_{i,\Zc(i)}.
\end{split}
\end{equation}
As in the proof of Theorem~1, we set $\bxi _{i,\Zc(i)} = |\y_{i,\Zc(i)}|/\t_{i,\Zc(i)}$ and rewrite $g( \y_i,\z_i )$ as
\begin{equation} \label{eq:gfunc}
\begin{split}
    g( \y_i,\z_i ) &= \int_{\mathbb{R} ^{|\Zc(i)|} } {\rm{N}}_p  \! \left( \left. \E_i \begin{bmatrix}
    \y _{i, \overbar{\Zc} (i)} - \X _{i, \overbar{\Zc} (i)} \bbe  \\
    \{ \y _{i, \Zc (i)} - \X _{i, \Zc (i)} \bbe \} \bxi _{i, \Zc (i)} / |\y _{i, \Zc (i)}|
    \end{bmatrix} \right| \bm{0}^{(p)} , \bSi \right) \\
    &\qquad \times \Big\{ \prod_{k \in \Zc (i)} \frac{ \pi _1 ( |y_{i,k}|/\xi_{i, k}  ) }{ |\xi _{i, k} | } \Big\} \Big( \prod_{k \in \Lc(i)} | y_{i, k} | \{ \log (1 + | y_{i, k} |) \} ^{1 + c} \Big) d\bxi _{i,\Zc(i)}.
\end{split}
\end{equation}
Note that this function also depends on $(\bbe,\bSi)$, which is crucial at Steps~2 and 3, although we stick to this notation for notational convenience. 
To show the robustness of likelihood, we first assume the exchange of the limit and integral (with respect to $\bxi_{i,\Zc(i)}$) and compute the limit, then justify the exchange by the dominated convergence theorem. Finally, for later use, we define the $g$ function without outliers as 
\begin{equation*}
\begin{split}
    &g( \y_{i,\Kc(i)},\z_{i,\Kc(i)} ) \non \\
    &= \int_{\mathbb{R} ^{|\Zc(i)\cap\Kc(i)|} } {\rm{N}}_{|\Kc(i)|}  \! \left( \left. \begin{bmatrix}
    \displaystyle \y _{i, \overbar{\Zc} (i)\cap\Kc(i)} - \X _{i, \overbar{\Zc} (i)\cap\Kc(i)} \bbe  \\
    \displaystyle { \y _{i, \Zc (i)\cap\Kc(i)} - \X _{i, \Zc (i)\cap\Kc(i)} \bbe \over  |\y _{i, \Zc (i)\cap\Kc(i)}| / \bxi _{i, \Zc (i)\cap\Kc(i)} }
    \end{bmatrix} \right| \bm{0}^{(|\Kc(i)|)} , ( \E_i^{\top}\bSi\E_i )_{\Kc(i),\Kc(i)} \right) \\
    &\qquad \times \Big\{ \prod_{k \in \Zc (i)\cap\Kc(i)} \frac{ \pi _1 ( |y_{i,k}|/\xi_{i, k}  ) }{ |\xi _{i, k} | } \Big\} d\bxi _{i,\Zc(i)\cap\Kc(i)}. %
\end{split}
\end{equation*}

The likelihood of interest is the sum over $\z_i$, which is split into two parts: $\Lc(i) \subset \Zc(i)$ and $\Lc(i) \not\subset \Zc(i)$. We show that the former converges to $p( \y _{i,\Kc(i)} | \bbe , \bSi )$, while the latter converges to zero, which means the robustness of the likelihood as a whole. First, we assume $\Lc(i) \subset \Zc(i)$, or $\Lc(i)\cap \Zc(i) = \Lc(i)$, and compute the limit of the likelihood above. Using the change-of-variable, $t_{i,k} = |y_{i,k}| / \xi _{i,k}$ for $k \in \Lc(i)$, letting $\om \to \infty$, then marginalizing $\xi_{i,\Lc(i)}$ out, we have 
\begin{equation*}
    \begin{split}
    &g( \y_i ,\z_i ) \non \\
    &\to \int_{\mathbb{R} ^{|\Zc(i)|}} {\rm{N}}_p  \! \left( \left. \E_i \begin{bmatrix} \y _{i, \overbar{\Zc} (i)} - \X _{i, \overbar{\Zc} (i)} \bbe \\ \{ \y _{i, \Zc(i) \cap \Kc (i)} - \X _{i, \Zc (i) \cap \Kc (i)} \bbe \} \bxi _{i, \Zc(i) \cap \Kc (i)} / |\y _{i, \Zc(i) \cap \Kc (i)}|  \\  \sgn \{ \d _{i, \Lc (i)} \}  \bxi _{i, \Lc (i)} \end{bmatrix} \right| \bm{0}^{(p)} , \bSi \right) \\
    &\qquad \times C^{|\Lc(i)|} \Big\{ \prod_{k \in \Zc (i)\cap \Kc(i)} \frac{ \pi _1 ( |y_{i, k}|/\xi _{i,k} ) }{ |\xi _{i,k}| } \Big\} d\bxi_{i,\Zc(i)} \\ %
    &= C^{|\Lc(i)|} g( \y_{i,\Kc(i)},\z_{i,\Kc(i)} ).
    \end{split}
\end{equation*}

Next, suppose that $\Lc(i) \not\subset \Zc(i)$. That is, there exists $k \in \Lc(i)\cap\overbar{\Zc}(i)$, for which $|y_{i,k}|\to \infty$ and $t_{i,k}=1$. In the functional form of $g(\y_i,\z_i)$, such $y_{i,k}$ is seen in the argument of the normal density, or $\y _{i, \overbar{\Zc} (i)} - \X _{i, \overbar{\Zc} (i)} \bbe$, for which the normal density in $g(\y_i,\z_i)$ converges to zero at the exponential rate. In contrast, $| y_{i, k} | \{ \log (1 + | y_{i, k} |) \} ^{1 + c}$ diverges at the polynomial rate. Thus, the likelihood as a whole converges to zero. That is, 
\begin{equation*}
    g( \y_i,\z_i ) \to 0.
\end{equation*}

Combining these two results, and noting the equivalence between $\Lc(i)\subset\Zc(i)$ and $\z_{i,\Lc(i)} = (1,\dots,1)^{\top}$, we conclude
\begin{equation*}
    \begin{split}
    p( \y_i | \bbe , \bSi ) / C_i(\om) &= \{ \phi C \} ^{-|\Lc(i)|}\sum _{ \z_i \in \{ 0,1 \} ^p } \{ \ \mathbbm{1}[ \Lc(i) \subset \Zc(i) ] + \mathbbm{1}[ \Lc(i) \not\subset \Zc(i) ] \ \} \ p_{\z_i} \ g( \y_i,\z_i ) \\ 
    &\to \{ \phi C \} ^{-|\Lc(i)|}\sum _{ \z_i \in \{ 0,1 \} ^p } \mathbbm{1}[ \Lc(i) \subset \Zc(i) ] \ p_{\z_i}  C^{|\Lc(i)|} g( \y_{i,\Kc(i)},\z_{i,\Kc(i)} ) \\
    &= \phi ^{-|\Lc(i)|} \sum _{ \z_{i,\Kc(i)} \in \{ 0,1 \} ^{|\Kc(i)|} } \phi ^{|\Lc(i)|} p_{\z_{i,\Kc(i)}} g( \y_{i,\Kc(i)},\z_{i,\Kc(i)} ) \\
    &= p( \y_{i,\Kc(i)} | \bbe , \bSi ).
    \end{split}
\end{equation*}

To justify the computation above, we need to find an integrable function that dominates the integrand in (\ref{eq:gfunc}). Let $\bsi^2$ be the vector of the diagonal elements of $\bSi$. Then, by Lemma~\ref{lem:corr}, we have $\T_i\bSi \T_i \le p\cdot \diag(\bsi^2\t_i^2)$ and, for given $\z_i$,  
\begin{equation*}
\begin{split}
    &-\frac{1}{2} (\y_i-\X_i\bbe)^{\top} (\T_i\bSi\T_i)^{-1} (\y_i-\X_i\bbe) \non \\
    &\le -\frac{1}{2p} \left\{ \Big\| {\y _{i, \overbar{\Zc}(i)} - \X _{i, \overbar{\Zc}(i)} \bbe \over \bsi _{\overbar{\Zc}(i)} } \Big\| ^2 + \Big\| {\y _{i, \Zc(i)} - \X _{i, \Zc(i)} \bbe \over \bsi _{\Zc(i)} \t_{i,\Zc(i)} } \Big\| ^2 \right\} \\
    &\le -\frac{1}{2p} \left\{ \Big\| {\y _{i, \overbar{\Zc}(i)\cap\Lc(i)} - \X _{i, \overbar{\Zc}(i)\cap\Lc(i)} \bbe \over \bsi _{\overbar{\Zc}(i)\cap\Lc(i)} } \Big\| ^2 + \Big\| {\y _{i, \Zc(i)} - \X _{i, \Zc(i)} \bbe \over \bsi _{\Zc(i)} |\y_{i,\Zc(i)}| } \cdot \bxi_{i,\Zc(i)} \Big\| ^2 \right\} .
\end{split}
\end{equation*}
To bound the first term above in the exponential scale, for $k \in \overbar{\Zc}(i)\cap\Lc(i)$, we observe that 
\begin{equation*}
    |y_{i,k}| \{ \log ( 1+|y_{i,k}| ) \}^{1+c} \exp \Big\{ -\frac{1}{2p} \frac{(y_{i,k}-\x_{i,k}^{\top}\bbe)^2}{\sigma_k^2} \Big\} \to 0,
\end{equation*}
as $\om \to \infty$. Thus, for any constant $M_1>0$, if $\om$ is sufficiently large, we have
\begin{equation*}
    \Big( \prod _{k \in \overbar{\Zc}(i)\cap\Lc(i)} |y_{i,k}| \{ \log ( 1+|y_{i,k}| ) \}^{1+c}  \Big) \exp \Big\{ -\frac{1}{2p} \Big\| {\y _{i, \overbar{\Zc}(i)\cap\Lc(i)} - \X _{i, \overbar{\Zc}(i)\cap\Lc(i)} \bbe \over \bsi _{\overbar{\Zc}(i)\cap\Lc(i)} } \Big\| ^2 \Big\} \le M_1,
\end{equation*}
noting that $M_1$ is independent of $\om$. 
For the second term, for sufficiently large $\om$, we find the upper bound as 
\begin{equation*}
\begin{split}
    &-\frac{1}{2p} \Big\| {\y _{i, \Zc(i)} - \X _{i, \Zc(i)} \bbe \over \bsi _{\Zc(i)} |\y_{i,\Zc(i)}| } \cdot \bxi_{i,\Zc(i)} \Big\| ^2 \non \\
    &= -\frac{1}{2p} \left\{ \sum _{ k\in \Zc(i)\cap\Lc(i) } { (y _{i, k} - \x _{i,k}^{\top} \bbe )^2 \over \sigma _{k}^2 y_{i,k}^2 } \xi_{i,k}^2 +  \sum _{ k\in \Zc(i)\cap\Kc(i) } { (y _{i, k} - \x _{i,k}^{\top} \bbe )^2 \over \sigma _{k}^2 y_{i,k}^2 } \xi_{i,k}^2 \right\} \\
    &\le -\frac{1}{2p} \left\{ \sum _{ k\in \Zc(i)\cap\Lc(i) } 2^2 { \xi  _{i, k}^2 \over \sigma _k^2 } +  \sum _{ k\in \Zc(i)\cap\Kc(i) } { (y _{i, k} - \x _{i,k}^{\top} \bbe )^2 \over y_{i,k}^2 } { \xi _{i,k}^2 \over \sigma _k^2 } \right\} \\
    &\le -\frac{M_2}{2p} \left\| { \bxi _{i,\Zc(i)} \over \bsi _{\Zc(i)} } \right\|^2, \qquad M_2 = \min\left\{ 2^2 , \min _{ k\in \Zc(i)\cap\Kc(i) } { (y _{i, k} - \x _{i,k}^{\top} \bbe )^2 \over y_{i,k}^2 }\right\} .
\end{split}
\end{equation*}
Finally, by Lemma~\ref{lem:local}, there exists $M_3>0$ such that 
\begin{equation*}
    \prod_{k \in \Zc (i)} \frac{ \pi _1 ( |y_{i,k}|/\xi_{i, k}  )}{ |\xi_{i, k} | } | y_{i, k} | \{ \log (1 + | y_{i, k} |) \} ^{1 + c} \le M_3 \prod_{k \in \Zc (i)} \{ 1 + \log (1 + | \xi _{i, k} | ) \} ^{1 + c}.
\end{equation*}
Using all those inequalities, we bound the integrand in (\ref{eq:gfunc}) as
\begin{equation*}
\begin{split}
    &{\rm{N}}_p  \! \left( \left. \E_i\begin{bmatrix}
    \y _{i, \overbar{\Zc} (i)} - \X _{i, \overbar{\Zc} (i)} \bbe  \\
    \{ \y _{i, \Zc (i)} - \X _{i, \Zc (i)} \bbe \} \bxi_{i,\Zc(i)} / |\y _{i, \Zc (i)}|
    \end{bmatrix} \right| \bm{0}^{(p)} , \bSi \right) \non \\
    &\times \Big( \prod_{k\in\Lc(i)} | y_{i, k} | \{ \log (1 + | y_{i, k} |) \} ^{1 + c} \Big) \prod_{k \in \Zc (i)} \frac{ \pi _1 ( |y_{i, k}|/\xi_{i,k}  ) }{ |\xi _{i, k} | } \\
    &\le \frac{M_1M_3}{(2\pi)^{p/2}|\bSi|^{1/2}}  \exp \left\{ -\frac{M_2}{2p} \left\| { \bxi_{i,\Zc(i)} \over \bsi _{\Zc(i)} } \right\|^2 \right\} \non \\
    &\quad \times \Big[ \prod_{k \in \Zc (i)}  \{ 1 + \log (1 + | \xi_{i, k} | ) \} ^{1 + c} \Big] M_4 \prod_{k \in \Zc(i)\cap\Kc(i)} \frac{1}{ |y_{i,k}| \{ 1 + \log (1 + | y_{i, k} | ) \} ^{1 + c} } \\
    &= M_5 \exp \left\{ -\frac{M_2}{2p} \left\| { \bxi_{i,\Zc(i)} \over \bsi _{\Zc(i)} } \right\|^2 \right\} \prod_{k \in \Zc (i)} \{ 1 + \log (1 + | \xi _{i, k} | ) \} ^{1 + c},
\end{split}
\end{equation*}
for all $\bxi_{i,\Zc(i)}$, with some constants $M_4 , M_5 >0$. The last expression is independent of $\om$. To see its integrability, observe that 
\begin{equation*}
\begin{split}
    &\int  \exp \left\{ -\frac{M_2}{2p} \left\| \bxi _{i,\Zc(i)} \right\|^2 \right\} \Big[ \prod_{k \in \Zc (i)}  \{ 1 + \log (1 + |\xi _{i,k}|) \} ^{1 + c} \Big] d\bxi_{i,\Zc(i)} \non \\
    &= \mathbb{E}\left[ \prod_{k \in \Zc (i)}  \{ 1 + \log (1 + |\xi _{i,k}|) \} ^{1 + c} \right] < \infty \text{,} 
\end{split}
\end{equation*}
where $\bxi_{i,\Zc(i)}$ in the expectation is normally distributed with zero mean and variance $(p/M_1)\I^{( |\Zc(i)| )}$, thus the expectation above is finite. This completes the justification by the dominated convergence theorem.

\subsection*{Proof of Theorem~2: Step 2}

To show the robustness of the normalizing constant in (\ref{eq:marginal}), we express the scaled version of $p(\y_{1:n})$ by using $g(\y_i,\z_i)$ in (\ref{eq:gfunc}) as 
\begin{equation*}
\begin{split}
    \frac{p(\y_{1:n})}{ \prod _{i=1}^n C_i(\om) } &= \int p_0(\bbe,\bSi) \Big\{ \prod _{i=1}^n \frac{p(\y_i|\bbe,\bSi)}{C_i(\om)} \Big\} d(\bbe,\bSi) \\
    &= \int p_0(\bbe,\bSi) \left\{ \prod _{i=1}^n \{ \phi C \} ^{-|\Lc(i)|} \sum _{\z_i \in \{ 0,1 \}^p } p_{\z_i} g(\y_i,\z_i) \right\} d(\bbe,\bSi).
\end{split}
\end{equation*}
Here, we restrict the support of $\bbe$ to $D(\om)$, and that of $\z_{1:n}$ to $\Zc$ (i.e., we assume that $\Lc(i)\subset\Zc(i)$ for all $i$), to compute its limit. As $\om \to \infty$, if the exchange of the limit and integral is allowed, then the result of Step~1 applies directly as
\begin{align}
&\qquad \int \mathbbm{1}[\bbe\in D(\om)] p_0(\bbe,\bSi) \left\{ \prod _{i=1}^n \{ \phi C \} ^{-|\Lc(i)|} \sum _{ \{ \z_i : \Lc(i)\subset \Zc(i) \} } p_{\z_i} g(\y_i,\z_i) \right\} d(\bbe,\bSi) \label{eq:dom_st2} \\
&\to \int p_0(\bbe,\bSi) \left[ \prod _{i=1}^n \Big\{  \sum _{\z_{i,\Kc(i)} \in \{ 0,1 \}^{|\Kc(i)|} } p_{\z_{i,\Kc(i)}} g(\y_{i,\Kc(i)},\z_{i,\Kc(i)}) \Big\} \right] d(\bbe,\bSi) \non \\
&= \int p_0(\bbe,\bSi) \Big\{ \prod _{i=1}^n p(\y_{i,\Kc(i)}|\bbe,\bSi) \Big\} d(\bbe,\bSi) \non \\
&= p(\y_{1,\Kc(1)},\dots, \y_{n,\Kc(n)}) .\non
\end{align}
Below, we justify this exchange. 

Before providing the dominating function, we show several useful inequalities. First, for any given constant $\kappa >0$, the function of $x$ on $[0,\infty)$ defined below is bounded by some constant $M_1 >0$;
\begin{equation} \label{eq:xfunc1}
    \frac{ \{ 1+\log(1+x) \} ^{1+c} }{ 1+ x^{\kappa} } \le M_1,
\end{equation}
because the left hand side is continuous at all $x \in [0,\infty)$ and finite at $x = 0$ and $x \to \infty$. 
Similarly, 
\begin{equation*}
    \exp (- x^{\kappa/2} ) x,
\end{equation*}
is also bounded on $[0,\infty)$ by some constant $M_2>0$. For this reason, for any constant $a>0$, we have 
\begin{equation} \label{eq:xfunc2}
\begin{split}
    \exp (- x^{\kappa/2} ) (1+ax) &\le M_2 (x^{-1} + a) \qquad \mbox{(for any $x>0$)} \\
    &\le M_2(1 + a) \qquad\quad \mbox{(for any $x>1$)}.
\end{split}
\end{equation}
Using the same reasoning, we show that there are some positive constants, $M_3$ and $M_4$, such that 
\begin{equation*}
\begin{split}
    &\prod _{k\in \Lc(i)} \Big[ \frac{ \pi _1(t_{i,k}) }{ |t_{i,k}| } |y_{i,k}| \{ \log (1+|y_{i,k}|) \}^{1+c} \Big] \\
    &\le M_3 \prod _{k\in \Lc(i)} \Big[ \frac{  \{ 1 + \log (1+|y_{i,k}|/|t_{i,k}|) \}^{1+c} }{ |t_{i,k}|^2 } |y_{i,k}| \Big] \qquad \mbox{(by Lemma \ref{lem:local})} \\
    &\le M_4 \ \om ^{|\Lc(i)|} \prod _{k\in \Lc(i)} \frac{  \{ 1 + \log (1+\om/|t_{i,k}|) \}^{1+c} }{ |t_{i,k}|^2 } \qquad \mbox{(by the boundedness as the function of $\om$)} 
\end{split}
\end{equation*}
Second, it is immediate from Lemma~\ref{lem:corr} ($\bSi \le p\cdot \diag ( \bsi^2 )$) that 
\begin{equation} \label{eq:quad}
    - {1 \over 2} \Big( {\y _i - \X _i \bbe \over \t _i} \Big) ^{\top } \bSi ^{- 1} \Big( {\y _i - \X _i \bbe \over \t _i} \Big) \le - {1 \over 4 p} \Big\| {\y _i - \X _i \bbe \over \bsi \t _i} \Big\| ^2 \le - {1 \over 4 p} \Big\| {\y _{i,\Lc(i)} - \X _{i,\Lc(i)} \bbe \over \bsi _{\Lc(i)} \t _{i,\Lc(i)}} \Big\|^2 .
\end{equation}
Since we assume $\bbe \in D(\om)$, by definition, for any $\varepsilon >0$ that satisfies $\varepsilon < |d_{i,k}|$ for all $i\in\{1,\dots,n\}$ and $k\in\Lc(i)$, and for any sufficiently large $\om >0$, we have  
\begin{equation*}
    | y_{i,k} - \x_{i,k}^{\top} \bbe | > \varepsilon \om,
\end{equation*}
for all $i\in\{1,\dots,n\}$ and $k\in\Lc(i)$. Then, the right hand side in (\ref{eq:quad}) is bounded further as 
\begin{equation*}
\begin{split}
    - {1 \over 4 p} \Big\| {\y _{i,\Lc(i)} - \X _{i,\Lc(i)} \bbe \over \bsi_{\Lc(i)} \t _{i,\Lc(i)}} \Big\|^2 &= - {1 \over 4 p} \sum_{k\in\Lc(i)} \left( { y_{i,k} - \x _{i,k}^{\top} \bbe \over \sigma_k t_{i,k} } \right) ^2 \le - {1 \over 4 p} \sum_{k\in\Lc(i)} \left( { \varepsilon\om \over \sigma_k t_{i,k} } \right) ^2.
\end{split}
\end{equation*}
Thus, we have 
\begin{align*}
&\exp \Big\{ - {1 \over 4 p} \Big( {\ep \om \over \si _k t_{i, k}} \Big) ^2 \Big\} \{ 1 + \log (1 + \om / | t_{i, k} |) \} ^{1 + c} & & \\
&\le M_1 \exp \Big\{ - {1 \over 4 p} \Big( {\ep \om \over \si _k t_{i, k}} \Big) ^2 \Big\} \{ 1 + ( \om / | t_{i, k} |)^{\ka } \} &&\mbox{(by Equation (\ref{eq:xfunc1}))} \\
&\le M_1M_2 \Big[ 1 + \Big\{ (4 p)^{\ka / 2} {{\si _k}^{\ka } \over \ep ^{\ka }} \Big\} \Big] &&\mbox{(by Equation (\ref{eq:xfunc2}))}  \\
&\le M_1 M_2 \max \{ 1, (4 p)^{\ka / 2} / \ep ^{\ka } \} (1 + {\si _k}^{\ka } ). 
\end{align*}

With these inequalities in mind, we find a function that dominates the likelihood function. This time, the function must bound the likelihood for all $(\bbe,\bSi)$. 
Below, we use the definition of $g(\y_i,\z_i)$ as the integral of $\t _{i,\Zc(i)}$, given in (\ref{eq:gfunc_t}). 
Furthermore, for notational clarity, we write $t_{i,k}=1$ and $\pi_{z_{i,k}}(t_{i,k})=1$ for $k\in\overbar{\Zc}(i)$ ($z_{i,k}=0$), and $\pi_{z_{i,k}}(t_{i,k})=\pi _1 (t_{i,k})$ for $k\in\Zc(i)$ ($z_{i,k}=1$). 
Then, we fix $\ka _1 > 0$ and evaluate 
\begin{align}
&\int_{\mathbb{R} ^p} {1 \over (2\pi)^{p/2} | \bSi |^{1 / 2}} \exp \Big\{ - {1 \over 2} \Big( {\y _i - \X _i \bbe \over \t _i} \Big) ^{\top } \bSi ^{- 1} \Big( {\y _i - \X _i \bbe \over \t _i} \Big) \Big\} \non \\
&\quad \times \Big[ \prod_{k = 1}^{p} {\pi_{z_{i,k}} ( t_{i, k} ) \over | t_{i, k} |} \Big]  \prod_{k \in \Lc(i)} | y_{i, k} | \{ \log (1 + | y_{i, k} |) \} ^{1 + c} d{\t _i} \non \\
&\le M_5 \om ^{| \Lc (i)|} \int_{\mathbb{R} ^p} {1 \over | \bSi |^{1 / 2}} \exp \Big\{ - {1 \over 4} \Big( {\y _i - \X _i \bbe \over \t _i} \Big) ^{\top } \bSi ^{- 1} \Big( {\y _i - \X _i \bbe \over \t _i} \Big) \Big\} \Big\{ \prod_{k \in  \Kc (i)} {\pi _{z_{i, k}} ( t_{i, k} ) \over | t_{i, k} |} \Big\} \non \\
&\qquad\qquad\qquad \times \exp \Big( - {1 \over 4 p} \Big\| {\y _i - \X _i \bbe \over \bsi \t _i} \Big\| ^2 \Big) \prod_{k \in \Lc (i)} {\{ 1 + \log (1 + | \om | / | t_{i, k} |) \} ^{1 + c} \over | t_{i, k} |^2} d{\t _i} \non \\
&\le M_6 \om ^{| \Lc (i)|} \int_{\mathbb{R} ^p} {1 \over | \bSi |^{1 / 2}} \exp \Big\{ - {1 \over 4} \Big( {\y _i - \X _i \bbe \over \t _i} \Big) ^{\top } \bSi ^{- 1} \Big( {\y _i - \X _i \bbe \over \t _i} \Big) \Big\} \Big\{ \prod_{k \in  \Kc (i)} {\pi _{z_{i, k}} ( t_{i, k} ) \over | t_{i, k} |} \Big\} \non \\
&\qquad\qquad\qquad \times \exp \Big\{ - {1 \over 4 p} \sum_{k \in \Lc (i)} \Big( {\ep \om \over \si _k t_{i, k}} \Big) ^2 \Big\} \prod_{k \in \Lc (i)} {\{ 1 + \log (1 + \om / | t_{i, k} |) \} ^{1 + c} \over | t_{i, k} |^2} d{\t _i} \non \\
&\le M_7 \om ^{| \Lc (i)|} \int_{\mathbb{R} ^p} {1 \over (2 \pi )^{p / 2}} {1 \over 2^{p / 2}} {1 \over | \bSi |^{1 / 2}} \exp \Big\{ - {1 \over 4} \Big( {\y _i - \X _i \bbe \over \t _i} \Big) ^{\top } \bSi ^{- 1} \Big( {\y _i - \X _i \bbe \over \t _i} \Big) \Big\} \non \\
&\qquad\qquad\qquad \times \Big\{ \prod_{k \in \Kc (i)} {\pi _{z_{i, k}} ( t_{i, k} ) \over | t_{i, k} |} \Big\} \Big\{ \prod_{k \in \Lc (i)} {1 + {\si _k}^{\ka _1} \over | t_{i, k} |^2} \Big\} d{\t _i} \non 
\end{align}
for some positive $M_5$, $M_6$, and $M_7$. In the last expression above, we can integrate $\t_{i,\Lc(i)}$ out (by the change of variables from $\t_{i,\Lc(i)}$ to $\{ \y_{i,\Lc(i)} - \X_{i,\Lc(i)}\bbe \} /\t_{i.\Lc(i)}$ with Jacobian $\prod _{k\in\Lc(i)} |y_{i,k} - \x_{i,\Lc(i)}^{\top}\bbe |^{-1}$) to obtain the marginal normal density (kernel) of $\y_{i,\Kc(i)}$. By rearranging the terms, noting that we have assumed $\bbe\in D(\om)$, we have 
\begin{align*}
&g(\y_i,\z_i) \non \\
&\le M_7 \Big\{ \prod_{k \in \Lc (i)} (1 + {\si _k}^{\ka _1} ) \Big\} \Big\{ \prod_{k \in \Lc (i)} {\om \over | y_{i, k} - \x _{i, k}^{\top } \bbe |} \Big\} \int_{\mathbb{R} ^{| \Kc (i)|}} \Big( \Big\{ \prod_{k \in \Kc (i)} {\pi _{z_{i, k}} ( t_{i, k} ) \over | t_{i, k} |} \Big\} \non \\
&\quad \times {1 / (4 \pi )^{|\Kc (i)| / 2} \over | \bSi _{\Kc (i), \Kc (i)} |^{1 / 2}} \exp \Big[ - {1 \over 4} \Big\{ {\y _{i, \Kc (i)} - \X _{i, \Kc (i)} \bbe \over \t _{i, \Kc (i)}} \Big\} ^{\top } \bSi_{\Kc (i), \Kc (i)}^{- 1} \Big\{ {\y _{i, \Kc (i)} - \X _{i, \Kc (i)} \bbe \over \t _{i, \Kc (i)}} \Big\} \Big] \Big) d{\t _{i, \Kc (i)}} \non \\
&\le \frac{ M_7 %
}{\varepsilon^{|\Lc(i)|}} \Big\{ \prod_{k \in \Lc (i)} (1 + {\si _k}^{\ka _1} ) \Big\} \non \\
&\quad \times \int_{\mathbb{R} ^{| \Kc (i)|}} {\rm{N}}_{| \Kc (i)|} ( \y _{i, \Kc (i)} | \X _{i, \Kc (i)} \bbe , 2 \T _{i, \Kc (i)} \bSi_{\Kc (i), \Kc (i)} \T _{i, \Kc (i)} ) \Big\{ \prod_{k \in \Kc (i)} \pi _{z_{i, k}} ( t_{i, k} ) \Big\} d{\t _{i, \Kc (i)}} \\
&\le M_8\Big\{ \prod_{k \in \Lc (i)} (1 + {\si _k}^{\ka _1} ) \Big\} g(\y_{i,\Kc(i)},\z_{i,\Kc(i)}),
\end{align*}
for some $M_8>0$. Finally, the integrand of interest in (\ref{eq:dom_st2}) is bounded as 
\begin{align}
&\mathbbm{1}[\bbe\in D(\om)] p_0(\bbe,\bSi) \prod _{i=1}^n \left\{ \{ \phi C \} ^{-|\Lc(i)|} \sum _{ \{ \z_i : \Lc(i)\subset \Zc(i) \} } p_{\z_i} g(\y_i,\z_i) \right\} \non \\
    &\le M_9 \ \tilde{p}_0(\bbe,\bSi) \prod _{i=1}^n \left\{ \sum _{ \{ \z_{i,\Kc(i)} \in \{0,1\}^p \} } p_{\z_{i,\Kc(i)}} g(\y_{i,\Kc(i)},\z_{i,\Kc(i)}) \right\} , \non 
\end{align}
for some $M_9>0$, where 
\begin{equation*}
    \tilde{p}_0(\bbe,\bSi) \propto \Big\{ \prod _{i=1}^n \prod _{k\in\Lc(i)} (1 + {\si _k}^{\ka _1} ) \Big\} p_0(\bbe,\bSi).
\end{equation*}
To see the integrability of the dominating function provided above, view its integral with respect to $(\bbe,\bSi)$ as the marginal likelihood of the following hierarchical model:
\begin{equation*}
    \begin{split}
        \y_{i,\Kc(i)} | (\bbe,\bSi,\t_{i,\Kc(i)}) &\stackrel{\rm ind.}{\sim} {\rm{N}}_{| \Kc (i)|} ( \X _{i, \Kc (i)} \bbe , 2 \T _{i, \Kc (i)} \bSi_{\Kc (i), \Kc (i)} \T _{i, \Kc (i)} ) \\
        (\bbe,\bSi) &\sim \tilde{p}_0(\bbe,\bSi),
    \end{split}
\end{equation*}
with the independent, two-component mixture model for $t_{i,k}$. If $\tilde{p}_0(\bbe,\bSi)$ is a proper probability density, then the marginal likelihood, or the marginal distribution of $\y_{i,\Kc(i)}$, of this model must be proper, hence its density function must be finite at almost all $\y_{i,\Kc(i)}$ (or, equivalently, almost all $c_{i,k}$ for all $i \in \{1,\dots,n\}$ and $k\in \Kc(i)$), which proves the integrability of the dominating function. For $\tilde{p}_0(\bbe,\bSi)$ to be proper, we need the original prior, $p_0(\bbe,\bSi)$, to have the following moment to be finite (and non-zero): 
\begin{equation*}
    \mathbb{E}\left[ \prod _{i=1}^n \prod _{k\in\Lc(i)} (1 + {\si _k}^{\ka _1} ) \right] < \infty, \qquad {\rm where} \quad (\bbe,\bSi) \sim p_0(\bbe,\bSi).
\end{equation*}
This follows from the condition assumed in the statement of the theorem when $\ka _1$ is sufficiently small compared with $\ka $.

\subsection*{Proof of Theorem~2: Step 3}

The last step of the proof is to compute the part of the density ignored in Step~2. That is, we will show 
\begin{equation} \label{eq:dom_st3}
\int \sum _{  \z_{1:n} \in \{ 0,1 \}^{np} } \mathbbm{1}[ \ \bbe\not\in D(\om) \ {\rm or} \ \z_{1:n} \not\in \Zc \ ] p_0(\bbe,\bSi) \Big\{ \prod _{i=1}^n p_{\z_i} g(\y_i,\z_i) \Big\} d(\bbe,\bSi) \to 0,
\end{equation}
as $\om \to \infty$. For this purpose, it is sufficient to find the upper bound which converges to zero. In doing so, we start by bounding $g(\y_i,\z_i)$. Note that, by Lemma~\ref{lem:corr}, 
\begin{equation*}
    \bSi^{-1} = \frac{1}{2} \bSi^{-1} + \frac{1}{2} \bSi^{-1} \ge \frac{1}{2} \bSi^{-1} + \frac{1}{2} p^{-1} \I ^{(p)},
\end{equation*}
for any $\bSi>0$. Then, the quadratic term in the exponential function in $g(\y_i,\z_i)$ given in (\ref{eq:gfunc}) is bounded as,
\begin{align}
&- {1 \over 2} \begin{pmatrix} \displaystyle \y _{i, \overbar{\Zc}(i)} - \X _{i, \overbar{\Zc}(i)} \bbe \\ \displaystyle {\y _{i, \Zc(i)} - \X _{i, \Zc(i)} \bbe \over |\y _{i, \Zc(i)}| / \bxi _{i, \Zc(i)}} \end{pmatrix} ^{\top } \E_i^{\top } \bSi ^{- 1} \E_i \begin{pmatrix} \displaystyle \y _{i, \overbar{\Zc}(i)} - \X _{i, \overbar{\Zc}(i)} \bbe \\ \displaystyle {\y _{i, \Zc(i)} - \X _{i, \Zc(i)} \bbe \over | \y _{i, \Zc(i)} | / \bxi _{i, \Zc(i)}} \end{pmatrix} \non \\
&\le - {1 \over 4} \begin{pmatrix} \displaystyle \y _{i, \overbar{\Zc}(i)} - \X _{i, \overbar{\Zc}(i)} \bbe \\ \displaystyle {\y _{i, \Zc(i)} - \X _{i, \Zc(i)} \bbe \over |\y _{i, \Zc(i)}| / \bxi _{i, \Zc(i)}} \end{pmatrix} ^{\top } \E_i^{\top } \bSi ^{- 1} \E_i \begin{pmatrix} \displaystyle \y _{i, \overbar{\Zc}(i)} - \X _{i, \overbar{\Zc}(i)} \bbe \\ \displaystyle {\y _{i, \Zc(i)} - \X _{i, \Zc(i)} \bbe \over | \y _{i, \Zc(i)} | / \bxi _{i, \Zc(i)}} \end{pmatrix} \non \\
&\quad - {1 \over 4 p} \Big\| {\y _{i, \overbar{\Zc}(i) \cap \Lc (i)} - \X _{i, \overbar{\Zc}(i) \cap \Lc (i)} \bbe \over \bsi _{\overbar{\Zc}(i) \cap \Lc (i)}} \Big\| ^2 . \label{eq:decomp_st3}
\end{align}
The first term serves as the likelihood in the dominating function with $\bSi$ being replaced with $2\bSi$, while the second term has $\om$ and contribute to the convergence of the whole dominating function to zero, as detailed in what follows. 

Note that the indicator function that restricts $\bbe$ and $\z_{1:n}$ is decomposed as 
\begin{equation*}
\mathbbm{1}[ \ \bbe\not\in D(\om) \ {\rm or} \ \z_{1:n} \not\in \Zc \ ] = \mathbbm{1}[ \ \bbe\in D(\om) \ {\rm and} \ \z_{1:n} \not\in \Zc \ ] + \mathbbm{1}[ \ \bbe\not\in D(\om) \ ] 
\end{equation*}
First, assume that $\bbe\in D(\om)$ and $\z_{1:n} \not\in \Zc$. That is, for all $i$, we have $\Lc(i) \not\subset \Zc(i)$ or, equivalently, $\overbar{\Zc}(i)\cap\Lc(i)\not=\emptyset$. 
Then, the second term in (\ref{eq:decomp_st3}) is bounded in the exponential scale as 
\begin{align}
&\exp \Big\{ - {1 \over 4 p} \Big\| {\y _{i, \overbar{\Zc}(i) \cap \Lc (i)} - \X _{i, \overbar{\Zc}(i) \cap \Lc (i)} \bbe \over \bsi _{\overbar{\Zc}(i) \cap \Lc (i)}} \Big\| ^2 \Big\} \non \\ 
\le& \exp \Big\{ - {\ep ^2 \om ^2 \over 4 p} \Big\| {1 \over \bsi _{\overbar{\Zc}(i)\cap \Lc (i)}} \Big\| ^2 \Big\} \non \\
=& \Big\{ {\om ^2 \over \om ^2} \cdot  { \| 1/\bsi _{\overbar{\Zc}(i) \cap \Lc (i)} \| ^2 \over \| 1/\bsi _{\overbar{\Zc}(i) \cap \Lc (i)} \|^2  }  \Big\} ^{\ka / 2}  \exp \Big\{ - {\ep ^2 \om ^2 \over 4 p} \Big\| {1 \over \bsi _{\overbar{\Zc}(i) \cap \Lc (i)}} \Big\| ^2 \Big\} \non \\
=& \Big\{ {1 \over \om ^2} / \Big\| {1 \over \bsi _{\overbar{\Zc}(i) \cap \Lc (i)}} \Big\| ^2 \Big\} ^{\ka / 2} \Big\{ \om ^2 \Big\| {1 \over \bsi _{\overbar{\Zc}(i) \cap \Lc (i)}} \Big\| ^2 \Big\} ^{\ka / 2} \exp \Big\{ - {\ep ^2 \om ^2 \over 4 p} \Big\| {1 \over \bsi _{\overbar{\Zc}(i) \cap \Lc (i)}} \Big\| ^2 \Big\} \label{eq:rest_st3}.
\end{align}
In the last expression, the first term is bounded as, by taking arbitrary $k_0 \in \overbar{\Zc}(i) \cap \Lc(i)$,
\begin{equation*}
    \Big\{ {1 \over \om ^2} / \Big\| {1 \over \bsi _{\overbar{\Zc}(i) \cap \Lc (i)}} \Big\| ^2 \Big\} ^{\ka / 2} \le  {1 \over \om ^{\ka }} \Big( \frac{1}{ \sigma_{i,k_0}^2 } \Big)^{- \ka / 2} \le {1 \over \om ^{\ka }} \sum _{k=1}^p \sigma _{i,k}^{\ka }.
\end{equation*}
The rest of the expression in (\ref{eq:rest_st3}) is also bounded by some constant $M_1>0$. To see this, observe that the function defined as 
\begin{equation*}
    f(x) = x^{\ka / 2} \exp (-c_0 x) , \qquad x>0,
\end{equation*}
for some constant $c_0>0$ is maximized at $x = \ka / (2 c_0 )$. 
Use this result with $x=\om^2\|1/\bsi _{\overbar{\Zc}(i) \cap \Lc (i)} \|^2$ and $c_0 = \ep ^2 / (4 p)$ and conclude that 
\begin{equation*}
    \Big\{ \om ^2 \Big\| {1 \over \bsi _{\overbar{\Zc}(i) \cap \Lc (i)}} \Big\| ^2 \Big\} ^{\ka / 2} \exp \Big\{ - {\ep ^2 \om ^2 \over 4 p} \Big\| {1 \over \bsi _{\overbar{\Zc}(i) \cap \Lc (i)}} \Big\| ^2 \Big\} \le \left( \frac{2p \ka }{\ep^2} \right)^{\ka / 2} \exp (- \ka / 2) . 
\end{equation*}
Writing the constant in the right hand side by $M_1 = ( 2p \ka / \ep^2 )^{\ka / 2} \exp (- \ka / 2)$, we have 
\begin{equation*}
\mathbbm{1}[ \ \bbe\in D(\om) \ {\rm and} \ \z_{1:n} \not\in \Zc \ ] \exp \Big\{ - {1 \over 4 p} \Big\| {\y _{i, \overbar{\Zc}(i) \cap \Lc (i)} - \X _{i, \overbar{\Zc}(i) \cap \Lc (i)} \bbe \over \bsi _{\overbar{\Zc}(i) \cap \Lc (i)}} \Big\| ^2 \Big\} \le M_1 {1 \over \om ^{\ka }} \sum _{k=1}^p \sigma _{i,k}^{\ka }.    
\end{equation*}
Next, assume $\bbe \notin D( \om )$. Then, simply by the property of $D(\om)$ given in (\ref{eq:be_bound}), we have 
\begin{equation*}
\mathbbm{1}[ \ \bbe\not\in D(\om) \ ] \prod_{i=1}^n \exp \Big\{ - {1 \over 4 p} \Big\| {\y _{i, \overbar{\Zc}(i) \cap \Lc (i)} - \X _{i, \overbar{\Zc}(i) \cap \Lc (i)} \bbe \over \bsi _{\overbar{\Zc}(i) \cap \Lc (i)}} \Big\| ^2 \Big\} \le M_2 \frac{ \| \bbe \|^{\ka } }{\om ^{\ka }} .
\end{equation*}
for some $M_2 > 0$; the exponential terms in the left hand side are bounded by one. 

Now, combining these results, we bound the integral of interest in (\ref{eq:dom_st3}) as   
\begin{equation} \label{eq:bound_st3}
\begin{split}
&\int \sum _{  \z_{1:n} \in \{ 0,1 \}^{np} } \mathbbm{1}[ \ \bbe\not\in D(\om) \ {\rm or} \ \z_{1:n} \not\in \Zc \ ] p_0(\bbe,\bSi) \Big\{ \prod _{i=1}^n p_{\z_i} g(\y_i,\z_i) \Big\} d(\bbe,\bSi) \\
&\le M_1 M_2 \int \prod _{i=1}^n \sum _{\z_i \in \{ 0,1 \}^{p} } p_{\z_i}  \\
&\qquad\qquad\qquad \times \int_{\mathbb{R} ^{|\Zc(i)|} } 2^{p/2} {\rm{N}}_p  \! \left( \left. \E_i \begin{bmatrix}
    \y _{i, \overbar{\Zc} (i)} - \X _{i, \overbar{\Zc} (i)} \bbe  \\
    \{ \y _{i, \Zc (i)} - \X _{i, \Zc (i)} \bbe \} \bxi _{i, \Zc (i)} / |\y _{i, \Zc (i)}|
    \end{bmatrix} \right| \bm{0}^{(p)} , 2\bSi \right) \\
&\qquad\qquad\qquad \times \Big\{ \prod_{k \in \Zc (i)} \frac{ \pi _1 ( |y_{i,k}|/\xi_{i, k}  ) }{ |\xi _{i, k} | } \Big\} \prod_{k \in \Lc(i)} | y_{i, k} | \{ \log (1 + | y_{i, k} |) \} ^{1 + c} d\bxi _{i,\Zc(i)} \\
&\qquad\qquad \times \left( \frac{ \| \bbe \|^{\ka } }{\om ^{\ka }} + {1 \over \om ^{\ka }} \sum _{k=1}^p \sigma _{i,k}^{\ka } \right) p_0(\bbe,\bSi) d(\bbe,\bSi) \\
&\le M_3 \frac{1}{\om ^{\ka }} \int \Big\{ \prod _{i=1}^n \sum _{  \z_i \in \{ 0,1 \}^{p} } p_{\z_i} g_2(\y_i,\z_i) \Big\} \tilde{p}_1(\bbe,\bSi) d(\bbe,\bSi) \\
\end{split}
\end{equation}
for some $M_3>0$, where $g_2(\y_i,\z_i)$ is the same as $g(\y_i,\z_i)$ except that $\bSi$ in the original $g(\y_i,\z_i)$ is replaced with $2\bSi$, and $\tilde{p}_1(\bbe,\bSi)$ is given by 
\begin{equation*}
    \tilde{p}_1(\bbe,\bSi) \propto (1 + \| \bbe \| ^{\ka } ) \Big( 1 + \sum_{k = 1}^{p} {\si _k}^{\ka } \Big) p_0(\bbe,\bSi), 
\end{equation*}
which is a proper probability density by assumption. %
The bound we obtained here is the product of $1/\om ^{\ka }$ and the marginal likelihood of the proposed model with variance $2\bSi$ and prior $\tilde{p}_1(\bbe,\bSi)$. The latter could depend on $\om$, but, as will be verified in the following, it is bounded above by some constant, thus we can conclude that this bound converges to zero as a whole. Below, we compute $g_2(\y_i,\z_i)$ as the function of $\om$. 

We bound $g_2(\y_i,\z_i)$ further as, for sufficiently large $\om$,  
\begin{equation*} 
\begin{split}
&g_2(\y_i,\z_i) \\
&\le M_4 \int_{\mathbb{R} ^{| \Zc (i) |}} {1 \over | \bSi |^{1 / 2}} \exp \Big\{ - {1 \over 4} \begin{pmatrix} \displaystyle \y _{i, \overbar{\Zc}(i)} - \X _{i, \overbar{\Zc}(i)} \bbe \\ \displaystyle {\y _{i, \Zc (i)} - \X _{i, \Zc (i)} \bbe \over | \y _{i, \Zc (i)} | / \bxi _{i, \Zc (i)}} \end{pmatrix} ^{\top } \E_i^{\top } \bSi ^{- 1} \E_i \begin{pmatrix} \displaystyle \y _{i, \overbar{\Zc}(i)} - \X _{i, \overbar{\Zc}(i)} \bbe \\ \displaystyle {\y _{i, \Zc (i)} - \X _{i, \Zc (i)} \bbe \over | \y _{i, \Zc (i)} | / \bxi _{i, \Zc (i)}} \end{pmatrix} \Big\} \\
&\quad \times \{ \om ( \log \om )^{1 + c} \} ^{| \Lc (i)|} \prod_{k \in \Zc (i)} \frac{ \pi _1 (| y_{i, k} | / \xi _{i, k} ) }{ | \xi _{i, k} | }  d{\bxi _{i, \Zc (i)}} \\
&= M_4  ( \log \om )^{(1+c)| \Lc (i)|}  \int_{\mathbb{R} ^{| \Zc (i) |}} h ( \t _{i, \Zc (i)} , \z_i ; \bbe , \bSi , \om ) \Big\{ \prod_{k \in \Zc (i)} \pi _1 ( t_{i, k} ) \Big\} d{\t _{i, \Zc (i)}} 
\end{split}
\end{equation*}
for some $M_4 > 0$, where 
\begin{align}
&h ( \t _{i, \Zc (i)} , \z_i ; \bbe , \bSi , \om ) \non \\
&= \exp \Big[ - {1 \over 4} \begin{pmatrix} \displaystyle \y _{i, \overbar{\Zc} (i)} - \X _{i, \overbar{\Zc} (i)} \bbe \\ \displaystyle { \{ \y _{i, \Zc (i)} - \X _{i, \Zc (i)} \bbe \} / \t _{i, \Zc (i)}} \end{pmatrix} ^{\top } \E_i^{\top } \bSi ^{- 1} \E_i \begin{pmatrix} \displaystyle \y _{i, \overbar{\Zc} (i)} - \X _{i, \overbar{\Zc} (i)} \bbe \\ \displaystyle { \{ \y _{i, \Zc (i)} - \X _{i, \Zc (i)} \bbe \} / \t _{i, \Zc (i)}} \end{pmatrix} \Big] \non \\
&\quad \times {\om ^{| \Lc (i)|} \over | \bSi |^{1 / 2}} \times \prod_{k \in \Zc (i)} { 1 \over | t_{i, k} |}, \non 
\end{align}
for $\t _{i, \Zc (i)} \in \mathbb{R} ^{| \Zc (i) |}$. Using this notation, we can upper bound the last expression of (\ref{eq:bound_st3}) as, with some constant $M_4>0$, 
\begin{equation*}
\begin{split}
    &M_4 \frac{ ( \log \om )^{ (1+c)\sum _{i=1}^n | \Lc (i)|}}{\om ^{\ka / 3}} \sum _{  \z_{1:n} \in \{ 0,1 \}^{np} } \Big( p_{\z_{1:n}} \\
    &\times {1 \over \om ^{2 \ka / 3}} \int \Big[ \prod _{i=1}^n \int_{\mathbb{R} ^{| \Zc (i) |}} h ( \t _{i, \Zc (i)} , \z_i ; \bbe , \bSi , \om ) \Big\{ \prod_{k \in \Zc (i)} \pi _1 ( t_{i, k} ) \Big\} d{\t _{i, \Zc (i)}} \Big] \tilde{p}_1(\bbe,\bSi) d(\bbe,\bSi) \Big) ,
\end{split}
\end{equation*}
Since the first line above converges to zero, it is sufficient to show that the second line is at least bounded in the limit. That is, we prove that 
\begin{align}
&R_{\Z } ( \om ) \non \\
&\equiv {1 \over \om ^{\ka / 3}} \int \Big[ \prod _{i=1}^n \int_{\mathbb{R} ^{| \Zc (i) |}} h ( \t _{i, \Zc (i)} , \z_i ; \bbe , \bSi , \om ) \Big\{ \prod_{k \in \Zc (i)} \pi _1 ( t_{i, k} ) \Big\} d{\t _{i, \Zc (i)}} \Big] \tilde{p}_1(\bbe,\bSi) d(\bbe,\bSi) \non \\
&= o( \om ^{\ka / 3} ) , \non
\end{align}
as $\om \to \infty $. In showing this equation below, we note that $R_{\Z } ( \om )$ is viewed as a risk function, or the expectation of a certain loss function with respect to $(\bbe,\bSi,\t_{1:n,\Zc(1:n)})$ where $\t_{1:n,\Zc(1:n)} = (\t_{1,\Zc(1)},\dots,\t_{n,\Zc(n)})$. To be precise, we write 
\begin{equation*}
    R_{\Z } ( \om ) = \mathbb{E}\left[ \ L( \t _{1:n, \Zc (1:n)} , \z_{1:n} ; \bbe , \bSi , \om ) \ \right] ,
\end{equation*}
where $(\bbe,\bSi)$ and $\t_{1:n,\Zc(1:n)}$ are mutually independent, $(\bbe,\bSi) \sim \tilde{p}_1(\bbe,\bSi)$ and $\t_{i,k} \stackrel{\rm ind.}{\sim} \pi_1(t_{i,k})$ for $i\in \{ 1,\dots, n\}$ and $k \in \Zc(i)$. The loss function, $L$, is defined as (with $\om$ being replaced with general argument variable $\tau \in [1,\infty)$), 
\begin{align}
&L( \t _{1:n, \Zc (1:n)} , \z_{1:n} ; \bbe , \bSi , \tau ) \non \\
&= {1 \over \tau ^{\ka / 3}} \prod _{i=1}^n h ( \t _{i, \Zc (i)} , \z_i ; \bbe , \bSi , \tau ) \non \\
&= {1 \over \ta ^{\ka / 3}} \prod_{i = 1}^{n} \Big( \Big\{ \prod_{k \in \Zc(i)} {1 \over | t_{i, k} |} \Big\} {\ta ^{| \Lc (i)|} \over | \bSi |^{1 / 2}} \non \\
&\quad \times \exp \Big[ - {1 \over 4} \begin{pmatrix} \displaystyle \y _{i, \overbar{\Zc} (i)} - \X _{i, \overbar{\Zc} (i)} \bbe \\ \displaystyle { \{ \y _{i, \Zc (i)} - \X _{i, \Zc (i)} \bbe \} / \t _{i, \Zc (i)}} \end{pmatrix} ^{\top } \E_i^{\top } \bSi ^{- 1} \E_i \begin{pmatrix} \displaystyle \y _{i, \overbar{\Zc} (i)} - \X _{i, \overbar{\Zc} (i)} \bbe \\ \displaystyle { \{ \y _{i, \Zc (i)} - \X _{i, \Zc (i)} \bbe \} / \t _{i, \Zc (i)}} \end{pmatrix} \Big] \Big) \non \\
&= {1 \over \ta ^{\ka / 3}} \prod_{i = 1}^{n} \Big[ \Big\{ \prod_{k \in \Zc(i)} {1 \over | t_{i, k} |} \Big\} {\ta ^{| \Lc (i)|} \over | \bSi |^{1 / 2}} \non \\
&\quad \times \exp \Big\{ - {1 \over 4} \Big( {\ta \d _i + \c _i - \X _i \bbe \over \t _i} \Big) ^{\top } \bSi ^{- 1} \Big( {\ta \d _i + \c _i - \X _i \bbe \over \t _i} \Big) \Big\} \Big] \non ,
\end{align}
where $y_{i,k}$ is defined as $y_{i,k}=c_{i,k} + \tau d_{i,k}$ for $k\in \Lc(i)$ in the second line (or the original $y_{i,k}$ being evaluated at $\om = \tau$) and $t_{i,k} = 1$ for $k\in\overbar{\Zc}(i)$ in the third line.

We bound $R_{\Z } ( \om )$ in the spirit of the integral expression of risk difference method (Kubokawa, 1994; Kubokawa and Saleh, 1994), 
by writing $L( \t _{1:n, \Zc (1:n)} , \z_{1:n} ; \bbe , \bSi , \om )$ as 
\begin{align}
&L( \t _{1:n, \Zc (1:n)} , \z_{1:n} ; \bbe , \bSi , \om )  - L( \t _{1:n, \Zc (1:n)} , \z_{1:n} ; \bbe , \bSi , 1 )  \non \\
&= \int_{1}^{\om } {\pd L( \t _{1:n, \Zc (1:n)} , \z_{1:n} ; \bbe , \bSi , \tau )  \over \pd \ta } d\ta \non \\
&= \int_{1}^{\om } \Big( L( \t _{1:n, \Zc (1:n)} , \z_{1:n} ; \bbe , \bSi , \tau ) \non \\
&\quad \times \Big[ - {\ka / 3 \over \ta } + \sum_{i = 1}^{n} \Big\{ {| \Lc (i)| \over \ta } - {1 \over 2} \Big( {\d _i \over \t _i} \Big) ^{\top } \bSi ^{- 1} \Big( {\d _i \ta + \c _i - \X _i \bbe \over \t _i} \Big) \Big\} \Big] \Big) d\ta \non .
\end{align}
Noting that $t_{i,k}\ge 1$ due to the support of the prior, the quadratic term in the above expression is bounded as 
\begin{align}
&- {1 \over 2} \Big( {\d _i \over \t _i} \Big) ^{\top } \bSi ^{- 1} \Big( {\d _i \ta + \c _i - \X _i \bbe \over \t _i} \Big) \non \\
&\le {1 \over 2 \ta } \Big( {\c _i - \X _i \bbe \over \t _i} \Big) ^{\top } \bSi ^{- 1} \Big( {\d _i \ta + \c _i - \X _i \bbe \over \t _i} \Big) \non \\
&\le {1 \over 2 \ta } \Big\| {\c _i - \X _i \bbe \over \t _i} \Big\| \| ( \bSi ^{- 1 / 2} )^{\top } \| \Big\| \bSi ^{- 1 / 2} \Big( {\d _i \ta + \c _i - \X _i \bbe \over \t _i} \Big) \Big\| \non \\
&\le {1 \over 2 \ta } ( \| \c _i \| + \| \X _i \| \| \bbe \| ) \| \bSi ^{- 1 / 2} \| \Big\| \bSi ^{- 1 / 2} \Big( {\d _i \ta + \c _i - \X _i \bbe \over \t _i} \Big) \Big\| \non .
\end{align}
Then, we have  
\begin{align}
&R_{\Z } ( \om ) - R_{\Z } (1)  \non \\
&\le \mathbb{E} \Big[ \int_{1}^{\om } \Big[ L( \t _{1:n, \Zc (1:n)} , \z_{1:n} ; \bbe , \bSi , \tau ) \sum_{i = 1}^{n} \Big\{ {| \Lc (i)| \over \ta } \non \\
&\quad + {1 \over 2 \ta } ( \| \c _i \| + \| \X _i \| \| \bbe \| ) \| \bSi ^{- 1 / 2} \| \Big\| \bSi ^{- 1 / 2} \Big( {\d _i \ta + \c _i - \X _i \bbe \over \t _i} \Big) \Big\| \Big\} \Big] d\ta \Big] \non \\
&= \mathbb{E} \Big[ \int_{1}^{\om } \Big[ { L( \t _{1:n, \Zc (1:n)} , \z_{1:n} ; \bbe , \bSi , \tau ) \over \tau } \sum_{i = 1}^{n} \Big\{ | \Lc (i)| \non \\
&\quad + {1 \over 2 } ( \| \c _i \| + \| \X _i \| \| \bbe \| ) \| \bSi ^{- 1 / 2} \| \Big\| \bSi ^{- 1 / 2} \Big( {\d _i \ta + \c _i - \X _i \bbe \over \t _i} \Big) \Big\| \Big\} \Big] d\ta \Big] \non \\
&\le M_5 \sum_{i = 1}^{n} \mathbb{E} \Big[ \int_{1}^{\infty} \Big[ { L( \t _{1:n, \Zc (1:n)} , \z_{1:n} ; \bbe , \bSi , \tau ) \over \tau } \non \\
&\quad \times (1 + \| \bbe \| ) \Big\{ 1 + \sqrt{\tr ( \bSi ^{- 1} )} \Big\} \Big\| \bSi ^{- 1 / 2} \Big( {\d _i \ta + \c _i - \X _i \bbe \over \t _i} \Big) \Big\| \Big] d\ta \Big] \non ,
\end{align}
for some $M_5 > 0$. Thus $R_{\Z } (\om)$ is bounded by the sum of $R_{\Z } (1)$ and the expectation term in the last expression above. These two terms are finite for almost all $\{ (c_{i,k},d_{i,k}) \} _{i, k}$ as desired, as will be shown below. The first component of the sum, $R_{\Z}(1)$, is proportional to the marginal likelihood, or the density of $\y_{1:n}$, of the following model:
\begin{equation*}
    \begin{split}
        \y_i | (\bbe,\bSi,\t_i) &\stackrel{\rm ind.}{\sim} {\rm{N}}_{p} ( \X _{i} \bbe , 2 \T _{i} \bSi \T _{i} ) \\
        (\bbe,\bSi) &\sim \tilde{p}_1(\bbe,\bSi),
    \end{split}
\end{equation*}
where $t_{i,k}=1$ for $k\in \overbar{\Zc}(i)$ and $t_{i,k}\stackrel{\rm ind.}{\sim} \pi _1(t_{i,k})$ for $k\in \Zc(i)$. Similarly to our discussion made at the end of Step~2, this model consists of proper probability densities, hence the marginal density of $\y_{1:n}$, or $R_{\Z}(1)$, must be finite almost everywhere. The second component, or the expectation term, is also understood as the marginal likelihood of some synthetic model. To read off the corresponding model, write down the summand of the second term as 
\begin{equation*}
\begin{split}
    &\mathbb{E} \Big[ \int_{1}^{\infty} \Big[ { L( \t _{1:n, \Zc (1:n)} , \z_{1:n} ; \bbe , \bSi , \tau ) \over \tau } \non \\
    &\times (1 + \| \bbe \| ) \Big\{ 1 + \sqrt{\tr ( \bSi ^{- 1} )} \Big\} \Big\| \bSi ^{- 1 / 2} \Big( {\d _i \ta + \c _i - \X _i \bbe \over \t _i} \Big) \Big\| \Big] d\ta \Big] \\
    &= \int \Big[ \int_{1}^{\infty} \Big\{ \Big( \prod _{i=1}^p \int _{\mathbb{R}^{|\Lc(i)|}} \Big[ \Big\{ \prod_{k \in \Zc(i)} {\pi _1 ( t_{i, k} ) \over | t_{i, k} |} \Big\} \non \\
    &\quad \times {\ta ^{| \Lc (i)|} \over | \bSi |^{1 / 2}} \exp \Big\{ - {1 \over 4} \Big( {\ta \d _i + \c _i - \X _i \bbe \over \t _i} \Big) ^{\top } \bSi ^{- 1} \Big( {\ta \d _i + \c _i - \X _i \bbe \over \t _i} \Big) \Big\} \\
    &\quad \times \Big\| \bSi ^{- 1 / 2} \Big( {\d _i \ta + \c _i - \X _i \bbe \over \t _i} \Big) \Big\| \Big] d\t_i \Big) \frac{1}{\tau ^{1 + \ka / 3}} \Big\} d\tau \non \\
    &\quad \times (1 + \| \bbe \| ) \Big\{ 1 + \sqrt{\tr ( \bSi ^{- 1} )} \Big\} \tilde{p}_1(\bbe,\bSi) \Big] d(\bbe,\bSi) \\
    &= M_6 \int \int_{1}^{\infty} \Big[ \prod _{i=1}^p \int _{\mathbb{R}^{|\Lc(i)|}}  \tilde{f}_i( \d_i | \bbe,\bSi,\t_i,\tau ) \Big\{ \prod_{k \in \Zc(i)} \pi _1 ( t_{i, k} ) \Big\} d\t _{i,\Zc(i)} \Big] \tilde{p}_2( \tau, \bbe, \bSi ) d\tau d(\bbe,\bSi),
\end{split}
\end{equation*}
with some constant $M_6>0$, where the synthesis prior and sampling densities, $\tilde{p}_2$ and $\tilde{f}_i$'s, are defined by 
\begin{equation*}
    \tilde{p}_2( \tau, \bbe, \bSi ) \propto \frac{1}{\tau^{1 + \ka / 3}} (1 + \| \bbe \|^{1 + \ka } ) \Big( 1 + \sum_{k = 1}^{p} {\si _k}^{\ka } \Big) \Big\{ 1 + \sqrt{\tr ( \bSi ^{- 1} )} \Big\} p_0(\bbe,\bSi),
\end{equation*}
which is a proper probability density by assumption, and
\begin{equation*}
\begin{split}
    \tilde{f}_i ( \d_i | \bbe,\bSi,\t_i,\tau ) &\propto \Big\{ \prod_{k \in \Zc(i)} {1 \over | t_{i, k} |} \Big\} {\ta ^{| \Lc (i)|} \over | \bSi |^{1 / 2}} \Big\| \bSi ^{- 1 / 2} \Big( {\d _i \ta + \c _i - \X _i \bbe \over \t _i} \Big) \Big\| \non \\
    &\quad \times \exp \Big\{ - {1 \over 4} \Big( {\ta \d _i + \c _i - \X _i \bbe \over \t _i} \Big) ^{\top } \bSi ^{- 1} \Big( {\ta \d _i + \c _i - \X _i \bbe \over \t _i} \Big) \Big\} .
\end{split}
\end{equation*}
To confirm that $\tilde{f}$ is a probability density, we introduce fictitious observation, $\tilde{y}_i$, whose element is given by 
\begin{equation*}
    \tilde{y}_{i,k} = \begin{cases}
    d_{i,k} & {\rm if} \ k\in \Lc(i)\\
    c_{i,k} & {\rm if} \ k\in \Kc(i)
    \end{cases},
\end{equation*}
and define location and scale variable, $\bm{m}_i$ and $\tilde{\T}_i^{1/2} = \diag (\tilde{\t}_i)$, as 
\begin{equation*}
    m_{i,k} = \begin{cases}
    ( \x_{i,k}^{\top}\bbe - c_{i,k} ) / \tau & {\rm if} \ k\in \Lc(i)\\
    \x_{i,k}^{\top}\bbe & {\rm if} \ k\in \Kc(i)
    \end{cases}, \qquad \tilde{t}_{i,k} = \begin{cases}
    t_{i,k}/\tau & {\rm if} \ k\in \Lc(i)\\
    1 & {\rm if} \ k\in \Kc(i)
    \end{cases}.
\end{equation*}
Then we write $\tilde{f}$ as a function of $\tilde{\y}_i$ as  
\begin{align}
&\tilde{f}_i ( \d_i | \bbe,\bSi,\t_i,\tau ) \non \\
    &\propto {1 \over | \tilde{\T}_i\bSi^{1 / 2} |} \Big\|  \bSi ^{- 1 / 2} \tilde{\T}_i^{-1/2} \Big( \tilde{\y}_i - \bm{m}_i \Big) \Big\| \exp \Big\{ - {1 \over 4} \Big( \tilde{\y}_i - \bm{m}_i \Big) ^{\top } ( \tilde{\T}_i \bSi \tilde{\T}_i ) ^{- 1} \Big( \tilde{\y}_i - \bm{m}_i \Big) \Big\} . \non 
    \end{align}
In fact, this is the density of the multivariate version of the power truncated normal distribution with order $2$, location $\bm{m}_i$ and scale $2\tilde{\T}_i \bSi \tilde{\T}_i$, hence is a proper probability density. Thus, risk difference $R_{\Z}(\om)-R_{\Z}(1)$ is shown to be the marginal likelihood of the following hierarchical model: 
\begin{equation*}
    \begin{split}
        \d_i | (\bbe,\bSi,\t_{i,\Zc(i)},\tau ) &\stackrel{\rm ind.}{\sim} {\rm{PTN}}_{p} ( 2 , \bm{m}_i, 2\tilde{\T}_i \bSi \tilde{\T}_i) \\
        (\tau, \bbe,\bSi) &\sim \tilde{p}_2(\tau, \bbe,\bSi),
    \end{split}
\end{equation*}
and the same model applies to $t_{i,K}$ as in the first term. For more details about the power truncated normal distributions, see the next subsection. 

To summarize Step~3, we have shown the convergence in (\ref{eq:dom_st3}) by bounding the left hand side of (\ref{eq:dom_st3}) by $R_{\Z}(\om)$, which converges to zero. 

\subsection*{The power truncated normal distribution}

The univariate power truncated normal (PTN) distribution is denoted by $\mathrm{PTN}(p,m,c^2)$ with parameter $p>0$, location $m\not=0$ and scale $c>0$ (or $c^2$) and defined by the proper probability density, 
\begin{equation*}
    \mathrm{PTN}(x;p,m,c^2) \propto x^{p-1} \exp \left\{ - \frac{(x-m)^2}{2c^2} \right\}, \qquad x>0.
\end{equation*}
See, for example, He et al. (2019). 
This density is proper; the normalizing constant abbreviated above is the $(p{-}1)$-th moment of the corresponding normal distribution, $\mathrm{N}(m,c^2)$, which is finite and non-zero. Now, for $\x = (x_1,\dots ,x_n)^{\top}$, let $x_i\stackrel{\rm ind}{\sim}\mathrm{PTN}(p,m_i,1)$. Then, for positive definite matrix $\C$, we define $\y$ by $\y = \C^{1/2}\x$. The density of $\y$ is proper, and we denote the distribution of $\y$ by $\mathrm{PTN}_n(p,\m,\C)$ with $\m = (m_1,\dots ,m_n)^{\top}$.

\section{Proofs related to Sections~3.3, 3.4 and 3.5} \label{sm:misc}

\subsection{The exchange of the limit and integral in Secions~3.3 and 3.4}

In the analyses with the one-sided and asymmetric distributions of $t_{i,k}$ in Section~3.3 and the thin-tailed one in Section~3.4, we assumed the exchange of the limit ($\omega\to\infty$) and the integral (in $t_{i,k}$). The exchange has been justified in the previous section for the proposed model, but not for the three models here. In this subsection, we confirm the condition of the dominated convergence theorem for these models. 

First, we construct a general class of densities that covers the proposed model and the three models mentioned here. Consider the density is given by 
\begin{equation*}
	\begin{split}
		\pi ^{\ast}(t ) = w \mathbbm{1}[ t > 1 ] &{C_{\ga _1, \gat _1 } \over |t|^{1 + \gat _1}} {1 \over \{ 1 + \log |t| \} ^{1 + \ga _1}} \\
		&+ (1-w) \mathbbm{1}[ t < - 1 ] {C_{\ga _2, \gat _2} \over |t|^{1 + \gat _2}} {1 \over \{ 1 + \log |t| \} ^{1 + \ga _2}},
	\end{split}
\end{equation*}
with hyperparameters $(w,\ga_1,\ga_2,\gat_1,\gat_2)$. Here, $C_{\ga , \gat}$ is the normalizing constant and given by 
\begin{equation*}
		\frac{1}{C_{\ga , \gat }} = \int _1^{\infty} {1 \over  t^{1 + \gat }} {1 \over \{ 1 + \log (t) \} ^{1 + \ga }} dt,
\end{equation*}
which is finite if either (i) $\gat > 0$ and $\ga \ge -1$ or (ii) $\gat \ge 0$ and $\ga > 0$. By choosing the hyperparameters appropriately, we recover the models in Sections 3.3 and 3.4 as: 
\begin{itemize}
	\item (Proposed) $w=1/2$, $\gat_1=\gat_2=0$ and $\ga_1=\ga _2=\ga \in (0,\infty)$. 
	\item (One-sided log-Pareto) $w=1$, $\gat_1=0$ and $\ga_1 \in (0,\infty)$. 
	\item (Asymmetric) $w\in (0,1)$, $\gat_1=\gat_2=0$ and $\ga_1 , \ga_2 \in (0,\infty)$. 
	\item (Thin-tailed) $w=1/2$, $\gat_1=\gat_2=\gat > 0$ and $\ga_1=\ga _2=\ga \ge - 1$.  
\end{itemize}
Denote the unfolded log-Pareto density by $\pi (t; \gamma)$ (with parameter $\gamma$), for which we have verified the exchange of the limit and integral already. The general density here can be written as 
\begin{equation*}
	\begin{split}
		\pi ^{\ast}(t ) &= \frac{ wC_{\ga _1, \gat _1 } }{\ga _1/2} \mathbbm{1}[ t > 1 ] |t|^{-\gat_1} \pi(t;\ga_1) + \frac{ (1-w)C_{\ga _2, \gat _2 } }{\ga _2/2} \mathbbm{1}[ t < -1 ] |t|^{-\gat_2} \pi(t;\ga_2) \\
        &= \tilde{w}_1 \mathbbm{1}[ t > 1 ] |t|^{-\gat_1} \pi(t;\ga_1) + \tilde{w}_2  \mathbbm{1}[ t < -1 ] |t|^{-\gat_2} \pi(t;\ga_2) ,
	\end{split}
\end{equation*}
where $\tilde{w}_1 = wC_{\ga _1, \gat _1 } / (\ga _1/2)$ and $\tilde{w}_2 = (1-w) C_{\ga _2, \gat _2 } / (\ga _2/2)$.

We first review the use of the dominated convergence theorem in the proof of Theorem~1. If $t_{i,k}$'s follow $\pi$ independently, as verified in (\ref{eq:int_t}), the function of interest can be written as
\begin{equation*} 
	\begin{split}
		p( \y _i | \bbe , \bSi ) / \prod_{k \in \Lc(i)} \pi (| y_{i, k} |) %
		&= \int_{\mathbb{R} ^p} {\rm{N}}_p  \! \left( \left. \begin{bmatrix}
			\{ \y _{i, \Lc (i)} - \X _{i, \Lc (i)} \bbe \} / \{ |\y _{i, \Lc (i)} | / \bxi _{i, \Lc (i)} \} \\
			\{ \y _{i, \Kc (i)} - \X _{i, \Kc (i)} \bbe \} / \t _{i, \Kc (i)}
		\end{bmatrix} \right| \bm{0}^{(p)} , \bSi \right) \\ 
		&\quad \times \Big\{ \prod_{k \in \Lc (i)} {\pi (| y_{i, k} | / \xi _{i, k} ) \over \pi (| y_{i, k} |) | \xi _{i, k} |} \Big\} \Big\{ \prod_{k \in \Kc (i)} \pi ( t_{i, k} ) / | t_{i, k} | \Big\}  d\bxi _{i, \Lc (i)} d\t _{i, \Kc (i)} ,		
	\end{split}
\end{equation*}
then the integrand is bounded. In particular, the density of $t_{i,k}$ for outlying $k\in\Lc(i)$ is bounded as 
\begin{equation*}
	{\pi (| y_{i, k} | / \xi _{i, k} ) \over \pi (| y_{i, k} |) | \xi _{i, k} |} \le M  \{ 1 + \log (1 + | \xi _{i, k} |) \} ^{1 + \ga },
\end{equation*}
for some constant $M>0$. The integrability of the dominating function is confirmed by the moment, 
\begin{equation*}
	\mathbb{E}[ \{ 1 + \log (1 + | \xi _{i, k} |) \} ^{1 + \ga } ] \le \mathbb{E}[ (1 + | \xi _{i, k} |) ^{1 + \ga } ] < \infty,
\end{equation*}
where $\xi _{i,k}$ follows some normal distribution. This proof is still valid if the upper bound in the right hand side of the inequality above is of the form,
\begin{equation*}
	M  |\xi _{i, k} |^{\gat}  \{ 1 + \log (1 + | \xi _{i, k} |) \} ^{1 + \ga },
\end{equation*}
whose expected value is also finite. 

Now, suppose that $t_{i,k}$'s follow $\pi^{\ast}$ independently. It is sufficient to bound the ratio similarly by a function of $\xi_{i,k}$ whose expectation with respect to $\xi_{i,k}$ following the normal distribution is finite. The ratio with $\pi^{\ast}$ is computed as 
\begin{equation*}
\begin{split}
    {\pi ^{\ast}(| y_{i, k} | / \xi _{i, k} ) \over \pi ^{\ast}(| y_{i, k} |) | \xi _{i, k} |} &= \mathbbm{1}[ %
    | y_{i, k} | / \xi _{i, k} > 1 ] |\xi _{i,k}|^{\gat_1} {\pi (| y_{i, k} | / \xi _{i, k} ; \ga _1) \over \pi (| y_{i, k} |; \ga _1) | \xi _{i, k} |} \\ 
    &\ + \mathbbm{1}[ %
    | y_{i, k} | / \xi _{i, k} < - 1 ] \frac{\tilde{w}_2}{\tilde{w}_1} | y_{i, k} |^{ %
    \gat _1 - \gat _2} |\xi _{i,k}|^{%
    \gat _2}  {\pi (| y_{i, k} | / \xi _{i, k} ; \ga _2) \over \pi (| y_{i, k} |; \ga _1) | \xi _{i, k} |}.
\end{split}
\end{equation*}
If the density is symmetric ($\gat_1=\gat_2=\gat$ and $\ga_1=\ga_2=\ga$) and covers the thin-tailed case, then 
\begin{equation*}
\begin{split}
    {\pi ^{\ast}(| y_{i, k} | / \xi _{i, k} ) \over \pi ^{\ast}(| y_{i, k} |) | \xi _{i, k} |} &\le \Big\{ \mathbbm{1}[  y_{i, k} | / \xi _{i, k} > 1] + \frac{1-w}{w} \mathbbm{1}[ | y_{i, k} | / \xi _{i, k} < - 1] \Big\} \\
    &\qquad \times M |\xi_{i,k}|^{\gat} \{ 1+\log(1+|\xi_{i,k}|) \} ^{1+\ga},
\end{split}
\end{equation*}
for all $\xi_{i,k}$. The rest of the proof is almost identical to the proof of Theorem~1.  

In the case of the asymmetric tails ($\gat_1=\gat_2=0$ and $\ga_1\not=\ga_2$), we focus on the case of $\ga _1 < \ga_2$, which is sufficient for the discussions in Section~3.3. For $\xi _{i,k}>0$, the ratio is similarly bounded by $\{ 1+\log(1+|\xi _{i,k}|) \} ^{1+\ga_1}$ up to a constant, covering the one-sided case. For $\xi _{i,k}<0$, 
\begin{equation*}
    \begin{split}
        {\pi (| y_{i, k} | / \xi _{i, k} ; \ga _2) \over \pi (| y_{i, k} |; \ga _1) | \xi _{i, k} |} &= {\pi (| y_{i, k} | ; %
        \ga _2 ) \over \pi (| y_{i, k} |; %
        \ga _1 ) } {\pi (| y_{i, k} | / \xi _{i, k} ; \ga _2) \over \pi (| y_{i, k} |; \ga _2) | \xi _{i, k} |} \\
        &\le M' { \{ \log(1+|y_{i,k}| ) \} ^{-\ga _2} \over \{ \log(1+|y_{i,k}| ) \} ^{-\ga _1} } \{ 1+\log(1+|\xi _{i,k}|) \} ^{1+\ga_2} \\
        &\le M' \{ 1+\log(1+|\xi _{i,k}|) \} ^{1+\ga_2}, 
    \end{split}
\end{equation*}
for sufficiently large $|y_{i,k}|$, covering the asymmetric case.

\subsection{Outlier-sensitivity of the scaled variance model in Section 3.5}

Under the scaled variance, $V(t_i,\bSi) = t_i^2 \bSi$, it is difficult to achieve the likelihood robustness. Here, we observe the lack of such robustness can be seen in a simple example of $p=2$ with the specific choice of scaling constant $C_i(\om)$, parameter $(\bbe,\bSi)$, and patterns of outliers $\{ \Lc(i),\Kc(i) \}$. 

Set $p=2$, fix $i \in \{ 1,\dots ,n \}$ and consider $\Lc(i)=\{ 1,2 \}$ and $y_{i,1}=y_{i,2}=\omega$ ($d_{i,1}=d_{i,2}=1$ and $c_{i,1}=c_{i,2}=0$). 
Assume the log-Pareto tailed distribution for $v_i=t_i^2$ as 
\begin{equation*}
	p(v_i) =  \mathbbm{1}[ v_i > 0 ] \frac{ 1 }{ (1 + v_i) } \frac{ 1 }{ \{ 1 + \log (1 + v_i ) \} ^{1 + \ga } }.
\end{equation*}
We consider the scaled likelihood with $C_i(\om) = \pi (|y_{i,k}|) / \sqrt{ 2\pi} = p(\om ^2) / \sqrt{ 2\pi}$. The likelihood robustness requires the convergence of the scaled likelihood to some constant for all $\bSi$. Here, we prove that this does not hold at $\bSi = \diag \{ \sigma _1^2, \sigma _2^2 \}$.

By the change-of-variables $\tilde{v}_i = v_i / \om ^2$, the scaled likelihood can be computed as 
\begin{align*}
	&{p( \y _i | \bbe , \bSi ) \over p( \om ^2 ) / (2 \pi )}\\
	&= \int_{0}^{\infty } {1 \over ( \si _{1}^{2} \si _{2}^{2} )^{1 / 2} v_i} \exp \Big[ - {1 \over 2 v_i} \Big\{ {( y_{i, 1} - {\x _{i, 1}}^{\top } \bbe )^2 \over \si _{1}^{2}} + {( y_{i, 2} - {\x _{i, 2}}^{\top } \bbe )^2 \over \si _{2}^{2}} \Big\} \Big] {p( v_i ) \over p( \om ^2 )} d{v_i} \non \\
	&= \int_{0}^{\infty } {1 \over ( \si _{1}^{2} \si _{2}^{2} )^{1 / 2} {\tilde{v}_i}^2} \exp \Big[ - {1 \over 2 \tilde{v}_i} \Big\{ {(1 - {\x _{i, 1}}^{\top } \bbe / \om )^2 \over \si _{1}^{2}} + {(1 - {\x _{i, 2}}^{\top } \bbe / \om )^2 \over \si _{2}^{2}} \Big\} \Big] {p( \om ^2 \tilde{v}_i ) \tilde{v}_i \over p( \om ^2 )} d{\tilde{v}_i} \text{.} \non 
\end{align*}
Since $(1 - {\x _{i, 1}}^{\top } \bbe / \om )^2 / \si _{1}^{2} \ge (1 / 2)^2 / \si _{1}^{2}$ and 
\begin{align*}
	{p( \om ^2 \tilde{v}_i ) \tilde{v}_i \over p( \om ^2 )} &= {(1 + \om ^2 ) \tilde{v}_i \over 1 + \om ^2 \tilde{v}_i} \Big\{ {1 + \log (1 + \om ^2 ) \over 1 + \log (1 + \om ^2 \tilde{v}_i )} \Big\} ^{1 + \ga } \le 2 \{ 1 + \log (1 + 1 / \tilde{v}_i ) \} ^{1 + \ga } ,
\end{align*}
as $\om \to \infty $, it follows from the dominated convergence theorem and the regularly varying $p(\cdot )$ that 
\begin{align*}
	{p( \y _i | \bbe , \bSi ) \over p( \om ^2 ) / (2 \pi )} &\to \int_{0}^{\infty } {1 \over ( \si _{1}^{2} \si _{2}^{2} )^{1 / 2} {\tilde{v}_i}^2} \exp \Big\{ - {1 \over 2 \tilde{v}_i} \Big( {1 \over \si _{1}^{2}} + {1 \over \si _{2}^{2}} \Big) \Big\} d{\tilde{v}_i} = 2 {( \si _{1}^{2} \si _{2}^{2} )^{1 / 2} \over \si _{1}^{2} + \si _{2}^{2}} , 
\end{align*}
as $\om \to \infty $. 
The limit in the right-hand side depends on $\bSi $ and is not a constant.

\section{MCMC} \label{sm:mcmc}

The MCMC algorithm in Section~4 of the main text is further sophisticated for the use in Section~5, as detailed in this section. Note that the notations used in this section differs from those in Section~4. 

\subsection{Model review}

As in the main text, we consider the reciprocal of the latent variable, defined as $t_{i,k} = (1 / \tilde{t} _{i, k} )^{z_{i, k}}$ for $\tilde{t} _{i, k} \in (- 1, 1)$ and $z_{i, k} \in \{ 0, 1 \} $. The joint distribution for $( \tilde{t} _{i, k} , z_{i, k} )$ is given by  
\begin{align*}
p( \tilde{t} _{i, k} , z_{i, k} ) &= \mathbbm{1}[ | \tilde{t} _{i, k} | < 1 ] {\ga \over 2} {1 \over | \tilde{t} _{i, k} |} {1 \over \{ 1 + \log (1 / | \tilde{t} _{i, k} |) \} ^{1 + \ga }} \times (1 - \phi )^{1 - z_{i, k} } {\phi }^{z_{i, k}} \text{.}
\end{align*}
With the conjugate priors $\phi \sim {\rm{Beta}} (a_0, b_0)$ and $(\bbe,\bSi) \sim p_0(\bbe,\bSi)$, the joint posterior distribution of $( \tilde{\t } , \z , \phi , \bbe , \bSi )$ given $\y _1 , \dots , \y _n$ is 
\begin{align*}
p( \tilde{\t } , \z , \phi , \bbe , \bSi &| \y _1 , \dots , \y _n ) \propto p_0 ( \bbe , \bSi ) \phi ^{a - 1} (1 - \phi )^{b - 1} \\
&\quad \times \prod_{i = 1}^{n} \Big( \Big[ \prod_{k = 1}^{p} \Big\{ \mathbbm{1}[ | \tilde{t} _{i, k} | < 1] {1 \over | \tilde{t} _{i, k} |} {1 \over (1 - \log | \tilde{t} _{i, k} |)^{1 + \ga }} (1 - \phi )^{1 - z_{i, k} } {\phi }^{z_{i, k}} \Big\} \Big] \\
&\quad \times {\prod_{k = 1}^{p} | \tilde{t} _{i, k} |^{z_{i, k}} \over | \bSi |^{1 / 2}} \exp \Big[ - {1 \over 2} \Big\{ {\y _i - \X _i \bbe \over (1 / \tilde{\t } _i )^{\z _i}} \Big\} ^{\top } \bSi ^{- 1} {\y _i - \X _i \bbe \over (1 / \tilde{\t } _i )^{\z _i}} \Big] \Big) \text{.} 
\end{align*}
From this expression, it is immediate that sampling $\bbe $ and $\bSi $ separately from their full conditional distributions is straightforward if $p_0 ( \bbe , \bSi )$ is the independent normal and inverse-Wishart prior. 
The full conditional of $\phi $ is ${\rm{Beta}} \big( \phi \big| \sum_{i = 1}^{n} \sum_{k = 1}^{p} z_{i, k} + a_0, \sum_{i = 1}^{n} \sum_{k = 1}^{p} (1-z_{i, k}) + b_0 \big) $. 

\subsection{The normal augmentation and Sampling of $\bth_i$}

Before applying the augmentation by normal distributions, we consider the singular value decomposition of the scaled precision matrix. Let $\psi _k = \sqrt{ (\bSi^{-1})_{k,k} }$, where $(\bSi^{-1})_{k,k}$ is the $k$-th diagonal element of matrix $\bSi^{-1}$. Then, define $\Q = \{ \diag (\bpsi ) \} ^{-1} \bSi ^{-1} \{ \diag (\bpsi ) \} ^{-1}$; this is the correlation matrix of $\bSi^{-1}$ in the sense that the diagonal elements of $\Q$ are all ones. Finally, for some orthonormal $\H$ and diagonal $\bLa$, we set $\Q = \H \bLa \H^{\top}$, where $\bLa = \diag (\lambda _1,\dots, \lambda _p)$ with $\lambda _1 \ge \cdots \ge \lambda _p > 0$. We also write $\bLa = \diag (\lambda _k)$.

Let $\ybt _i = \bpsi (\y _i - \X_i\bbe) / {\t_i}^{\z _i}$. Then, with $c>0$ such that $c^{-1}\I^{(p)} < \bLa^{-1}$, we have 
\begin{equation*}
\begin{split}
    {\rm N}_p( \y_i | \X_i\bbe , \T_i\bSi\T_i ) &= {\rm N}_p( \tilde{\y}_i | \bm{0} , \Q ^{-1} ) \prod _{k=1}^p {\psi _k \over |t_{i,k}|} \\ 
    &= \int {\rm N}_p( \tilde{\y}_i | \bth_i , c^{-1}\I^{(p)} ) {\rm N}_p( \bth_i | \bm{0} , \Q ^{-1} - c^{-1}\I^{(p)} ) d\bth_i \prod _{k=1}^p {\psi _k \over |t_{i,k}|} .
\end{split}
\end{equation*}
The exponential term in the right hand side has the quadratic form, 
\begin{equation*}
     \bth_i^{\top} [ (\Q^{-1} - c^{-1} \I) ^{-1} + c\I ] \bth_i = (\H ^{\top} \bth_i)^{\top} [ \diag \{ c^2 / (c - \lambda _k  ) \}  ] (\H ^{\top} \bth_i),
\end{equation*}
hence the full conditional is $c \bth_i \sim {\rm N}_p( \H \diag \{ c - \lambda _k \} \H^{\top} \tilde{\y}_i, \H \diag \{ c - \lambda _k \} \H^{\top} )$. %

In application, we set $c = c_0 + \lambda _1$ with $c_0 = 10 ^{-8} > 0$.

\subsection{Sampling of $\z_i$}

The full conditional of $\z_i$ is obtained after introducing $\bth_i$. Here, we additionally sample the signs of latent variables to accelerate the convergence. 

We decompose $\t _i$ (or $\tilde{\t}_i$) into its signs and absolute values. That is, we set $s_{i,k} = \sgn (t_{i,k})$ and $r_{i,k}=|t_{i,k}|$ so that we write $\t _i = (\s_i \r_i)^{\z_i}$. Conditional on $\r_i$, we sample $(\z_i,\s_i)$ jointly. Note that $\tilde{\y}_i$ is dependent on $\t_i$, hence on $(\z_i,\s_i)$. Since $(\z_1,\s_1),\dots ,(\z_n,\s_n)$ are mutually independent, we focus on the sampling of $(\z_i,\s_i)$. The full conditional is, by writing the $k$-th element of $\X_i\bbe$ by $\x_{i,k}^{\top}\bbe$,  
\begin{equation*}
    \prod _{k=1}^p \left( \frac{\phi}{ r_{i,k}(1-\phi) } \right)^{z_{i,k}} \exp \left\{ -\frac{c}{2} \left( \psi _k \frac{y_{i,k} - \x_{i,k}^{\top}\bbe }{ ( s_{i,k}r_{i,k} ) ^{z_{i,k}} } - \theta_{i,k} \right) ^2 \right\},
\end{equation*}
which shows the independence of $(z_{i,1},s_{i,1}), \dots, (z_{i,p},s_{i,p})$ and identifies the discrete distribution of $(z_{i,k},s_{i,k}) \in \{ 0,1\} \times \{ 1,-1\}$.

\subsection{Sampling of $\t_i$ by HMC with latent variables}

In sampling $\t_i$, we use its reciprocal $\tilde{\t}_i = 1/ ( \t_i^{\z_i} )$. Since $\z_i$ is conditioned, we sample those with $z_{i,k}=1$, whose support is $\tilde{t}_{i,k} \in (-1,1)$. For those with $z_{i,k}=0$, the likelihood is free from $t_{i,k}$, so we can generate $t_{i,k}$ from the prior, or the unfolded log-Pareto distribution, following the procedure given in the Appendix of the main text. 

We use the slice sampler and the HMC method as explained in the main text, but introduce another set of latent variables to avoid the numerical errors at the HMC step. The $|\Zc(i)|$-dimensional vector of latent variables is denoted by $\bze _i$, based on the normal density and its integral, 
\begin{equation*}
    \begin{split}
        1 = \int {\rm{N}}_{|\Zc(i)|} ( \bze _i | \de \tilde{\t}_{i,\Zc(i)}, \de \I ) d\bze _i.
    \end{split}
\end{equation*}
The full conditional of $\tilde{t}_{i,\Zc(i)}$ is computed conditional on $\bze_i$. As in the main text, the log-Pareto density of $\tilde{t}_{i,k}$ is augmented as
\begin{equation*}
\begin{split}
    \pi (\tilde{t}_{i,k}) &= \mathbbm{1}[ |\tilde{t}_{i,k}|<1 ] \frac{\gamma / 2}{|\tilde{t}_{i,k}|} \{ 1 - \log |\tilde{t}_{i,k}| \}^{-(1+\ga )} \\
    &= \mathbbm{1}[ |\tilde{t}_{i,k}|<1 ] \frac{\gamma / 2}{|\tilde{t}_{i,k}|} \int \mathbbm{1}[ 0 < u_{i,k} < \{ 1 - \log |\tilde{t}_{i,k}| \}^{-(1+\ga )} ] du_{i,k} \\ 
    &= \frac{\gamma / 2}{|\tilde{t}_{i,k}|} \int \mathbbm{1}[ \ \exp \{ 1-u_{i,k}^{ -1/(1+\ga )} \} < |\tilde{t}_{i,k}| < 1 \ ] du_{i,k} . 
\end{split}
\end{equation*}
The full conditional of $u_{i,k}$ is the uniform distribution on the interval $(0 , \{ 1 - \log |\tilde{t}_{i,k}| \}^{-(1+\ga )})$.

Given latent variables $\bze_i$ and $\u_i$, the full conditional density of $\tilde{t}_{i,\Zc(i)}$ is the product of the following three components: 
\begin{itemize}
    \item Likelihood: with $\bPsi _i = \diag (\y_i - \X_i\bbe) \bSi^{-1} \diag (\y_i - \X_i\bbe)$, 
    \begin{equation*}
    \begin{split}
        {\rm N}_p( \y_i&|\X_i\bbe,\T_i\bSi\T_i ) \propto \prod _{k \in \Zc(i)} |\tilde{t}_{i,k}|\\
        &\times \exp \left\{ -\frac{1}{2} \left[ \tilde{\t}_{i,\Zc(i)}^{\top} (\bPsi _i)_{\Zc(i),\Zc(i)} \tilde{\t}_{i,\Zc(i)} %
        + 2 \tilde{\t}_{i,\Zc(i)}^{\top} (\bPsi _i)_{\Zc(i),\overbar{\Zc}(i)} \bm{1}^{ ( |\overbar{\Zc}(i)| )} \right] \right\} . 
    \end{split}
    \end{equation*}
    
    \item Log-Pareto prior: 
    \begin{equation*}
        \prod _{k\in \Zc(i)} |\tilde{t}_{i,k}|^{-1}\mathbbm{1}[ \ \exp \{ 1-u_{i,k}^{ -1/(1+\ga )} \} < |\tilde{t}_{i,k}| < 1 \ ].
    \end{equation*}
    \item Contribution from the augmented density: 
    \begin{equation*}
        \exp \left\{ -\frac{1}{2} \left[ \tilde{\t}_{i,\Zc(i)}^{\top} (\delta \I) \tilde{\t}_{i,\Zc(i)} - 2 \tilde{\t}_{i,\Zc(i)}^{\top} %
        \bze_i \right] \right\}. %
    \end{equation*}
\end{itemize}
Combining them all, we have the full conditional as 
\begin{equation} \label{eq:tmg}
\begin{split}
    &\prod _{k\in \Zc(i)} \mathbbm{1}[ \ \exp \{ 1-u_{i,k}^{ -1/(1+\ga )} \} < |\tilde{t}_{i,k}| < 1 \ ] \\
    & \times \exp \left\{ -\frac{1}{2} \left[ \tilde{\t}_{i,\Zc(i)}^{\top} [ (\bPsi _i)_{\Zc(i),\Zc(i)} + \delta \I ]\tilde{\t}_{i,\Zc(i)} %
    + 2 \tilde{\t}_{i,\Zc(i)}^{\top} [ (\bPsi _i)_{\Zc(i),\overbar{\Zc}(i)} \bm{1}^{ ( |\overbar{\Zc}(i)| )} %
    - \bze_i ] \right] \right\},
\end{split}
\end{equation}
which is the multivariate truncated normal distribution. We set $\delta = 1$ in our implementation in Section~5.

\subsection{The MCMC algorithm}

Each iteration of the proposed MCMC algorithm is summarized as follows.  
\begin{itemize}
\item
Update $\z _i$ as follows. 
\begin{itemize}
\item 
For each $k = 1, \dots , p$, sample each of $\tilde{\bth}_i = (c_0+\lambda_1) \H ^{\top} \bth_i$ in parallel and independently from 
\begin{align*}
{\rm{N}} (  ( c_0 + \la _1 - \la _k )  ( \H ^{\top } \tilde{\y}_i )_k , c_0 + \la _1 - \la _k ) \text{.} 
\end{align*}
\item
Sample $( s_{i,k}, z_{i, k} ) \in \{ 1, - 1 \} \times \{ 0, 1 \} $ from the discrete distribution, whose probability function is given for each $( s_{i,k}, z_{i, k} )$ by
\begin{align}
&\Big( { \phi \over r_{i,k}( 1 - \phi ) } \Big) ^{z_{i, k}} \exp \Big( %
( \H \tilde{\bth } _i )_k \psi _k {y_{i, k} - {\x _{i, k}}^{\top } \bbe \over \{ s_{i,k} r_{i,k} \} ^{z_{i, k}}} - {1 \over 2} ( \la _1 + c_0 ) {\psi _k}^2 \Big[ {y_{i, k} - {\x _{i, k}}^{\top } \bbe \over \{ s_{i,k} r_{i,k} \} ^{z_{i, k}}} %
\Big] ^2 \Big) \non 
\end{align}
after normalization, where $r_{i,k} = | t_{i, k} |$. Then, set $t_{i,k}=s_{i,k}r_{i,k}$ with newly-sampled $s_{i,k}$. 
\end{itemize}

\item Sample $(\bbe,\bSi)$. If the prior $p_0(\bbe,\bSi)$ is the independent normal and inverse-Wishart distribution, then sample $\bbe$ from the normal distribution conditional on $\bSi$, then sample $\bSi$ from the inverse-Wishart distribution conditional on $\bbe$. 
\item Sample $\phi$ from ${\rm{Beta}} \big( \phi \big| \sum_{i = 1}^{n} \sum_{k = 1}^{p} z_{i, k} + a_0, \sum_{i = 1}^{n} \sum_{k = 1}^{p} (1-z_{i, k}) + b_0 \big) $. 

\item
Update $\tilde{\t } _i$  as follows.
\begin{itemize}
\item
For each $k = 1, \dots , p$, sample 
\begin{align}
u_{i, k} &\sim {\rm{Uniform}} \Big( 0, {1 \over (1 - \log | \tilde{t} _{i, k} |)^{1 + \ga }} \Big) \text{.} \non 
\end{align}
\item Sample $\tilde{\t } _{i, \overbar{\Zc }_i}$ from the prior by generating $|\overbar{\Zc}_i|$ independent draws from the unfolded log-Pareto distribution and taking their reciprocals. 
\item
Sample $\tilde{\t } _{i, \Zc _i}$ by using the HMC method of Pakman and Paninski (2014). 
The full conditional of $\tilde{\t}_{i,\Zc_i}$ is the truncated multivariate normal distribution whose density kernel given in (\ref{eq:tmg}). 
\end{itemize}
\end{itemize}

\section{Additional numerical results} \label{sm:numerical}

We report simulation results under multivariate linear regression with $p=5$, given in Section~5.2 in the main text. 
The mean squared errors (MSE) of $\bbe$ and $\bSi$ averaged over 200 Monte Carlo replications are given in Figure~\ref{fig:MLR-mse-supp}, and the coverage probability (CP) and interval score (IS) of $95\%$ credible intervals of $\bbe$ and $\bSi$ are shown in Figure~\ref{fig:MLR-cp-supp}.
Overall, the relative performance of the four methods are similar to those under $p=10$.

\begin{figure}[htbp!]
\centering
\includegraphics[width=\linewidth]{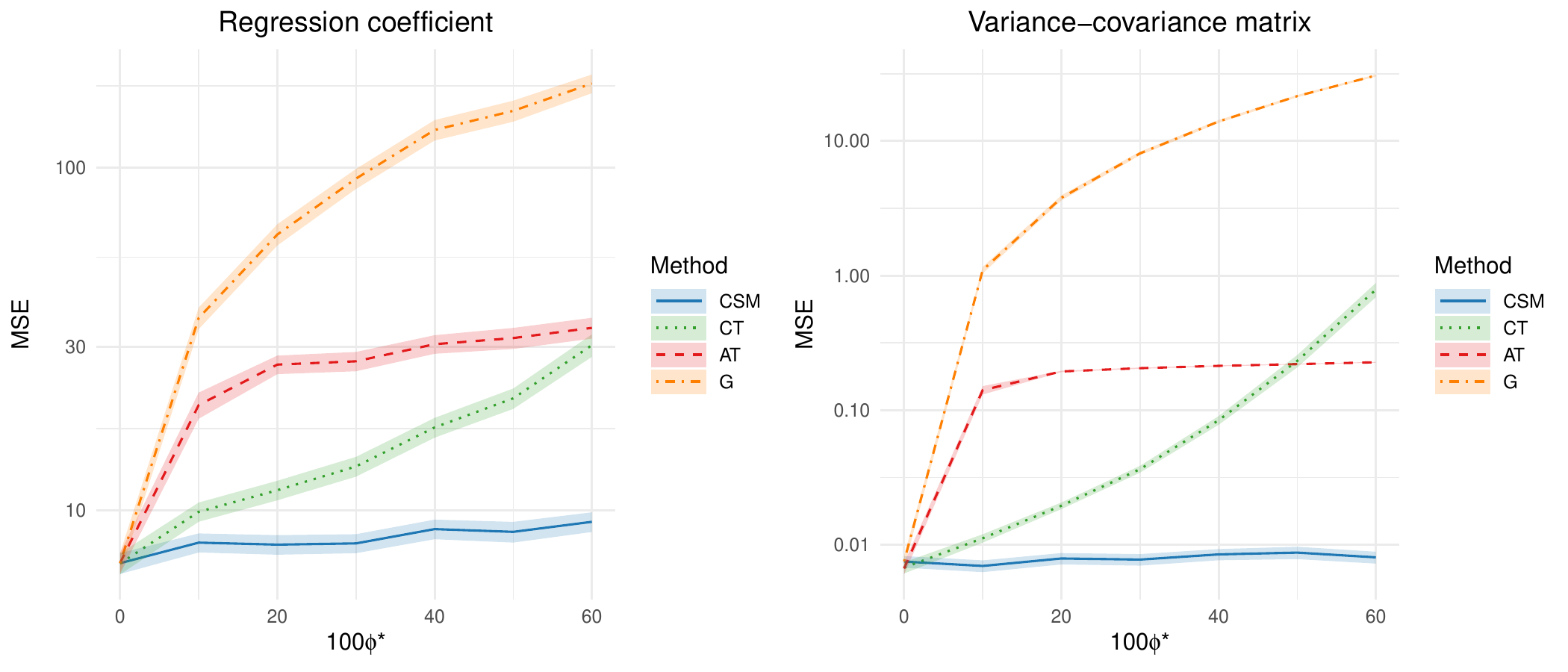} 
\caption{ Mean squared errors (MSE) of posterior means of regression coefficients $\bbe$ (left) and variance-covariance matrix $\bSi$ (right) in multivariate linear regression with $p=5$. The shaded region represents estimated Monte Carlo errors.
}
\label{fig:MLR-mse-supp}
\end{figure}

\begin{figure}[htbp!]
\centering
\includegraphics[width=\linewidth]{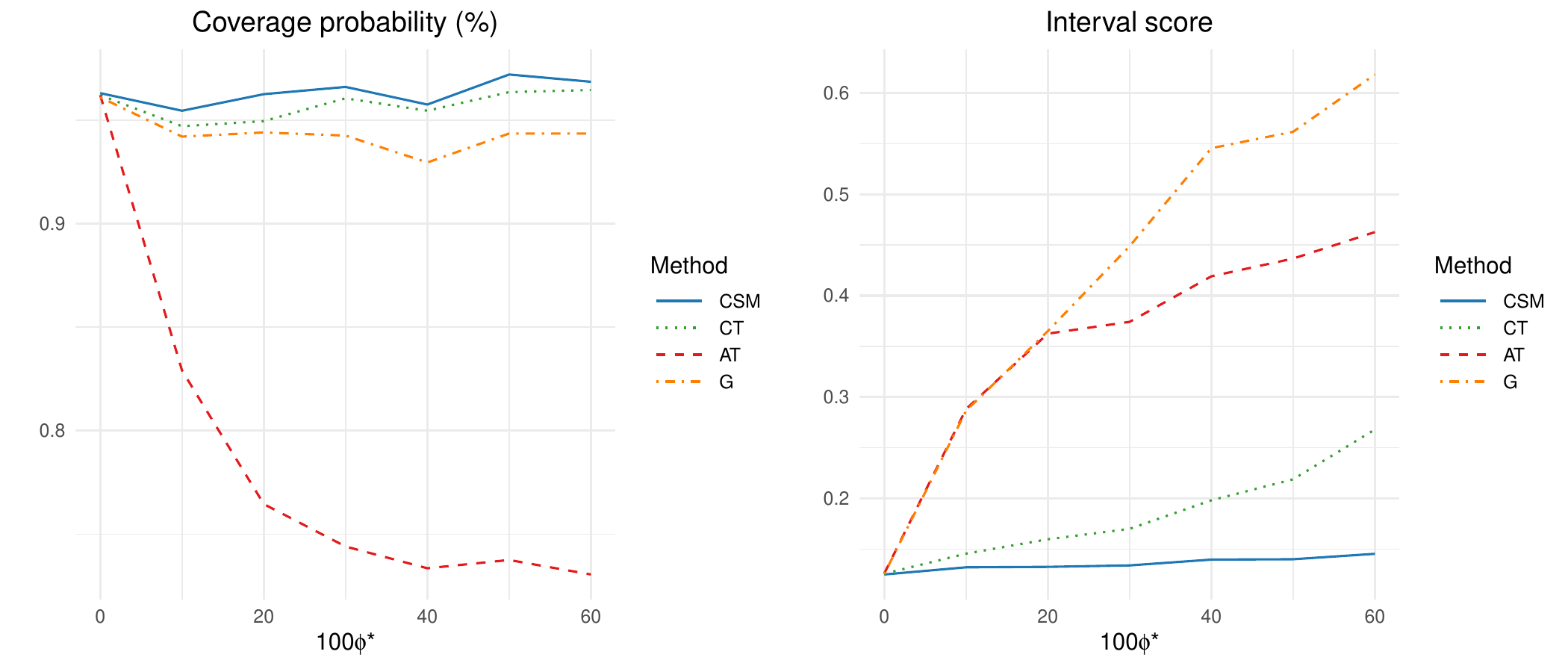}
\caption{ Coverage probability (CP) and interval score (IS) of $95\%$ credible intervals of $\bbe$ (left) and $\bSi$ (right) in multivariate linear regression with $p=5$.} 
\label{fig:MLR-cp-supp}
\end{figure}

\clearpage 

\section{Density functions of $\y_i$ and Figures 1 and 2} \label{sm:fig}

Figure~1 in the main text shows the contour plots of the joint density of $(y_{i1},y_{i2})$ with latent variables $(t_{i1},t_{i2})$ being marginalized out. This density function is written as the integral, 
\begin{equation*}
    \int \mathrm{N}(\y _i | \X_i\bbe, \T_i\bSi\T_i) \left[ \prod _{k=1}^p \pi (t_{i,k}) \right]  d\t_i.
\end{equation*}
We set $\bbe = \bm{0}$, $\Sigma _{11}=\Sigma _{22} = 1$ and $\Sigma _{12} = \Sigma _{21} = \rho$, so that the integral is simplified as 
\begin{equation*}
    \int\!\!\!\int \frac{1}{2\pi \sqrt{1-\rho^2}} \exp \left\{ -\frac{ 1 }{2(1-\rho^2)} \left[ \frac{y_{i,1}^2}{t_{i,1}^2} - 2 \rho \frac{y_{i,1}y_{i,2}}{t_{i,1}t_{i,2}} + \frac{y_{i,2}^2}{t_{i,2}^2} \right] \right\} \pi(t_{i,1}) \pi(t_{i,2}) dt_{i,1}dt_{i,2},
\end{equation*}
although the closed form is still difficult to obtain. The contour plots of the density function of this form can be found in  Choy et al. (2014) and Griffin and Hoff (2024) 
for a particular choice of $\pi(\cdot )$. As in Griffin and Hoff (2024), 
we approximated the integral above by the Monte Carlo method to draw Figure~1. 
To be precise, we generated $S$ random variables from $\pi (\cdot )$ independently (denoted by $t_{i,k}^{(s)}$ for $s=1,\dots ,S$) and approximated the integral by 
\begin{equation*}
    \frac{1}{S} \sum _{s=1}^S \frac{1}{2\pi \sqrt{1-\rho^2}} \exp \left\{ -\frac{ 1 }{2(1-\rho^2)} \left[ \frac{y_{i,1}^2}{( t_{i,1} ^{(s)})^2} - 2 \rho \frac{y_{i,1}y_{i,2}}{t_{i,1} ^{(s)}t_{i,2} ^{(s)}} + \frac{y_{i,2}^2}{(t_{i,2} ^{(s)})^2} \right] \right\} ,
\end{equation*}
for each value of $(y_{i1},y_{i2})$. We set the Monte Carlo size to $S=100000$. The hyperparameters used in Figure~1 are: $\rho = 0.9$, $\phi = 0.25$ and $\ga = 1$. For comparison, we also considered the shifted, unfolded inverse $\chi _{\nu}$ distribution with degree-of-freedom $\nu = 1$ as $\pi (\cdot )$, as detailed below. 

Figure~2 displays the density and distribution functions of the unfolded shifted log-Pareto distribution given in equation~(3) of the main text. Note that the distribution function of this probability distribution is analytically obtained as: for $t_0>1$,
\begin{equation*}
    \mathbb{P}[t_{i,k}\le -t_0] = \frac{1}{2} \{ 1+\log t_0 \}^{-\ga} ,
\end{equation*}
and $\mathbb{P}[t_{i,k}\le t_0] = 1 - \mathbb{P}[t_{i,k}\le -t_0]$. 
These functions are compared with those of the polynomial-tailed distribution. We chose the $\chi_{\nu}^2$ distribution for $t_{i,k}^{-2}$ with degree-of-freedom $\nu$ used in Choy et al. (2014), 
but shifted its location to exclude interval $(-1,1)$ from the support for a fair comparison. Its density function is obtained by the change of variable as  
\begin{equation*}
    \pi (t_{i,k}) = \frac{1}{\Gamma (\nu/2) (\nu /2)^{-\nu/2}} (|t_{i,k}|-1)^{-\nu -1} \exp \left\{ -\frac{\nu}{2 (|t_{i,k}|-1)^2} \right\} \mathbbm{1}[ |t_{i,k}| > 1].
\end{equation*}
Note that $(|t_{i,k}|-1)^{-2}$ is $\chi ^2_\nu$-distributed. The distribution function can be computed by using those of the inverse $\chi^2_{\nu}$ (or inverse gamma) distributions.

\end{document}